%%% "Explicit multipeakon solutions of Novikov's cubically nonlinear
%%% integrable Camassa--Holm type equation"
%%% by Andy Hone, Hans Lundmark, Jacek Szmigielski

%%% Since pstricks ('pst-node') is used, pdflatex won't work.
%%% To make pdf, use latex & dvips & ps2pdf.
%%%
%%% latex novikov_peakons
%%% ( bibtex novikov_peakons  & latex again )
%%% dvips -o novikov_peakons.ps novikov_peakons.dvi
%%% ps2pdf novikov_peakons.ps

% TODO: Cite Vladimir Novikov's preprint (when it appears...)

\documentclass[10pt]{article}

\usepackage[T1]{fontenc}
\usepackage{pst-node}

%\usepackage{showkeys}
%\linespread{1.5}

\usepackage{amsmath}
\usepackage{amsthm}
\usepackage{amssymb}
\numberwithin{equation}{section}

\theoremstyle{plain}
\newtheorem{proposition}{Proposition}[section]
\newtheorem{corollary}[proposition]{Corollary}
\newtheorem{lemma}[proposition]{Lemma}
\newtheorem{theorem}[proposition]{Theorem}

\theoremstyle{definition}
\newtheorem{definition}[proposition]{Definition}
\newtheorem{example}[proposition]{Example}
\newtheorem{remark}[proposition]{Remark}

\newcommand{\ds}{\displaystyle}
\newcommand{\R}{\mathbf{R}}
\newcommand{\smallfrac}[2]{{\textstyle\frac{#1}{#2}}}

\newcommand{\abs}[1]{\left\lvert #1 \right\rvert}
\newcommand{\avg}[1]{\bigl\langle #1 \bigr\rangle}
\newcommand{\jump}[1]{\bigl[ #1 \bigr]}
\newcommand{\poisson}[2]{\left\{ #1,#2 \right\}}
\newcommand{\detD}[2]{\mathcal{D}^{(#1)}_{#2}}
\newcommand{\detDprime}[1]{\mathcal{D}'_{#1}}
\newcommand{\puresector}{\mathcal{P}}
\newcommand{\piecewiseclass}{PC^{\infty}}
\newcommand{\spaceDprime}{\mathcal{D}'(\R)}

\DeclareMathOperator{\sgn}{sgn}

\DeclareMathOperator{\diag}{diag}

\hyphenation{self-adjoint anti-peakon anti-peakons}

\begin{document}

\title{Explicit multipeakon solutions of Novikov's cubically nonlinear integrable Camassa--Holm type equation}
\author{%
  Andrew N. W. Hone\thanks{Institute of Mathematics, Statistics \& Actuarial Science, University of Kent, Canterbury CT2 7NF, United Kingdom; anwh@kent.ac.uk}
  \and
  Hans Lundmark\thanks{Department of Mathematics, Link{\"o}ping University, SE-581 83 Link{\"o}ping, Sweden; halun@mai.liu.se}
  \and
  Jacek Szmigielski\thanks{Department of Mathematics and Statistics, University of Saskatchewan, 106 Wiggins Road, Saskatoon, Saskatchewan, S7N 5E6, Canada; szmigiel@math.usask.ca}
}

\date{March 20, 2009}
%\date{\today}

\maketitle

\begin{abstract}
  Recently Vladimir Novikov found a new integrable analogue of the
  Camassa--Holm equation, admitting peaked soliton (\emph{peakon})
  solutions, which has nonlinear terms that are cubic, rather than
  quadratic. In this paper, the explicit formulas for multipeakon
  solutions of Novikov's cubically nonlinear equation are calculated,
  using the matrix Lax pair found by Hone and Wang. By a
  transformation of Liouville type, the associated spectral problem is
  related to a cubic string equation, which is dual to the cubic
  string that was previously found in the work of Lundmark and
  Szmigielski on the multipeakons of the Degasperis--Procesi equation.
\end{abstract}

\section{Introduction}

Integrable PDEs with nonsmooth solutions have attracted much attention
in recent years, since the discovery of the Camassa--Holm shallow water
wave equation and its peak-shaped soliton solutions called \emph{peakons}
\cite{camassa-holm}.
Our purpose in this paper is to explicitly compute
the multipeakon solutions of a new integrable PDE,
equation \eqref{eq:novikov} below,
which is of the Camassa--Holm form $u_t - u_{xxt} = F(u,u_x,u_{xx},\dots)$,
but has cubically nonlinear terms instead of quadratic.
This equation was found by Vladimir Novikov, and published in a recent
paper by Hone and Wang \cite{hone-wang-cubic-nonlinearity}.

We will apply inverse spectral methods. The spatial equation in the
Lax pair for Novikov's equation turns out to be equivalent to what we
call the \emph{dual cubic string}, a spectral problem closely related
to the \emph{cubic string} that was used for finding the multipeakon
solutions to the Degasperis--Procesi equation
\cite{lundmark-szmigielski-DPshort,lundmark-szmigielski-DPlong,kohlenberg-lundmark-szmigielski}.
Once this relation is established, the Novikov peakon solution can be
derived in a straightforward way using the results obtained in
\cite{kohlenberg-lundmark-szmigielski}.
The constants of motion have a more complicated structure than in the
Camassa--Holm and Degasperis--Procesi cases, and the study
of this gives as an interesting by-product
a combinatorial identity concerning the sum of all minors in a symmetric matrix,
which we have dubbed the \emph{Canada Day Theorem}
(Theorem~\ref{thm:CanadaDay}, proved in Appendix~\ref{sec:combinatorics}).

The peakon problem for Novikov's equation presents in addition one
important challenge. Unlike its Camassa--Holm or Degasperis--Procesi
counterparts, the Lax pair for the Novikov equation is originally
ill-defined in the peakon sector. The problem is caused by terms which
involve multiplication of a singular measure by a discontinuous
function. We prove in Appendix~\ref{sec:multiplication} that there
exists a regularization of the Lax pair which preserves integrability
of the peakon sector, thus allowing us to use spectral and inverse
spectral methods to obtain the multipeakon solutions to the Novikov
equation. This regularization problem has a subtle but
nevertheless real impact on the formulas.
In general, the use of Lax pairs to construct distributional solutions
to nonlinear equations which are Lax integrable in the smooth sector
but may not be so in the whole non-smooth sector is relatively
uncharted territory, and the case of Novikov's equation may provide
some relevant insight in this regard.

\section{Background}

The main example of a PDE admitting peaked solitons is the family
\begin{equation}
  \label{eq:b-family}
  u_t - u_{xxt} + (b+1) u u_x = b u_x u_{xx} + u u_{xxx},
\end{equation}
often written as
\begin{equation}
  \label{eq:b-family-m}
  m_t + m_x u + b m u_x = 0, \qquad m = u - u_{xx},
\end{equation}
which was introduced by Degasperis, Holm and Hone \cite{degasperis-holm-hone},
and is Hamiltonian for all values of $b$ \cite{holm-hone}.
It includes the Camassa--Holm equation as the case $b=2$,
and another integrable PDE called the
Degasperis--Procesi equation \cite{degasperis-procesi,degasperis-holm-hone}
as the case $b=3$.
These are the only values of $b$ for which the equation is integrable,
according to a variety of integrability tests
\cite{degasperis-procesi,mikhailov-novikov-perturbative,hone-wang-prolongation-algebras,ivanov-integrability-test}.
(However, we note that the case $b=0$ is excluded from the aforementioned
integrability tests; yet this case provides a regularization
of the inviscid Burgers equation that is Hamiltonian and has classical
solutions globally in time \cite{bhat-bzero}.)
\emph{Multipeakons} are weak solutions of the form
\begin{equation}
  \label{eq:peakon-ansatz}
  u(x,t) = \sum_{i=1}^n m_i(t) \, e^{-\abs{x-x_i(t)}},
\end{equation}
formed through superposition of $n$ \emph{peakons}
(peaked solitons of the shape $e^{-\abs{x}}$).
This ansatz satisfies the PDE \eqref{eq:b-family-m} if and only if the
positions $(x_1,\dots,x_n)$ and momenta $(m_1,\dots,m_n)$ of the
peakons obey the following system of $2n$ ODEs:
\begin{equation}
  \label{eq:b-peakon-ode}
  \dot x_k = \sum_{i=1}^n m_i \, e^{-\abs{x_k-x_i}},
  \qquad
  \dot m_k = (b-1) \, m_k \sum_{i=1}^n m_i \, \sgn(x_k-x_i) \, e^{-\abs{x_k-x_i}}.
\end{equation}
Here, $\sgn x$ denotes the signum function, which is $+1$, $-1$ or $0$
depending on whether $x$ is positive, negative or zero.
In shorthand notation,
with $\avg{f(x)}$ denoting the average of the left and right limits,
\begin{equation}
  \label{eq:average-notation}
  \avg{f(x)} = \frac12 \bigl( f(x^-)+f(x^+) \bigr),
\end{equation}
the ODEs can be written as
\begin{equation}
  \label{eq:b-ode-short}
  \dot x_k = u(x_k),
  \qquad
  \dot m_k = -(b-1) \, m_k \, \avg{u_x(x_k)}.
\end{equation}
In the Camassa--Holm case $b=2$, this is a canonical Hamiltonian system
generated by $h=\frac12 \sum_{j,k=1}^n m_j \, m_k \, e^{-\abs{x_j-x_k}}$.
Explicit formulas for the $n$-peakon solution of the Camassa--Holm equation
were derived by Beals, Sattinger and Szmigielski \cite{beals-sattinger-szmigielski-stieltjes,beals-sattinger-szmigielski-moment}
using inverse spectral methods,
and the same thing for the Degasperis--Procesi equation
was accomplished by Lundmark and Szmigielski
\cite{lundmark-szmigielski-DPshort,lundmark-szmigielski-DPlong}.

It requires some care to specify the exact sense in which the peakon
solutions satisfy the PDE.
The formulation \eqref{eq:b-family-m} suffers from the problem that
the product $mu_x$ is ill-defined in the peakon case, since the quantity
$m=u-u_{xx}=2 \sum_{i=1}^n m_i \, \delta_{x_i}$ is a discrete measure,
and it is multiplied by a function $u_x$ which has jump discontinuities
exactly at the points $x_k$ where the Dirac deltas in the measure $m$
are situated.
To avoid this problem, one can instead rewrite \eqref{eq:b-family} as
\begin{equation}
  \label{eq:b-family-alt}
  (1-\partial_x^2) u_t
  + (b+1-\partial_x^2) \, \partial_x \left( \smallfrac12 \, u^2 \right)
  + \partial_x \left( \smallfrac{3-b}{2} \, u_x^2 \right)
  = 0.
\end{equation}
Then a function $u(x,t)$ is said to be a solution if
\begin{itemize}
\item
  $u(\cdot,t) \in W^{1,2}_{\mathrm{loc}}(\R)$ for each fixed~$t$,
  which means that $u(\cdot,t)^2$ and $u_x(\cdot,t)^2$ are locally integrable
  functions, and therefore define distributions of class $\spaceDprime$
  (i.e., continuous linear functionals acting on compactly supported
  $C^{\infty}$ test functions on the real line~$\R$),
\item
  the time derivative $u_t(\cdot,t)$, defined as the limit
  of a difference quotient, exists as a distribution in
  $\spaceDprime$
  for all~$t$,
\item
  equation \eqref{eq:b-family-alt},
  with $\partial_x$ taken to mean the usual distributional derivative,
  is satisfied for all~$t$ in the sense of distributions in
  $\spaceDprime$.
\end{itemize}
It is worth mentioning that functions in the space
$W^{1,2}_{\mathrm{loc}}(\R)$
are continuous, by the Sobolev embedding theorem.
However, the term $u_x^2$ is absent from equation \eqref{eq:b-family-alt}
if $b=3$, so in that particular case one requires only that
$u(\cdot,t) \in L^2_{\mathrm{loc}}(\R)$; this means that the
Degasperis--Procesi can admit solutions $u$ that are not continuous
\cite{coclite-karlsen-DPwellposedness,coclite-karlsen-DPuniqueness,lundmark-shockpeakons}.

It is often appropriate to rewrite equation \eqref{eq:b-family-alt} as
a nonlocal evolution equation for~$u$ by inverting the operator
$(1-\partial_x^2)$, as was done in \cite{constantin-escher,constantin-mckean}
for the Camassa--Holm equation. However, the distributional
formulation used here is very convenient when working with peakon
solution.

\section{Novikov's equation}
\label{sec:novikovs-equation}

The new integrable equation found by Vladimir Novikov is
\begin{equation}
  \label{eq:novikov}
  u_t - u_{xxt} + 4 u^2 u_x = 3 u u_x u_{xx} + u^2 u_{xxx},
\end{equation}
which can be written as
\begin{equation}
  \label{eq:novikov-system}
  m_t + (m_x u + 3 m u_x) \, u = 0, \qquad m = u - u_{xx},
\end{equation}
to highlight the similarity in form to the Degasperis--Procesi equation,
or as
\begin{equation}
  \label{eq:novikov-alt}
  (1-\partial_x^2) u_t
  + (4-\partial_x^2) \, \partial_x \left( \smallfrac13 \, u^3 \right)
  + \partial_x \left( \smallfrac{3}{2} \, u u_x^2 \right)
  + \smallfrac12 \, u_x^3
  = 0
\end{equation}
in order to rigorously define weak solutions as above,
except that here one requires that
$u(\cdot,t) \in W^{1,3}_{\mathrm{loc}}(\R)$ for all~$t$,
so that $u^3$ and $u_x^3$ are locally integrable and therefore define
distributions in $\spaceDprime$;
it then follows from H\"older's inequality with the conjugate indices
$3$ and $3/2$ that $u u_x^2$ is locally integrable as well,
and \eqref{eq:novikov-alt} can thus be interpreted as a distributional
equation.
Since functions in $W^{1,p}_{\mathrm{loc}}(\R)$ with $p\geq 1$ are
automatically continuous, Novikov's equation
is similar to the Camassa--Holm equation in that it
only admits continuous distributional solutions
(as opposed to the Degasperis--Procesi equation,
which has discontinuous solutions as well).

Like the equations in the $b$-family \eqref{eq:b-family},
Novikov's equation admits (in the weak sense just defined)
multipeakon solutions of the form \eqref{eq:peakon-ansatz},
but in this case the ODEs for the positions and momenta are
\begin{equation}
  \label{eq:novikov-ode}
  \begin{split}
    \dot x_k &= u(x_k)^2
    = \left( \sum_{i=1}^n m_i \, e^{-\abs{x_k-x_i}} \right)^2,
    \\
    \dot m_k &= - m_k \, u(x_k) \, \avg{u_x(x_k)} \\
    &= m_k \, \left( \sum_{i=1}^n m_i \, e^{-\abs{x_k-x_i}} \right)
    \left( \sum_{j=1}^n m_j \, \sgn(x_k-x_j) \, e^{-\abs{x_k-x_j}} \right).
  \end{split}
\end{equation}
These equations were stated in \cite{hone-wang-cubic-nonlinearity},
where it was also shown that they constitute a Hamiltonian system
$\dot x_k = \poisson{x_k}{h}$,
$\dot m_k = \poisson{m_k}{h}$,
generated by the same Hamiltonian
$h = \frac12 \sum_{j,k=1}^n m_j m_k \, e^{-\abs{x_j-x_k}}$
as the Camassa--Holm peakons,
but with respect to a different, non-canonical, Poisson structure given by
\begin{equation}
  \label{eq:poisson-structure}
  \begin{split}
    \poisson{x_j}{x_k} &= \sgn(x_j-x_k) \, \bigl( 1 - E_{jk}^2 \bigr),\\
    \poisson{x_j}{m_k} &= m_k E_{jk}^2,\\
    \poisson{m_j}{m_k} &= \sgn(x_j-x_k) \, m_j m_k E_{jk}^2,
    \qquad\text{where $E_{jk} = e^{-\abs{x_j-x_k}}$.}
  \end{split}
\end{equation}
As will be shown below,
\eqref{eq:novikov-ode} is a Liouville integrable system
(Theorem~\ref{thm:constants-of-motion-commute});
in fact, it is even explicitly solvable in terms of elementary functions
(Theorem~\ref{thm:explicit_peakons}).

\section{Forward spectral problem}
\label{sec:forward}

In order to integrate the Novikov peakon ODEs,
we are going to make use of the matrix Lax pair
found by Hone and Wang \cite{hone-wang-cubic-nonlinearity},
specified by the following matrix linear system:
\begin{equation}
  \label{eq:lax-x}
    \frac{\partial}{\partial x}
    \begin{pmatrix} \psi_1 \\ \psi_2 \\ \psi_3 \end{pmatrix} =
    \begin{pmatrix}
      0 & zm & 1 \\
      0 & 0 & zm \\
      1 & 0 & 0
    \end{pmatrix}
    \begin{pmatrix} \psi_1 \\ \psi_2 \\ \psi_3 \end{pmatrix},
\end{equation}
\begin{equation}
  \label{eq:lax-t}
    \frac{\partial}{\partial t}
    \begin{pmatrix} \psi_1 \\ \psi_2 \\ \psi_3 \end{pmatrix} =
    \begin{pmatrix}
      -uu_x & u_x z^{-1}-u^2mz & u_x^2 \\
      u z^{-1} & -z^{-2} & -u_x z^{-1}-u^2mz \\
      -u^2 & u z^{-1} & uu_x
    \end{pmatrix}
    \begin{pmatrix} \psi_1 \\ \psi_2 \\ \psi_3 \end{pmatrix}.
\end{equation}
(Compared with reference \cite{hone-wang-cubic-nonlinearity}
we have added a constant
multiple of the identity to the matrix on the right hand side
of \eqref{eq:lax-t}, and used $z$ in place of $\lambda$.)
In the peakon case, when $u = \sum_{i=1}^n m_i \, e^{-\abs{x-x_i}}$,
the quantity $m=u-u_{xx}=2 \sum_{i=1}^n m_i \, \delta_{x_i}$
is a discrete measure.
We assume that $x_1 < x_2 < \dots < x_n$ (which at least remains true
for a while if it is true for $t=0$). These points divide the $x$ axis into
$n+1$ intervals which we number from $0$ to~$n$, so that the $k$th interval
runs from $x_k$ to $x_{k+1}$, with the convention that
$x_0=-\infty$ and $x_{n+1}=+\infty$.
Since $m$ vanishes between the point masses,
equation \eqref{eq:lax-x} reduces to
$\partial_x \psi_1=\psi_3$, $\partial_x \psi_2=0$ and $\partial_x \psi_3=\psi_1$
in each interval, so that in the $k$th interval we have
\begin{equation}
  \label{piecewise-x}
  \begin{pmatrix} \psi_1 \\ \psi_2 \\ \psi_3 \end{pmatrix} =
  \begin{pmatrix}
    A_k \, e^x + z^2 \, C_k \, e^{-x} \\ 2z \, B_k \\ A_k \, e^x - z^2 \, C_k \, e^{-x}
  \end{pmatrix}
  \qquad\text{for $x_k < x < x_{k+1}$},
\end{equation}
where the factors containing $z$ have been inserted for later convenience.
These piecewise solutions are then glued together at the points $x_k$.
The proper interpretation of \eqref{eq:lax-x} at these points turns
out to be
that $\psi_3$ must be continuous, while $\psi_1$ and
$\psi_2$ are allowed to have jump discontinuities; moreover, in the term
$zm\psi_2$, one should take $\psi_2(x) \delta_{x_k}$ to mean
$\avg{\psi_2(x_k)} \delta_{x_k}$. This point is fully explained in Appendix~\ref{sec:multiplication}.  This leads to
\begin{equation}
  \label{eq:jump-matrix}
  \begin{split}
    \begin{pmatrix} A_k \\ B_k \\ C_k \end{pmatrix} &=
    \begin{pmatrix}
      1 - \lambda \, m_k^2 & -2\lambda \, m_k \, e^{-x_k}  & -\lambda^2 \, m_k^2 \, e^{-2x_k} \\
      m_k \, e^{x_k} & 1 & \lambda \, m_k \, e^{-x_k} \\
      m_k^2 \, e^{2x_k} & 2 \, m_k \, e^{x_k} & 1 + \lambda m_k^2
    \end{pmatrix}
    \begin{pmatrix} A_{k-1} \\ B_{k-1} \\ C_{k-1} \end{pmatrix}
    \\
    &=: S_k(\lambda)
    \begin{pmatrix} A_{k-1} \\ B_{k-1} \\ C_{k-1} \end{pmatrix},
    \qquad\text{where $\lambda=-z^2$}.
  \end{split}
\end{equation}
We impose the boundary condition $(A_0,B_0,C_0)=(1,0,0)$,
which is consistent with the
time evolution given by \eqref{eq:lax-t} for $x<x_1$.
Then all $(A_k,B_k,C_k)$ are determined by successive application
of the jump matrices $S_k(\lambda)$ as in \eqref{eq:jump-matrix}.
For $x>x_n$, equation \eqref{eq:lax-t} implies that
$(A,B,C):=(A_n,B_n,C_n)$
evolves as
\begin{equation}
  \label{eq:ABCevolution}
  \dot A = 0,\quad
  \dot B = \frac{B-A M_+}{\lambda},\quad
  \dot C = \frac{2 M_+ \, (B - A M_+)}{\lambda},
\end{equation}
where $M_+ = \sum_{k=1}^N m_k \, e^{x_k}$.
Thus $A$ is invariant. It is the $(1,1)$ entry of the total jump matrix
\begin{equation}
  \label{eq:S}
  S(\lambda)=S_n(\lambda) \dots S_2(\lambda) S_1(\lambda),
\end{equation}
and therefore it is a polynomial in $\lambda$ of degree~$n$,
\begin{equation}
  \label{eq:A}
  A(\lambda) = \sum_{k=0}^n H_k (-\lambda)^k
  =\left( 1 - \frac{\lambda}{\lambda_1} \right) \dots \left( 1 - \frac{\lambda}{\lambda_n} \right),
\end{equation}
where $H_0=1$ (since $S(0)=I$, the identity matrix),
and where the other coefficients $H_1,\dots,H_n$ are
Poisson commuting constants of motion
(see Theorems~\ref{thm:constants-of-motion}
and~\ref{thm:constants-of-motion-commute} below).

The first linear equation \eqref{eq:lax-x}, together with
the boundary conditions expressed by the requirements that
$B_0=C_0=0$ and $A_n(\lambda)=0$,
is a spectral problem which has the zeros
$\lambda_1,\dots,\lambda_n$ of $A(\lambda)$
as its eigen\-values.
(To be precise, one should perhaps say that it is the corresponding
values of $z=\pm\sqrt{-\lambda}$ that are the eigenvalues,
but we will soon show that the $\lambda_k$ are positive,
at least in the pure peakon case,
and therefore more convenient to work with than the purely imaginary
values of~$z$; see \eqref{eq:ordering-lambda} below.)

Elimination of $\psi_1$ from \eqref{eq:lax-x} gives
$\partial_x \psi_2 = zm \psi_3$ and
$(\partial_x^2-1)\,\psi_3 = zm\psi_2$,
and the boundary conditions above imply that
$(\psi_2,\psi_3) \to (0,0)$ as $x\to-\infty$
and $\psi_3 \to 0$ as $x\to+\infty$.
Using the Green's function $-e^{-\abs{x}}/2$ for the operator $\partial_x^2-1$
with vanishing boundary conditions, we can rephrase the problem as a system
of integral equations,
\begin{equation}
  \label{eq:integral-eqns}
  \begin{split}
    \psi_2(x) &= z \int_{-\infty}^x \psi_3(y) \, dm(y),\\
    \psi_3(x) &= -z \int_{-\infty}^{\infty} \frac12 \, e^{-\abs{x-y}} \psi_2(y) \, dm(y),\\
  \end{split}
\end{equation}
with integrals taken with respect to the discrete measure
$m=2\sum_{i=1}^n m_i \, \delta_{x_i}$.
Here, there is again the problem of Dirac deltas multiplying a function
$\psi_2$ with jump discontinuities, and we take
$\psi_2(x) \delta_{x_k}$ to mean the average $\avg{\psi_2(x_k)} \delta_{x_k}$,
in full agreement with the earlier definition of the singular term
appearing in the spectral problem.
Then
\begin{equation}
  \label{eq:integral-eqns-avg}
  \begin{split}
    \avg{\psi_2(x_j)} &= z \left( 2 \sum_{k=1}^{j-1} \psi_3(x_k)\,m_k + \psi_3(x_j)\,m_j \right),\\
    \psi_3(x_j) &= -z \sum_{k=1}^n e^{-\abs{x_j-x_k}} \avg{\psi_2(x_k)} \, m_k,\\
  \end{split}
\end{equation}
which can be written in block matrix notation as
\begin{equation}
  \label{eq:block-matrix}
  \begin{pmatrix} \avg{\Psi_2} \\ \Psi_3 \end{pmatrix}
  = z \begin{pmatrix} 0 & TP \\ -EP & 0 \end{pmatrix}
  \begin{pmatrix} \avg{\Psi_2} \\ \Psi_3 \end{pmatrix},
\end{equation}
where
\begin{equation}
  \label{eq:matrix-notation}
  \begin{split}
    \Psi_3 &= \bigl(\psi_3(x_1),\dots,\psi_3(x_n) \bigr)^t,\\
    \avg{\Psi_2} &= \bigl(\avg{\psi_2(x_1)},\dots,\avg{\psi_2(x_n)} \bigr)^t,\\
    P &= \diag(m_1,\dots,m_n),\\
    E &= (E_{jk})_{j,k=1}^n, \qquad\text{where $E_{jk}=e^{-\abs{x_j-x_k}}$},\\
    T &= (T_{jk})_{j,k=1}^n, \qquad\text{where $T_{jk}=1+\sgn(j-k)$}.\\
  \end{split}
\end{equation}
(In words, $T$ is the lower triangular $n\times n$ matrix that has $1$ on
the main diagonal and $2$ everywhere below it.)
In terms of $\avg{\Psi_2}$ alone, we have
\begin{equation}
  \label{eq:TPEP-appears-here}
  \avg{\Psi_2} = -z^2 TPEP \avg{\Psi_2},
\end{equation}
so the eigenvalues are given by
$0=\det(I + z^2 TPEP)=\det(I - \lambda TPEP)$,
where of course $I$ denotes the $n\times n$ identity matrix.
Since the eigenvalues are the zeros of $A(\lambda)$,
and since $A(0)=1$, it follows that
\begin{equation}
  \label{eq:Adet}
  A(\lambda) = \det(I - \lambda TPEP).
\end{equation}
This gives us a fairly concrete representation of the constants of motion
$H_k$, which by definition are the coefficients of~$A(\lambda)$
(see \eqref{eq:A}),
and it can be made even more explicit thanks to the curious combinatorial
result in Theorem~\ref{thm:CanadaDay}.
We remind the reader that a $k\times k$ \emph{minor} of an
$n\times n$ matrix $X$ is,
by definition, the
determinant of a submatrix $X_{IJ} = (X_{ij})_{i \in I, \, J \in J}$
whose rows and columns are selected among those of $X$ by two index
sets $I, J \subseteq \{ 1,\dots,n \}$ with $k$ elements each, and a
\emph{principal minor} is one for which $I=J$.
Compare the result of the theorem with the well-known fact that the
coefficient of $s^k$ in $\det(I+sX)$ equals the sum of all
\emph{principal} $k\times k$ minors of~$X$, regardless of whether $X$
is symmetric or not.

\begin{theorem}[``The Canada Day Theorem'']
  \label{thm:CanadaDay}
  Let the matrix $T$ be defined as in \eqref{eq:matrix-notation} above.
  Then, for any symmetric $n\times n$ matrix~$X$,
  the coefficient of $s^k$ in the polynomial $\det(I+s\,TX)$
  equals the sum of all $k\times k$ minors (principal and non-principal)
  of~$X$.
\end{theorem}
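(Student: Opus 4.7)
My plan is to reduce the theorem, via the Cauchy--Binet formula and the symmetry of~$X$, to a polynomial identity about the minors of~$T$, and then to exploit the decomposition $T=\mathbf{J}+S$ (with $\mathbf{J}$ the all-ones matrix and $S$ the skew-symmetric matrix with $S_{ij}=\sgn(i-j)$) to split the problem by the parity of~$k$.

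The coefficient of~$s^k$ in $\det(I+sTX)$ equals the sum of principal $k\times k$ minors of~$TX$, which by Cauchy--Binet equals $\sum_{|I|=|K|=k}\det(T_{IK})\det(X_{KI})$. Symmetry of~$X$ gives $\det(X_{KI})=\det(X_{IK})$, and after symmetrising over the ordered pair~$(I,K)$ the desired equality with the right-hand side $\sum_{|I|=|K|=k}\det(X_{IK})$ becomes the polynomial identity
\begin{equation*}
  \sum_{|I|=|K|=k}\bigl(\det(T_{IK})+\det(T_{KI})-2\bigr)\det(X_{IK})=0
\end{equation*}
for every symmetric~$X$. The diagonal pairs $I=K$ contribute nothing, since $T_{II}$ is lower triangular with $1$'s on the diagonal and hence $\det(T_{II})=1$. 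To handle the off-diagonal terms I use that $T+T^T=2\mathbf{J}$ implies $T_{KI}=\mathbf{J}_k-(S_{IK})^T$, and since $\mathbf{J}_k$ has rank one, multilinearity of the $k\times k$ determinant produces the clean expansion
\begin{equation*}
\det(T_{IK})+\det(T_{KI}) = \bigl(1+(-1)^k\bigr)\det(S_{IK}) + \bigl(1+(-1)^{k-1}\bigr)\,\mathbf{1}^T\adj(S_{IK})\mathbf{1},
\end{equation*}
so the excess depends only on $\det(S_{IK})$ for $k$ even and only on $\mathbf{1}^T\adj(S_{IK})\mathbf{1}$ for $k$ odd. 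This parity split is natural: from the matrix determinant lemma applied to the rank-one summand $s\mathbf{J}X=s\mathbf{1}(\mathbf{1}^TX)$ one gets $\det(I+sTX)=\det(I+sSX)+s\,\mathbf{1}^TX\adj(I+sSX)\mathbf{1}$, and the fact that $\mathbf{1}^TX(SX)^m\mathbf{1}=0$ for $m$ odd (a direct consequence of $X^T=X$, $S^T=-S$) shows the first summand is even in~$s$ and the second is odd.

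The main obstacle is the remaining combinatorial cancellation. The minors $\det(X_{IK})$ of a symmetric matrix are not linearly independent---already in the case $n=4$, $k=2$ the three ``matching'' minors $X_{12}X_{34}-X_{14}X_{23}$, $X_{13}X_{24}-X_{14}X_{23}$, $X_{12}X_{34}-X_{13}X_{24}$ satisfy a three-term Pl\"ucker-type relation---so the excess coefficients need not vanish individually and must cancel only modulo such relations. I would attack the remaining identity by induction on~$n$: peel off the last row and column of~$T$ and of~$X$, and split the Cauchy--Binet sum according to the membership of the index~$n$ in~$I$ and~$K$. The pure cases should reduce directly to the inductive hypothesis for $T_{n-1}$; the mixed cases should reduce to that hypothesis together with a single auxiliary identity handling the interaction of the new row of $2$'s in $T_n$ with the last row and column of~$X$. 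Managing this Schur-complement-style bookkeeping is where I expect the serious combinatorial work to lie.
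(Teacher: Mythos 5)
Your reduction to the polynomial identity
\begin{equation*}
  \sum_{\abs{I}=\abs{K}=k}\bigl(\det T_{IK}+\det T_{KI}-2\bigr)\det X_{IK}=0
\end{equation*}
is correct, and the decomposition $T=\mathbf{J}+S$ with the attendant parity split is a genuine and rather pretty observation (the computations with the matrix determinant lemma and the vanishing of $\mathbf{1}^{T}X(SX)^{m}\mathbf{1}$ for odd~$m$ all check out). But you have not proved the theorem: the entire combinatorial content lies in exactly the ``remaining cancellation'' that you defer with ``I expect the serious combinatorial work to lie'' and a hoped-for induction on~$n$. That induction is not carried out, and there is no evidence the mixed-index bookkeeping would close; the fact that the minors $\det X_{IK}$ of a symmetric matrix satisfy Pl\"ucker-type relations (which you correctly note) is precisely why one cannot argue coefficientwise and why the step is hard.

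The paper's proof attacks that core step head-on with two lemmas you would still need analogues of. First, it computes the minors of~$T$ exactly via Lindstr\"om's Lemma applied to a planar network: $\det T_{JI}=2^{p(I,J)}$ when $I$ interlaces with~$J$ (in the sense $i_{1}\le j_{1}\le i_{2}\le\dots\le j_{k}$, with $p(I,J)$ the number of strict gaps) and $\det T_{JI}=0$ otherwise. This makes the left-hand side of your identity a sum over \emph{interlacing} pairs only. Second, and this is the crux, it proves a precise linear relation among the $k\times k$ minors of a symmetric matrix (Lemma~\ref{lem:minor-sum}): for $I\le J$,
\begin{equation*}
  \sum_{\substack{A,B\in\binom{I\cup J}{k}\\ A\cap B=I\cap J}}\det X_{AB}=2^{p(I,J)}\det X_{IJ},
\end{equation*}
and this is established by a delicate sign-counting argument --- representing each determinant term as a bipartite graph, ``flipping'' linked index pairs using $X^{T}=X$, and tracking the parity of the crossing number to show that ``hostile'' pairs cancel and ``friendly'' pairs collect into the factor $2^{p(I,J)}$. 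Summing the lemma over interlacing pairs reproduces every $\det X_{AB}$ exactly once. Nothing in your $\det S_{IK}$ / $\mathbf{1}^{T}\adj(S_{IK})\mathbf{1}$ reformulation supplies a substitute for this flipping argument, so as it stands the proposal has a real gap at the very step where the theorem is nontrivial.
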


\begin{proof}
  The proof is presented in Appendix~\ref{sec:combinatorics}.
  It relies on the Cauchy--Binet formula, Lindstr\"om's Lemma,
  and some rather intricate dependencies among the minors of~$X$
  due to the symmetry of the matrix.
\end{proof}

Theorem~\ref{thm:CanadaDay} is named after the date
when we started trying to prove it: July~1, 2008, Canada's national day.
(It turned out that the proof was more difficult than we expected,
so we didn't finish it until a few days later.)
Summarizing the results so far,
we now have the following description of the constants of motion:

\begin{theorem}
  \label{thm:constants-of-motion}
  The Novikov peakon ODEs \eqref{eq:novikov-ode} admit $n$
  constants of motion $H_1,\dots,H_n$,
  where $H_k$ equals the sum of all $k\times k$ minors (principal
  and non-principal) of the $n\times n$ symmetric matrix
  $PEP=(m_j m_k E_{jk})_{j,k=1}^n$.
  (See \eqref{eq:matrix-notation} for notation.)
\end{theorem}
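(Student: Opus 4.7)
The plan is to combine three ingredients that are already in place: the determinantal formula $A(\lambda) = \det(I - \lambda TPEP)$ from \eqref{eq:Adet}, the invariance $\dot A = 0$ read off from the evolution equations \eqref{eq:ABCevolution}, and the Canada Day Theorem (Theorem~\ref{thm:CanadaDay}), which interprets the coefficients of such a determinant combinatorially.

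First I would establish that the $H_k$ are genuine constants of the Novikov peakon flow. From \eqref{eq:ABCevolution} we have $\dot A = 0$, so $A(\lambda)$ is invariant in $t$ for every value of the spectral parameter $\lambda$. Expanding as in \eqref{eq:A},
\[
  A(\lambda) = \sum_{k=0}^n H_k (-\lambda)^k,
\]
so each coefficient $H_k$ is preserved under the flow \eqref{eq:novikov-ode}.

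Next I would identify $H_k$ combinatorially. Since $E_{jk} = e^{-\abs{x_j-x_k}}$ is symmetric in $j,k$ and $P = \diag(m_1,\dots,m_n)$ is diagonal, the matrix $X := PEP$ satisfies $X^t = X$, and Theorem~\ref{thm:CanadaDay} applies. Setting $s = -\lambda$ and $X = PEP$ in that theorem and using \eqref{eq:Adet} yields
\[
  A(\lambda) = \det(I - \lambda TPEP) = \sum_{k=0}^n (-\lambda)^k \, \sigma_k(PEP),
\]
where $\sigma_k(PEP)$ denotes the sum of all $k \times k$ minors (principal and non-principal) of $PEP$. Matching coefficients with the expansion above gives $H_k = \sigma_k(PEP)$, which is the desired formula.

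In this setup the only real obstacle is the Canada Day Theorem itself, and its proof is deferred to Appendix~\ref{sec:combinatorics}. Once that combinatorial identity is granted, the theorem follows by simply assembling the pieces: invariance coming from the Lax pair, the determinantal representation \eqref{eq:Adet}, and the minor expansion supplied by Theorem~\ref{thm:CanadaDay}.
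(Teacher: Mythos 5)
Your proposal is correct and follows the paper's own argument: the paper likewise derives the result by combining the invariance of $A(\lambda)$, the determinantal identity \eqref{eq:Adet}, the expansion \eqref{eq:A}, and Theorem~\ref{thm:CanadaDay} applied to the symmetric matrix $PEP$. You have simply spelled out the substitution $s=-\lambda$ and the coefficient matching that the paper leaves implicit.
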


\begin{proof}
  This follows at once from \eqref{eq:A}, \eqref{eq:Adet},
  and Theorem~\ref{thm:CanadaDay}.
\end{proof}

\begin{example}
  The sum of all $1\times 1$ minors of $PEP$ is
  of course just the sum of all entries,
  \begin{equation}
    \label{eq:H1}
    H_1 = \sum_{j,k=1}^n m_j m_k E_{jk}
    = \sum_{j,k=1}^n m_j m_k \, e^{-\abs{x_j-x_k}},
  \end{equation}
  and the Hamiltonian of the peakon ODEs \eqref{eq:novikov-ode} is
  $h = \frac12 H_1$.
  Higher order minors of $PEP$ are easily computed using Lindstr\"om's Lemma,
  as explained in Section~\ref{sec:PEP-minors} in the appendix.
  In particular, the constant of motion of highest degree in the $m_k$ is
  \begin{equation}
    \label{eq:Hn}
    H_n = \det(PEP) = \prod_{j=1}^{n-1} (1-E_{j,j+1}^2) \, \prod_{j=1}^{n} m_j^2.
  \end{equation}
\end{example}

\begin{example}
  Written out in full, the constants of motion in the case $n=3$ are
  \begin{equation}
    \label{eq:Hk-n3}
    \begin{split}
      H_1 &= m_1^2 + m_2^2 + m_3^2 + 2 m_1 m_2 E_{12} + 2 m_1 m_3 E_{13} + 2 m_2 m_3 E_{23}, \\
      H_2 &= (1-E_{12}^2) \, m_1^2 \, m_2^2 + (1-E_{13}^2) \, m_1^2 \, m_3^2 + (1-E_{23}^2) \, m_2^2 \, m_3^2 \\
      & \quad + 2 (E_{23}-E_{12}E_{13}) \, m_1^2 \, m_2 \, m_3 + 2 (E_{12}-E_{13}E_{23}) \, m_1 \, m_2 \, m_3^2, \\
      H_3 &= (1-E_{12}^2)(1-E_{23}^2) \, m_1^2 \, m_2^2 \, m_3^2.
    \end{split}
  \end{equation}
\end{example}

From now on we mainly restrict ourselves to the pure peakon case when
$m_k > 0$ for all~$k$ (no antipeakons). Our first reason for this is that we
can then use the positivity of $H_1$ and~$H_n$ to show global
existence of peakon solutions.

\begin{theorem}
  \label{thm:globalexistence}
  Let
  \begin{equation}
    \label{eq:phasespaceM}
    \puresector= \{ x_1 < \dots < x_n, \,\, \text{all $m_k > 0$} \}
  \end{equation}
  be the phase space for the Novikov peakon system \eqref{eq:novikov-ode} in
  the pure peakon case. If the initial data are in $\puresector$, then
  the solution $(\mathbf{x}(t), \mathbf{m}(t))$ exists
  for all $t \in \R$,
  and remains in~$\puresector$.
\end{theorem}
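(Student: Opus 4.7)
The plan is to show that a solution starting in $\puresector$ stays in a compact subset of $\puresector$ on every bounded time interval; standard ODE continuation will then force the maximal interval of existence to be all of $\R$. On $\puresector$ the right hand side of \eqref{eq:novikov-ode} is real-analytic — the strict ordering $x_1 < \dots < x_n$ makes the $\sgn(x_k - x_j)$ factors locally constant — so local existence, uniqueness and smoothness are standard. Three potential obstructions could in principle terminate the solution in finite time: a momentum $m_k$ could vanish and leave the pure-peakon sector, a position could escape to $\pm\infty$, or two adjacent positions could collide. I plan to rule out each of these using the conserved quantities from Theorem~\ref{thm:constants-of-motion}.

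First I would observe that the second line of \eqref{eq:novikov-ode} has the form $\dot m_k = m_k \, F_k(\mathbf{x},\mathbf{m})$ with $F_k$ continuous on $\puresector$, so each $m_k$ retains its initial sign along the flow. To bound the momenta themselves I would use the Hamiltonian $H_1 = \sum_{j,k} m_j m_k E_{jk}$: in the pure peakon sector every term is non-negative, and the diagonal ($j=k$) contributions give $H_1 \geq m_k^2$, hence $0 < m_k(t) \leq \sqrt{H_1(0)}$ throughout the maximal interval. This in turn forces the velocities to be bounded,
\[
\dot x_k = \Bigl( \sum_{i=1}^n m_i \, e^{-\abs{x_k-x_i}} \Bigr)^2 \leq \Bigl( \sum_{i=1}^n m_i \Bigr)^2 \leq n^2 H_1(0),
\]
so that the positions grow at most linearly in $t$ and cannot reach $\pm\infty$ in finite time.

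The main obstacle, and the place where the explicit formula \eqref{eq:Hn} is genuinely needed, is ruling out collisions $x_k \to x_{k+1}$, at which the right hand side of \eqref{eq:novikov-ode} is singular. Here I would exploit the top invariant
\[
H_n = \prod_{j=1}^{n-1}\bigl(1 - E_{j,j+1}^2\bigr) \prod_{j=1}^n m_j^2,
\]
which is strictly positive on $\puresector$ and hence a positive constant along the flow. Combining $\prod_j m_j^2 \leq H_1(0)^n$ from the previous step with the fact that each factor $1 - E_{j,j+1}^2$ lies in $(0,1]$ yields
\[
1 - E_{j,j+1}^2 \geq \frac{H_n(0)}{H_1(0)^n} > 0 \qquad (j=1,\dots,n-1),
\]
so every consecutive gap $x_{j+1} - x_j$ is bounded below by a positive constant uniform in $t$. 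Combined with sign preservation, the upper bound on each $m_k$, and the linear growth of the positions, this confines the trajectory to a compact subset of $\puresector$ on every bounded interval; the escape-from-compact-sets criterion then yields existence for all $t \in \R$ (the whole argument is symmetric under $t \mapsto -t$, so it applies equally to $t<0$).
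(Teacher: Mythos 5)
Your proof takes essentially the same route as the paper's: use the positive conserved quantities $H_1$ and $H_n$ to trap the trajectory, and then invoke the standard continuation criterion. The one place where your argument, as written, has a genuine gap is the concluding step: ``combined with sign preservation, the upper bound on each $m_k$, and the linear growth of the positions, this confines the trajectory to a compact subset of $\puresector$.'' Sign preservation gives only $m_k(t) > 0$; it does not rule out $m_k(t) \to 0$ along a sequence of times, and a set of the form $\{\,0 < m_k \le C\,\}$ is not relatively compact in $\puresector$. So, as stated, you have not excluded the very first of the three obstructions you yourself listed (a momentum tending to zero). The paper's proof handles this explicitly by noting that the positivity of $H_n$ also bounds the $m_k$ away from zero; that is the ingredient missing here.

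The fix drops out of the same $H_n$ computation you already carried out for the gaps. Since $H_n(0) = \prod_{j=1}^{n-1}(1-E_{j,j+1}^2)\,\prod_{j=1}^n m_j^2$ with every factor $1-E_{j,j+1}^2 \le 1$, you get $\prod_j m_j^2 \ge H_n(0)$; bounding the $n-1$ factors with $j\ne k$ above by $H_1(0)$ then yields $m_k^2 \ge H_n(0)/H_1(0)^{n-1} > 0$, a lower bound uniform in $t$. (Alternatively, observe that $\dot m_k = m_k F_k$ with $\abs{F_k} \le (\sum_i m_i)^2 \le n^2 H_1(0)$, so $m_k(t) = m_k(0)\exp\bigl(\int_0^t F_k\,ds\bigr)$ cannot reach zero in finite time; but the $H_n$ route is what the paper uses and gives a bound uniform over all of $\R$.) With this one additional line, your argument is complete and coincides with the paper's proof.
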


\begin{proof}
  Local existence in $\puresector$ is automatic in view of the
  smoothness of the ODEs there. By \eqref{eq:H1} and \eqref{eq:Hn},
  both $H_1$ and $H_n$ are strictly positive on~$\puresector$. Since
  $m_k^2 < H_1$, all $m_k$ remain bounded from above. The positivity
  of $H_n$ ensures that the $m_k$ are bounded away from zero, and
  that the positions remain ordered.
  The velocities $\dot x_k$ are all bounded by $(\sum m_k)^2$,
  hence $0 < \dot x_k \le C$ for some constant~$C$,
  and the positions $x_k(t)$ are therefore finite
  for all $t \in \R$.
  Since neither $x_k$ nor $m_k$ can blow up in finite time,
  the solution exists globally in time.
\end{proof}

\begin{remark}
  The peakon ODEs \eqref{eq:novikov-ode} are invariant under
  the transformation
  $(m_1,\dots,m_n) \mapsto (-m_1,\dots,-m_n)$,
  so the analogous result holds also when all $m_k$ are negative.
\end{remark}

\begin{theorem}
  \label{thm:constants-of-motion-commute}
  The constants of motion $H_1,\dots,H_n$ of
  Theorem~\ref{thm:constants-of-motion} are functionally independent
  and commute with respect to the Poisson bracket
  \eqref{eq:poisson-structure}, so the Novikov peakon system
  \eqref{eq:novikov-ode} is Liouville integrable on the phase
  space~$\puresector$.
\end{theorem}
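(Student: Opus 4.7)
The plan is to establish functional independence, Poisson commutativity, and then deduce Liouville integrability on the $2n$-dimensional phase space $\puresector$, leveraging the spectral picture built in Section~\ref{sec:forward}.

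For functional independence: by \eqref{eq:A}, $H_k$ is the $k$th elementary symmetric polynomial in the reciprocals $1/\lambda_1,\dots,1/\lambda_n$, so the $H_k$ are functionally independent on the open subset of $\puresector$ where the spectrum of $TPEP$ is simple. Simplicity and positivity of the $\lambda_k^{-1}$ on $\puresector$ should follow from an oscillation argument: conjugating $TPEP$ by $P^{1/2}$ relates its spectrum to that of a matrix with total-positivity structure (the kernel $E_{jk}$ is of Cauchy-like oscillatory type), in the spirit of the Camassa--Holm and Degasperis--Procesi analyses. Density of the simple-spectrum locus then upgrades functional independence on this open set to all of $\puresector$ by continuity of the $H_k$ as polynomials in the coordinates.

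For Poisson commutativity I would use an isospectral deformation argument. For each $k$, construct an auxiliary matrix $M^{(k)}$ such that the Hamiltonian flow of $H_k$ under the bracket \eqref{eq:poisson-structure} is realised as a zero-curvature deformation of the spatial Lax operator \eqref{eq:lax-x}, so that the transition matrix $S(\lambda)$ — and hence its $(1,1)$ entry $A(\lambda)$ — is invariant under the $H_k$-flow. The case $k=1$ is already in hand: the Hone--Wang matrix in \eqref{eq:lax-t} generates the flow of $h=\tfrac12 H_1$. Higher $M^{(k)}$ should be obtainable as suitable polynomial combinations in $z$ and $z^{-1}$, producing the full integrable hierarchy. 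Once $\poisson{A(\mu)}{H_k}=0$ is established as an identity in $\mu$, reading off the coefficients of powers of $\mu$ gives $\poisson{H_j}{H_k}=0$ for all $j,k$. Liouville integrability then requires verifying that the Poisson tensor \eqref{eq:poisson-structure} is nondegenerate on $\puresector$, i.e.\ that the $2n\times 2n$ antisymmetric matrix of brackets has nonvanishing determinant on $\puresector$, which should follow from $m_k>0$ and $x_1<\dots<x_n$.

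The main obstacle is the commutativity step. Since the bracket \eqref{eq:poisson-structure} is non-canonical and rather intricate, a direct computation of $\poisson{H_j}{H_k}$ from the definition is prohibitive, and the standard $r$-matrix formalism must be adapted to the \emph{distributional} Lax pair — one must carefully invoke the regularization and the averaging convention $\avg{\cdot}$ of Appendix~\ref{sec:multiplication} when imposing the zero-curvature condition at the singular points $x_k$. Exhibiting the explicit hierarchy of auxiliary matrices $M^{(k)}$ compatible with this regularization is the delicate technical step; once the $M^{(k)}$ are identified and verified, Poisson commutativity becomes a formal consequence of the isospectral property.
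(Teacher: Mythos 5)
Your outline diverges substantially from the paper's proof, and the commutativity part has a genuine gap: you sketch a strategy (construct a hierarchy of auxiliary matrices $M^{(k)}$, show each $H_k$ generates an isospectral deformation, deduce $\poisson{A(\mu)}{H_k}=0$, compare coefficients) but you explicitly acknowledge you have not constructed the $M^{(k)}$ or verified they are compatible with the distributional regularization of Appendix~B. That is the entire content of the claim, so as written this is a plan, not a proof. The paper avoids all of this with a much cheaper device, borrowed from Moser's work on Toda and Calogero--Moser scattering: since each $\poisson{H_j}{H_k}$ is itself a constant of motion, it suffices to evaluate it in the $t\to-\infty$ limit along the peakon flow (which exists globally by Theorem~\ref{thm:globalexistence}). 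By the asymptotics of Theorem~\ref{thm:asymptotics}, the peakons separate as $t\to-\infty$, so $H_k\to e_k(m_1^2,\dots,m_n^2)$ and all exponential interaction terms $E_{pq}\to0$; the bracket $\poisson{e_j}{e_k}$ then reduces to a linear combination of $\poisson{m_p}{m_q}$, each of which is $O(E_{pq})\to0$ by \eqref{eq:poisson-structure}. No hierarchy of Lax matrices, no $r$-matrix machinery, no regularization of zero-curvature conditions at the singular points.

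Your functional-independence argument also has a gap. Knowing that $H_k = e_k(1/\lambda_1,\dots,1/\lambda_n)$ and that the $\lambda_j$ are simple gives independence of the $H_k$ as functions of the $\lambda_j$; it does not by itself give independence of the $H_k$ as functions of $(x,m)\in\puresector$, for which you would also need the map $(x,m)\mapsto(\lambda_1,\dots,\lambda_n)$ to be a submersion. That fact can ultimately be extracted from the inverse spectral theory, but it is not a consequence of the oscillation argument alone, and you do not supply it. The paper sidesteps this by working directly in the $(x,m)$-coordinates: from \eqref{eq:hksym}, $H_k=e_k(m_1^2,\dots,m_n^2)+O(E_{pq})$, the leading part already has linearly independent differentials, and the $O(E_{pq})$ corrections are made arbitrarily small by separating the positions; then rationality of $dH_1\wedge\dots\wedge dH_n$ in $\{m_k,e^{x_k}\}$ promotes nonvanishing on one open set to nonvanishing on a dense set.

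Finally, your remark that one must verify nondegeneracy of the Poisson tensor is reasonable in principle, but the paper does not carry out that check in this theorem (it is implicit from the original construction of the bracket in Hone--Wang), so it is not part of the proof you are being asked to reproduce.
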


\begin{proof}
  To prove functional independence, one should check that
  $J:=dH_1\wedge dH_2\wedge \ldots \wedge dH_n$
  does not vanish on any open set in~$\puresector$.
  Since $J$ is rational in the variables $\{ m_k, e^{x_k} \}_{k=1}^n$,
  it vanishes identically if it vanishes on an open set,
  so it is sufficient to show that $J$ is not identically zero.
  To see this, note that
  \begin{equation}
    \label{eq:hksym}
    H_k=e_k(m_1^2,\ldots ,m_n^2)+O(E_{pq}),
  \end{equation}
  where $e_k$ denotes the $k$th elementary symmetric function
  in $n$ variables, and $O(E_{pq})$ denotes
  terms involving exponentials of the positions $x_j$. It is well
  known that the first $n$ elementary symmetric functions are
  independent (they provide a basis for symmetric functions of $n$
  variables \cite{macdonald}), and therefore the leading part of $J$
  (neglecting the $O(E_{pq})$ terms) does not vanish.
  Since the $O(E_{pq})$ terms can be made arbitrarily small
  by taking the $x_k$ far apart, we see that there is a region in
  $\puresector$ where $J$ does not vanish, and we are done.

  To prove that the quantities $H_k$ Poisson commute with respect to
  the bracket \eqref{eq:poisson-structure}, it is convenient to adapt
  some arguments of Moser that he applied to the scattering of particles
  in the Toda lattice and the rational Calogero--Moser system
  \cite{moser-three}. The Poisson bracket of two constants of motion
  is itself a constant of motion, so $\{H_j, H_k\}$ is independent of time.
  Consider now this bracket at a fixed point
  $(\mathbf{x}^0, \mathbf{m}^0)
  :=(x_1^0,x_2^0,\ldots,x_n^0, m_1^0,m_2^0,\ldots m_n^0)
  \in \puresector$
  which we consider as an initial condition for the peakon flow
  $(\mathbf{x}(t), \mathbf{m}(t))$, which exists globally in time
  by Theorem~\ref{thm:globalexistence}.
  Theorem~\ref{thm:asymptotics}, which will be proved later without using
  what we are proving here, shows that the peakons scatter as
  $t\to -\infty$; more precisely,
  $m_k^2$ tends to $1/\lambda_k$, while the $x_k$ move
  apart, so that the terms $O(E_{pq})$ tend to zero.
  (It should also be possible to prove these scattering properties directly
  from the peakon ODEs, along the lines of what was done for the
  Degasperis--Procesi equation in
  \cite[Theorem~2.4]{lundmark-szmigielski-DPlong},
  but we have not done that.)
  Thus, from \eqref{eq:hksym},
  $\{H_j, H_k\}(\mathbf{x}^0, \mathbf{m}^0)
  = \{H_j, H_k\}(\mathbf{x}(t), \mathbf{m}(t))
  = \lim _{t\to -\infty} \{H_j, H_k\}(\mathbf{x}(t), \mathbf{m}(t))
  = \lim _{t\to -\infty}\{e_j, e_k\}(\mathbf{x}(t),\mathbf{m}(t))$.
  Now the Poisson brackets of these symmetric functions are given by
  linear combinations of the Poisson brackets $\{m_j, m_k\}$ with
  coefficients dependent only on the amplitudes. However, from
  \eqref{eq:poisson-structure} it is clear that
  $\{m_j, m_k\}(\mathbf{x}(t), \mathbf{m}(t)) =O(E_{pq})\to 0$,
  from which it follows that
  $\{e_j, e_k\}(\mathbf{x}(t), \mathbf{m}(t))\to 0$ as $t\to -\infty$, and
  hence $\{H_j, H_k\}(\mathbf{x}^0, \mathbf{m}^0)=0$ as required.
\end{proof}

\begin{remark}
  Since the vanishing of the Poisson bracket is a purely algebraic
  relation,
  the $H_k$ Poisson commute at each point of $R^{2n}$,
  not just in the region~$\puresector$.
\end{remark}

The $\lambda_k$, which are defined as the zeros of $A(\lambda)$, are
the eigenvalues of the inverse of the matrix $TPEP$,
since $A(\lambda) = \det(I - \lambda TPEP)$.
Another reason why we restrict our attention to the case with all $m_k>0$
is that the matrix $TPEP$ can then be shown to be oscillatory
(see Section~\ref{sec:TPEP-oscillatory} in the appendix),
which implies that its eigenvalues are positive and simple.
Consequently, the $\lambda_k$ are also positive and simple,
and for definiteness we will number them such that
\begin{equation}
  \label{eq:ordering-lambda}
  0 < \lambda_1 < \dots < \lambda_n.
\end{equation}
(For another proof that the spectrum is positive and simple,
see Theorem~\ref{thm:translated-from-neumann}.)

Turning now to $B=S(\lambda)_{21}$ and $C=S(\lambda)_{31}$,
we find from \eqref{eq:S} and \eqref{eq:jump-matrix}
that they are polynomials in $\lambda$ of degree $n-1$,
with $B(0) = M_+$ and $C(0) = M_+^2$,
where $M_+ = \sum_{k=1}^N m_k \, e^{x_k}$ as before.
This means that the two \emph{Weyl functions}
\begin{equation}
  \label{eq:weylfunctions}
  \omega(\lambda) = -\frac{B(\lambda)}{A(\lambda)}
  \qquad\text{and}\qquad
  \zeta(\lambda) = -\frac{C(\lambda)}{2A(\lambda)}
\end{equation}
are rational functions of order $O(1/\lambda)$ as $\lambda\to\infty$,
having poles at the eigen\-values~$\lambda_k$.
Let $b_k$ and $c_k$ denote the residues,
\begin{equation}
  \label{eq:weyl-parfrac}
  \omega(\lambda) = \sum_{k=1}^n \frac{b_k}{\lambda-\lambda_k},
  \qquad
  \zeta(\lambda) = \sum_{k=1}^n \frac{c_k}{\lambda-\lambda_k}.
\end{equation}
The time evolution of $(A,B,C)$, given by \eqref{eq:ABCevolution},
translates into
\begin{equation}
  \label{eq:weyl-evolution}
  \dot\omega(\lambda) = \frac{\omega(\lambda)-\omega(0)}{\lambda},
  \qquad
  \dot\zeta(\lambda) = - \omega(0) \, \dot\omega(\lambda).
\end{equation}
Comparing residues on both sides in \eqref{eq:weyl-evolution} gives
\begin{equation}
  \label{eq:residue-evolution}
  \dot b_k = \frac{b_k}{\lambda_k},
  \qquad
  \dot c_k = - \omega(0) \, \frac{b_k}{\lambda_k}
  = \sum_{m=1}^n \frac{b_m b_k}{\lambda_m \lambda_k}.
\end{equation}
This at once implies $b_k(t)=b_k(0) \, e^{t/\lambda_k}$, and
integrating $\dot c_k(\tau)$ from $\tau = -\infty$
(assuming that $c_k$ vanishes there)
to~$\tau=t$ then gives
\begin{equation}
  \label{eq:ck}
  c_k = \sum_{m=1}^n \frac{b_m b_k}{\lambda_m + \lambda_k}.
\end{equation}
A purely algebraic proof of this relation between the Weyl functions,
not relying on time dependence and integration, will be given below;
see Theorem~\ref{thm:translated-from-neumann}.
We note the identities
$\sum_1^n c_k/\lambda_k = \frac12 (\sum_1^n b_k/\lambda_k)^2$
and $\sum_1^n \lambda_k c_k = \frac12 (\sum_1^n b_k)^2$.

The multipeakon solution is obtained as follows.
The initial data $x_k(0)$, $m_k(0)$
(for $k=1,\dots,n$)
determine initial spectral data
$\lambda_k(0)$, $b_k(0)$,
which after time $t$ have evolved to
$\lambda_k(t) = \lambda_k(0)$, $b_k(t) = b_k(0) \, e^{t/\lambda_k}$
(since the $\lambda_k$ are the zeros of the time-invariant polynomial
$A(\lambda)$, and since the $b_k$ satisfy \eqref{eq:residue-evolution}).
Solving the inverse spectral problem for these spectral data at time~$t$
gives the solution $x_k(t)$, $m_k(t)$.
The remainder of the paper is devoted to this inverse spectral problem.

\section{The dual cubic string}
\label{sec:dual}

Just like for the Camassa--Holm and Degasperis--Procesi equations,
some terms in the Lax pair's spatial equation (equation
\eqref{eq:lax-x} in this case, repeated as \eqref{eq:lax-x-again}
below) can be removed by a change of both dependent and independent
variables. We refer to this as a Liouville transformation, since it is
reminiscent of the transformation used for bringing a second-order
Sturm--Liouville operator to a simple normal form.
This simplification reveals an interesting connection between the
Novikov equation and the Degasperis--Procesi equation, and allows us
to solve the inverse spectral problem by making use of the tools
developed in the study of the latter.

\begin{theorem}
  \label{thm:liouville}
  The spectral problem
  \begin{equation}
    \label{eq:lax-x-again}
    \frac{\partial}{\partial x}
    \begin{pmatrix} \psi_1 \\ \psi_2 \\ \psi_3 \end{pmatrix} =
    \begin{pmatrix}
      0 & zm(x) & 1 \\
      0 & 0 & zm(x) \\
      1 & 0 & 0
    \end{pmatrix}
    \begin{pmatrix} \psi_1 \\ \psi_2 \\ \psi_3 \end{pmatrix}
  \end{equation}
  on the real line $x\in \R$, with boundary conditions
  \begin{equation}
    \label{eq:boundary-conditions-x}
    \begin{split}
      \psi_2(x) &\to 0, \quad\text{as $x\to-\infty$},\\
      e^x \psi_3(x) &\to 0, \quad\text{as $x\to-\infty$},\\
      e^{-x} \psi_3(x) &\to 0, \quad\text{as $x\to+\infty$},
    \end{split}
  \end{equation}
  is equivalent (for $z \neq 0$), under the change of variables
  \begin{equation}
    \label{eq:liouville-trf}
    \begin{split}
      y &= \tanh x,\\
      \phi_1(y) &= \psi_1(x) \cosh x - \psi_3(x) \sinh x,\\
      \phi_2(y) &= z \, \psi_2(x),\\
      \phi_3(y) &= z^2 \, \psi_3(x) / \cosh x,\\
      g(y) &= m(x) \, \cosh^3 x,\\
      \lambda &= -z^2,
    \end{split}
  \end{equation}
  to the ``dual cubic string'' problem
  \begin{equation}
    \label{eq:dual-cubic-three-component}
    \frac{\partial}{\partial y}
    \begin{pmatrix} \phi_1 \\ \phi_2 \\ \phi_3 \end{pmatrix}
    =
    \begin{pmatrix}
      0 & g(y) & 0 \\
      0 & 0 & g(y) \\
      -\lambda & 0 & 0
    \end{pmatrix}
    \begin{pmatrix} \phi_1 \\ \phi_2 \\ \phi_3 \end{pmatrix}
  \end{equation}
  on the finite interval $-1<y<1$, with boundary conditions
  \begin{equation}
    \label{eq:boundary-conditions-y}
    \phi_2(-1) = \phi_3(-1) = 0
    \qquad
    \phi_3(1) = 0.
  \end{equation}
  In the discrete case $m = 2 \sum_{k=1}^n m_k \, \delta_{x_k}$,
  the relation between the measures $m$ and $g$ should be interpreted as
  \begin{equation}
    \label{eq:m-g-discrete}
    g(y) = \sum_{k=1}^n g_k \delta_{y_k},
    \qquad
    y_k = \tanh x_k,
    \qquad
    g_k = 2 \, m_k \cosh x_k.
  \end{equation}
\end{theorem}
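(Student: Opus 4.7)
The plan is to verify the equivalence by direct differentiation, using the chain rule with $dx/dy = \cosh^2 x$ (since $y = \tanh x$), and then to separately check the boundary conditions and the discrete-measure interpretation. Thus, for each of the three components $\phi_i$ defined in \eqref{eq:liouville-trf}, I compute $\partial_x \phi_i$ using the original system \eqref{eq:lax-x-again}, multiply by $\cosh^2 x$, and check that this matches the corresponding right-hand side in \eqref{eq:dual-cubic-three-component}.

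The key computation for $\phi_1 = \psi_1 \cosh x - \psi_3 \sinh x$ uses the two original relations $\partial_x \psi_1 = z m \psi_2 + \psi_3$ and $\partial_x \psi_3 = \psi_1$; the $\psi_1 \sinh x$ terms cancel, as do the $\psi_3 \cosh x$ terms, leaving $\partial_x \phi_1 = z m \psi_2 \cosh x$, which multiplied by $\cosh^2 x$ yields $g\phi_2$ with $g = m \cosh^3 x$ and $\phi_2 = z\psi_2$. For $\phi_2 = z \psi_2$, $\partial_x \phi_2 = z^2 m \psi_3$ gives $\partial_y \phi_2 = z^2 m \psi_3 \cosh^2 x = g\phi_3$. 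For $\phi_3 = z^2 \psi_3/\cosh x$, the quotient rule combined with $\partial_x \psi_3 = \psi_1$ produces $\partial_x \phi_3 = z^2 (\psi_1 \cosh x - \psi_3 \sinh x)/\cosh^2 x = z^2 \phi_1/\cosh^2 x$, so $\partial_y \phi_3 = z^2 \phi_1 = -\lambda \phi_1$, as required. The particular linear combination chosen for $\phi_1$ is precisely what makes these cancellations work; this is the algebraic heart of the Liouville reduction.

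For the boundary conditions, I note that $y \to -1$ corresponds to $x \to -\infty$ and $y \to +1$ to $x \to +\infty$, and that $\cosh x \sim \tfrac12 e^{-x}$ at $-\infty$ while $\cosh x \sim \tfrac12 e^{x}$ at $+\infty$. The stated asymptotic conditions on $\psi_3$ then translate verbatim into $\phi_3(\pm 1) = 0$, and $\psi_2 \to 0$ at $-\infty$ gives $\phi_2(-1) = 0$. Conversely, if $\phi_i$ satisfy \eqref{eq:dual-cubic-three-component}--\eqref{eq:boundary-conditions-y}, inverting the formulas \eqref{eq:liouville-trf} (possible for $z \neq 0$) recovers $\psi_i$ satisfying \eqref{eq:lax-x-again}--\eqref{eq:boundary-conditions-x}, establishing equivalence in both directions.

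Finally, the discrete case \eqref{eq:m-g-discrete} I handle by matching jumps. At $x_k$, the prescription from Appendix~\ref{sec:multiplication} gives $\psi_3$ continuous and $[\psi_2]_{x_k} = 2 z m_k \psi_3(x_k)$, while in the dual problem a mass $g_k \delta_{y_k}$ enforces $[\phi_2]_{y_k} = g_k \phi_3(y_k)$; equating $z \cdot 2 z m_k \psi_3(x_k)$ to $g_k \cdot z^2 \psi_3(x_k)/\cosh x_k$ forces $g_k = 2 m_k \cosh x_k$, with $y_k = \tanh x_k$ being immediate from the coordinate change. I do not anticipate a real obstacle — the argument is essentially chain-rule bookkeeping plus elementary asymptotics of $\cosh x$ — but the one point that requires care is checking that the singular-multiplication convention of Appendix~\ref{sec:multiplication} is respected by the transformation, which the matching of jumps above confirms.
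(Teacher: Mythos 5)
Your proposal is correct and takes essentially the same route as the paper, which simply asserts ``straightforward computation using the chain rule'' for the continuous part and invokes the transformation formula $\delta_{x_k} = \frac{dy}{dx}(x_k)\,\delta_{y_k}$ for the discrete case; your jump-matching derivation of $g_k = 2m_k\cosh x_k$ from the $\phi_2$-jump is an equivalent reformulation of that formula. One small gap worth closing: you only match the jump in $\phi_2$, but the measure $m$ also appears in the first equation $\partial_x\psi_1 = zm\psi_2+\psi_3$, so for completeness one should also verify $\jump{\phi_1(y_k)} = g_k\avg{\phi_2(y_k)}$ (it follows from $\jump{\psi_1(x_k)} = 2zm_k\avg{\psi_2(x_k)}$ together with continuity of $\psi_3$, and is automatically consistent since the same $g_k$ appears), whereas the paper's formula $\delta_{x_k} = (dy/dx)(x_k)\delta_{y_k}$ handles both slots of the matrix at once.
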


\begin{proof}
  Straightforward computation using the chain rule and,
  for the discrete case,
  $\delta_{x_k} = \frac{dy}{dx}(x_k) \, \delta_{y_k}$.
\end{proof}

\begin{remark}
  The cubic string equation,
  which plays a crucial role in the derivation of the
  Degasperis--Procesi multipeakon solution
  \cite{lundmark-szmigielski-DPlong},
  is
  \begin{equation}
    \label{eq:cubicstring}
    \partial_y^3 \phi = -\lambda g \phi,
  \end{equation}
  which can be written as a system by letting
  $\Phi = (\phi_1,\phi_2,\phi_3) = (\phi,\phi_y,\phi_{yy})$:
  \begin{equation}
    \label{eq:primal-cubic-three-component}
    \frac{\partial}{\partial y}
    \begin{pmatrix} \phi_1 \\ \phi_2 \\ \phi_3 \end{pmatrix}
    =
    \begin{pmatrix}
      0 & 1 & 0 \\
      0 & 0 & 1 \\
      -\lambda g(y) & 0 & 0
    \end{pmatrix}
    \begin{pmatrix} \phi_1 \\ \phi_2 \\ \phi_3 \end{pmatrix}.
  \end{equation}
  The duality between \eqref{eq:dual-cubic-three-component} and
  \eqref{eq:primal-cubic-three-component} manifests itself in the
  discrete case as an interchange of the roles of masses $g_k$ and
  distances $l_k = y_{k+1} - y_k$; see Section~\ref{sec:neumann}.
  When the mass distribution is given by a continuous function $g(y) > 0$,
  the systems are instead related via the change of variables defined by
  \begin{equation}
    \label{eq:duality-transformation}
    \frac{d\tilde{y}}{dy} = g(y) = \frac{1}{\tilde{g}(\tilde{y})},
  \end{equation}
  where $y$ and $g(y)$ refer to the primal cubic string
  \eqref{eq:primal-cubic-three-component},
  and $\tilde{y}$ and $\tilde{g}(\tilde{y})$ to the dual cubic string
  \eqref{eq:dual-cubic-three-component}
  (or the other way around; the transformation
  \eqref{eq:duality-transformation} is obviously symmetric
  in $y$ and $\tilde{y}$,
  so that the dual of the dual is the original cubic string again).
\end{remark}

\begin{remark}
  The concept of a dual string figures prominently in the work of
  Krein on the ordinary string equation
  $\partial_y^2 \phi = -\lambda g \phi$
  (as opposed to the cubic string).
  For a comprehensive account of Krein's theory, see
  \cite{dym-mckean-gaussian}.
\end{remark}

\begin{remark}
  As a motivation for the transformation \eqref{eq:liouville-trf},
  we note that one can eliminate $\psi_1$
  from \eqref{eq:lax-x-again}, which gives
  $\partial_x \psi_2 = zm \psi_3$, $(\partial_x^2-1)\,\psi_3 = zm\psi_2$.
  From the study of Camassa--Holm peakons
  \cite{beals-sattinger-szmigielski-moment}
  it is known that the transformation
  $y=\tanh x$, $\phi(y)=\psi(x) / \cosh x$
  takes the expression $(\partial_x^2-1) \, \psi$ to a multiple
  of~$\phi_{yy}$,
  so it is not far-fetched to try something similar on $\psi_3$
  while leaving $\psi_2$ essentially unchanged.
\end{remark}

From now on we concentrate on the discrete case.
The Liouville transformation maps the piecewise defined
$(\psi_1,\psi_2,\psi_3)$ given by \eqref{piecewise-x} to
\begin{equation}
  \label{eq:piecewise-y}
  \begin{pmatrix} \phi_1 \\ \phi_2 \\ \phi_3 \end{pmatrix} =
  \begin{pmatrix}
    A_k(\lambda) - \lambda \, C_k(\lambda) \\ -2 \lambda \, B_k(\lambda) \\ -\lambda \, A_k(\lambda) \, (1+y) - \lambda^2 \, C_k(\lambda) \, (1-y)
  \end{pmatrix}
  \quad\text{for $y_k < y < y_{k+1}$}.
\end{equation}
The initial values $(A_0,B_0,C_0)=(1,0,0)$ thus correspond
to $\Phi(-1;\lambda)=(1,0,0)^t$, where
$\Phi(y;\lambda)=\bigl(\phi_1,\phi_2,\phi_3 \bigr)^t$,
and at the right endpoint $y=1$ we have
\begin{equation}
  \label{eq:right-endpoint}
  \Phi(1;\lambda) =
  \begin{pmatrix}
    A_n(\lambda) - \lambda \, C_n(\lambda) \\ -2 \lambda \, B_n(\lambda) \\ -2 \lambda \, A_n(\lambda)
  \end{pmatrix}.
\end{equation}
In particular, the condition $A_n(\lambda)=0$ defining the spectrum
corresponds to $\phi_3(1;\lambda) = 0$, except that the latter condition
gives an additional eigenvalue $\lambda_0 = 0$ which is only present
on the finite interval.
(This is not a contradiction, since the Liouville transformation from
the line to the interval is not invertible when $z=-\lambda^2=0$.)

The component $\phi_3$ is continuous and piecewise linear,
while $\phi_1$ and $\phi_2$ are piecewise constant with jumps at
the points $y_k$ where the measure $g$ is supported.
More precisely, at point mass number $k$ we have
\begin{equation}
  \label{eq:jump-relations}
  \begin{split}
    \phi_1(y_k^+) - \phi_1(y_k^-) &= g_k \avg{\phi_2(y_k)},\\
    \phi_2(y_k^+) - \phi_2(y_k^-) &= g_k \, \phi_3(y_k),
  \end{split}
\end{equation}
and in interval number $k$, with length $l_k = y_{k+1}-y_k$,
\begin{equation}
  \label{eq:interval-relations}
  \phi_3(y_{k+1}^-) - \phi_3(y_k^+)
  = l_k \, \partial_y \phi_3(y_k^+)
  = -\lambda \, l_k \, \phi_1(y_k^+).
\end{equation}
In terms of the vector $\Phi$ these relations take the form
\begin{equation}
  \label{eq:Gk}
  \Phi(y_k^+) =
  \begin{pmatrix}
    1 & g_k & \frac12 g_k^2 \\ 0 & 1 & g_k \\ 0 & 0 & 1
  \end{pmatrix}
  \Phi(y_k^-),
\end{equation}
and
\begin{equation}
  \label{eq:Lk}
  \Phi(y_{k+1}^-) =
  \begin{pmatrix}
    1 & 0 & 0 \\ 0 & 1 & 0 \\ -\lambda l_k & 0 & 1
  \end{pmatrix}
  \Phi(y_k^+),
\end{equation}
respectively.
If we introduce the notation
\begin{equation}
  \label{eq:jump-matrices}
  G(x,\lambda) =
  \begin{pmatrix}
    1 & 0 & 0 \\ 0 & 1 & 0 \\ -\lambda x & 0 & 1
  \end{pmatrix},
  \qquad
  L(x) =
  \begin{pmatrix}
    1 & x & \frac12 x^2 \\ 0 & 1 & x \\ 0 & 0 & 1
  \end{pmatrix},
\end{equation}
it follows immediately that
\begin{equation}
  \label{eq:phi-matrix-product}
  \Phi(1;\lambda) =
  G(l_n,\lambda) \,\, L(g_n) \, \dots \, G(l_2,\lambda) \,\, L(g_2) \,\, G(l_1,\lambda) \,\, L(g_1) \,\,  G(l_0,\lambda)
  \left( \begin{smallmatrix} 1 \\ 0 \\ 0 \end{smallmatrix} \right).
\end{equation}

We define the Weyl functions $W$ and $Z$ of the dual cubic string to be
\begin{equation}
  \label{eq:weylfunctions-interval}
  W(\lambda) = -\frac{\phi_2(1;\lambda)}{\phi_3(1;\lambda)},
  \qquad
  Z(\lambda) = -\frac{\phi_1(1;\lambda)}{\phi_3(1;\lambda)}.
\end{equation}
It is clear from \eqref{eq:right-endpoint} that they are related
to the Weyl functions $\omega$ and $\zeta$ previously defined on the real
line (see \eqref{eq:weylfunctions}) as follows:
\begin{equation}
  \label{eq:weyl-relations}
  \begin{split}
    W(\lambda) &= -\frac{B_n(\lambda)}{A_n(\lambda)} = \omega(\lambda)
    = \sum_{k=1}^n \frac{b_k}{\lambda-\lambda_k},
    \\
    Z(\lambda) &=
    \frac{A_n(\lambda) - \lambda C_n(\lambda)}{2 \lambda A_n(\lambda)} =
    \frac{1}{2 \lambda} + \zeta(\lambda) =
    \frac{1}{2 \lambda} + \sum_{k=1}^n \frac{c_k}{\lambda-\lambda_k}.
  \end{split}
\end{equation}

\section{Relation to the Neumann-like cubic string}
\label{sec:neumann}

Kohlenberg, Lundmark and Szmigielski \cite{kohlenberg-lundmark-szmigielski}
studied the discrete cubic string with Neumann-like boundary conditions.
We will briefly recall some results from that paper,
with notation and sign conventions slightly altered to suit our needs here.
The spectral problem in question is
\begin{equation}
  \label{eq:cubic-neumann}
  \begin{split}
    \phi_{yyy}(y) &= -\lambda g(y)\phi(y) \quad\text{for $y\in\R$},\\
    \phi_y(-\infty) &= \phi_{yy}(-\infty) = 0,
    \qquad
    \phi_{yy}(\infty) = 0,
  \end{split}
\end{equation}
where $g = \sum_{k=0}^n g_k \, \delta_{y_k}$ is a discrete measure
with $n+1$ point masses $g_0,\dots,g_n$ at positions $y_0 < y_1 < \dots < y_n$;
between these points are $n$ finite intervals of length
$l_1,\dots,l_n$ (where $l_k = y_k-y_{k-1}$).
Since $\phi_{yyy} = 0$ away from the point masses,
the boundary conditions can equally well be written as
\begin{equation*}
    \phi_y(y_0^-) = \phi_{yy}(y_0^-) = 0,
    \qquad
    \phi_{yy}(y_n^+) = 0.
\end{equation*}
Using the normalization $\phi(-\infty)=1$ (or $\phi(y_0^-)=1$) and the notation
$\Phi = (\phi,\phi_y,\phi_{yy})^t$, one finds
\begin{equation}
  \label{eq:phi-matrix-product-neumann}
  \Phi(y_n^+;\lambda) =
  G(g_n,\lambda) \,\, L(l_n) \, \dots \, G(g_2,\lambda) \,\, L(l_2) \,\, G(g_1,\lambda) \,\, L(l_1) \,\,  G(g_0,\lambda)
  \left( \begin{smallmatrix} 1 \\ 0 \\ 0 \end{smallmatrix} \right),
\end{equation}
with matrices $G$ and $L$ as in \eqref{eq:jump-matrices} above.
Under the assumption that all $g_k > 0$,
the zeros of $\phi_{yy}(y_n^+;\lambda)$,
which constitute the spectrum,
are
\begin{equation*}
  0 = \lambda_0 < \lambda_1 < \dots < \lambda_n,
\end{equation*}
and the Weyl functions are
\begin{equation}
  \label{eq:weyl-neumann}
  \begin{split}
    W(\lambda) &= -\frac{\phi_y(y_n^+;\lambda)}{\phi_{yy}(y_n^+;\lambda)}
    = \sum_{k=1}^n \frac{b_k}{\lambda-\lambda_k},
    \\
    Z(\lambda) &= -\frac{\phi(y_n^+;\lambda)}{\phi_{yy}(y_n^+;\lambda)}
    = \frac{1}{\gamma \lambda} + \sum_{k=1}^n \frac{c_k}{\lambda-\lambda_k},
    \qquad \gamma = \sum_{k=0}^n g_k,
  \end{split}
\end{equation}
with all $b_k > 0$. They satisfy the identity
\begin{equation}
  \label{eq:ZWidentity}
  Z(\lambda) + Z(-\lambda) + W(\lambda) W(-\lambda) = 0,
\end{equation}
from which it follows, by taking the residue at $\lambda=\lambda_k$, that
\begin{equation}
  \label{eq:ck-again}
  c_k = \sum_{m=1}^n \frac{b_m b_k}{\lambda_m + \lambda_k}.
\end{equation}
Thus $Z(\lambda)$ is uniquely determined by the function $W(\lambda)$
and the constant~$\gamma$.

Now note that \eqref{eq:phi-matrix-product-neumann}
is exactly the same kind of relation as \eqref{eq:phi-matrix-product},
except that the roles of $g_k$ and $l_k$ are interchanged,
and the right endpoint is called $y=y_n^+$ instead of $y=1$.
The definitions of the Weyl functions \eqref{eq:weyl-neumann} also
correspond perfectly to the Weyl functions \eqref{eq:weylfunctions-interval}
for the dual cubic string.
Therefore, all the results above are also true in the setting of
the dual cubic string.
The assumption that the $n$ distances $l_k$ and the $n+1$ point masses $g_k$
are all positive for the Neumann cubic string corresponds of course to
the requirement that the $n$ point masses $g_k$ and the $n+1$ distances $l_k$
are positive for the dual cubic string.
The constant $\gamma = \sum_{k=0}^n g_k$ in the term $1/\gamma\lambda$
in \eqref{eq:weyl-neumann} corresponds
to the constant $2$ in the term $1/2\lambda$ in \eqref{eq:weyl-relations},
since $\sum_{k=0}^n l_k = 2$ is the length of the interval $-1<y<1$.
In summary:

\begin{theorem}
  \label{thm:translated-from-neumann}
  Assume that all point masses $g_k$ are positive.
  Then the discrete dual cubic string of Theorem~\ref{thm:liouville}
  has nonnegative and simple spectrum, with eigenvalues
  $0 = \lambda_0 < \lambda_1 < \dots < \lambda_n$,
  and its Weyl functions \eqref{eq:weylfunctions-interval} have positive
  residues and satisfy \eqref{eq:ZWidentity} and \eqref{eq:ck-again}.
  In particular, the second Weyl function~$Z(\lambda)$ is uniquely determined
  by the first Weyl function~$W(\lambda)$.
\end{theorem}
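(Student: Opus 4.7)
The plan is to deduce the theorem by direct translation from the Neumann cubic string result of \cite{kohlenberg-lundmark-szmigielski} recalled above. The preceding paragraphs already observe that the matrix products \eqref{eq:phi-matrix-product} and \eqref{eq:phi-matrix-product-neumann} coincide after swapping the roles of the masses $g_k$ and the interval lengths $l_k$, so the proof reduces to setting up the dictionary carefully and reading off the conclusions.

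First I would make the correspondence explicit. In the Neumann setup there are $n+1$ point masses $g_0,\dots,g_n$ and $n$ interval lengths $l_1,\dots,l_n$, all positive; in the dual setup of Theorem~\ref{thm:liouville} there are $n$ point masses $g_1,\dots,g_n$ and $n+1$ interval lengths $l_0,\dots,l_n$. By hypothesis the dual $g_k$ are positive, and the dual $l_k$ are automatically positive because $-1 = y_0 < y_1 < \dots < y_n < y_{n+1} = 1$. Identifying the Neumann $g_j$ with the dual $l_j$ and the Neumann $l_j$ with the dual $g_j$ brings the two matrix products into exact agreement, including the common left-endpoint normalization $\Phi = (1,0,0)^t$.

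Second, I would translate the individual claims. By \eqref{eq:right-endpoint}, the spectral condition $A_n(\lambda) = 0$ is equivalent to $\phi_3(1;\lambda) = 0$, which under the dictionary is the Neumann spectral condition $\phi_{yy}(y_n^+;\lambda) = 0$. Therefore the dual cubic string inherits the spectrum $0 = \lambda_0 < \lambda_1 < \dots < \lambda_n$, along with its simplicity. The Weyl functions \eqref{eq:weylfunctions-interval} correspond to \eqref{eq:weyl-neumann} under the swap, with the Neumann constant $\gamma = \sum_{k=0}^n g_k$ replaced by $\sum_{k=0}^n l_k = 2$, matching the $1/(2\lambda)$ term in \eqref{eq:weyl-relations}. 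The positivity of the residues $b_k$ of $W$, the identity \eqref{eq:ZWidentity}, and the formula \eqref{eq:ck-again} all transfer verbatim, and taking the residue of \eqref{eq:ZWidentity} at $\lambda = \lambda_k$ recovers each $c_k$ from the $b_j$ and $\lambda_j$, so that $Z$ is uniquely determined by $W$.

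The main obstacle, such as it is, is purely bookkeeping: tracking the off-by-one indexing that arises because the Neumann problem has $n+1$ masses and $n$ intervals while the dual problem has $n$ masses and $n+1$ intervals, and verifying that the distinguished zero eigenvalue $\lambda_0 = 0$ of the finite-interval problem does indeed appear on the dual side (as already noted after \eqref{eq:right-endpoint}) but not on the original real-line spectral problem. Once the dictionary is straight, no further analytic argument is required.
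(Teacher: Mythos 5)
Your proof is correct and follows exactly the paper's own argument: the theorem is obtained by observing that the matrix product \eqref{eq:phi-matrix-product} for the dual cubic string matches the Neumann product \eqref{eq:phi-matrix-product-neumann} after swapping the roles of masses and interval lengths, then reading off the spectral and Weyl-function properties from the Neumann results of \cite{kohlenberg-lundmark-szmigielski}, with $\gamma=\sum g_k$ replaced by $\sum l_k=2$. The bookkeeping you flag (the $n+1$ vs.\ $n$ count and the extra zero eigenvalue on the interval) is handled the same way in the paper's discussion preceding the theorem.
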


\section{Inverse spectral problem}
\label{sec:inverse}

The inverse spectral problem for the discrete dual cubic string
consists in recovering the positions and masses
$\left\{ y_k, g_k \right\}_{k=1}^n$
given the spectral data consisting of eigenvalues and residues
$\left\{ \lambda_k, b_k \right\}_{k=1}^n$
(or, equivalently, given the first Weyl function $W(\lambda)$).
The corresponding problem for the Neumann-like cubic string was
solved in \cite{kohlenberg-lundmark-szmigielski},
and we need only translate the results, as in Section~\ref{sec:neumann}.
See also \cite{lundmark-szmigielski-DPlong}
for more information about inverse problems of this kind and
\cite{bertola-gekhtman-smigielski-peakons-cauchy} for the underlying
theory of Cauchy biorthogonal polynomials.

To begin with, we state the result in terms of the bimoment determinants
$\detD{ab}{m}$ and~$\detDprime{m}$ defined below.
Things will become more explicit in the next section
(Corollary~\ref{cor:inverse-spectral-dual-cubic}), where the determinants
are expressed directly in terms of the $\lambda_k$ and $b_k$.

\begin{definition}
  Suppose $\mu$ is a measure on $\R_+$ (the positive part of the real line)
  such that its moments,
  \begin{equation}
    \label{eq:betaa}
    \beta_a=\int \kappa ^a \, d\mu(\kappa),
  \end{equation}
  and its bimoments with respect to the Cauchy kernel $K(x,y)=1/(x+y)$,
  \begin{equation}
    \label{eq:Iab}
    I_{ab} = I_{ba} = \iint \frac{\kappa^a \, \lambda^b}{\kappa+\lambda} \, d\mu(\kappa) \, d\mu(\lambda),
  \end{equation}
  are finite. For $m \ge 1$,
  let $\detD{ab}{m}$ denote the determinant of the $m \times m$ bimoment
  matrix which starts with $I_{ab}$ in the upper left corner:
  \begin{equation}
    \label{eq:Dabm}
    \detD{ab}{m} =
    \begin{vmatrix}
      I_{ab} & I_{a,b+1} & \dots & I_{a,b+m-1} \\
      I_{a+1,b} & I_{a+1,b+1} & \dots & I_{a+1,b+m-1} \\
      I_{a+2,b} & I_{a+2,b+1} & \dots & I_{a+2,b+m-1} \\
      \vdots &&& \vdots \\
      I_{a+m-1,b} & I_{a+m-1,b+1} & \dots & I_{a+m-1,b+m-1} \\
    \end{vmatrix}
    = \detD{ba}{m}.
  \end{equation}
  Let $\detD{ab}{0} = 1$, and $\detD{ab}{m} = 0$ for $m < 0$.

  Similarly, for $m \ge 2$,
  let $\detDprime{m}$ denote the $m \times m$ determinant
  \begin{equation}
    \label{eq:D'abm}
    \detDprime{m} =
    \begin{vmatrix}
     \beta_0 & I_{10} & I_{11} & \dots & I_{1,m-2} \\
     \beta_1 & I_{20} & I_{21} & \dots & I_{2,m-2} \\
     \beta_2 & I_{30} & I_{31} & \dots & I_{3,m-2} \\
     \vdots &&&& \vdots \\
     \beta_{m-1} & I_{m0} & I_{m1} & \dots & I_{m,m-2} \\
    \end{vmatrix},
  \end{equation}
  and define $\detDprime{1} = \beta_0$
  and $\detDprime{m} = 0$ for $m < 1$.
\end{definition}

\begin{theorem}
  \label{thm:inverse-spectral-dual-cubic}
  Given constants $0< \lambda_1 < \dots < \lambda_n$ and
  $b_1,\dots,b_n > 0$,
  define the spectral measure
  \begin{equation}
    \label{eq:spectral-measure}
    \mu = \sum_{i=1}^n b_i \, \delta_{\lambda_i},
  \end{equation}
  and let $I_{ab}$ be its bimoments,
  \begin{equation}
    I_{ab} = \iint \frac{\kappa^a \, \lambda^b}{\kappa+\lambda} \, d\mu(\kappa) \, d\mu(\lambda)
    = \sum_{i=1}^n \sum_{j=1}^n \frac{\lambda_i^a \lambda_j^b}{\lambda_i + \lambda_j}\,b_i b_j.
  \end{equation}
  Then the unique discrete dual cubic string (with positive masses $g_k$)
  having the Weyl function
  \begin{equation*}
    W(\lambda) = \sum_{k=1}^n \frac{b_k}{\lambda-\lambda_k}
    = \int\frac{d\mu(\kappa)}{\lambda-\kappa}
  \end{equation*}
  is given by
  \begin{equation}
    \label{eq:ykgk-bimoment}
    y_{k'} = \frac{\detD{00}{k} - \frac12 \detD{11}{k-1}}{\detD{00}{k} + \frac12 \detD{11}{k-1}},
    \qquad
    g_{k'} = 2 \frac{\detD{00}{k} + \frac12 \detD{11}{k-1}}{\detDprime{k}},
  \end{equation}
  where $k'=n+1-k$ for $k=0,\dots,n+1$.
  The distances between the masses are given by
  \begin{equation}
    \label{eq:lk-bimoment}
    l_{k'-1} = y_{k'} - y_{k'-1}
    = \frac{\Bigl( \detD{10}{k} \Bigr)^2}{\Bigl( \detD{00}{k} + \frac12 \detD{11}{k-1} \Bigr) \Bigl( \detD{00}{k+1} + \frac12 \detD{11}{k} \Bigr)}.
  \end{equation}
\end{theorem}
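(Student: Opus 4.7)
The plan is to invoke the inverse spectral result of Kohlenberg--Lundmark--Szmigielski \cite{kohlenberg-lundmark-szmigielski} for the Neumann-like cubic string and translate it to the dual cubic string setting by means of the duality described in Section~\ref{sec:neumann}. This translation is essentially automatic because the matrix product \eqref{eq:phi-matrix-product} for the dual cubic string has exactly the same structure as \eqref{eq:phi-matrix-product-neumann} for the Neumann-like problem, with the sole difference being that the roles of point masses $g_k$ and interval lengths $l_k$ are interchanged.

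First I would match the spectral data and Weyl functions. By Theorem~\ref{thm:translated-from-neumann}, the dual cubic string with all $g_k>0$ has simple nonnegative spectrum $0=\lambda_0<\lambda_1<\dots<\lambda_n$, positive residues $b_k$, and a second Weyl function $Z(\lambda)$ that is uniquely determined by $W(\lambda)$ via \eqref{eq:ZWidentity}--\eqref{eq:ck-again}, together with the fixed constant $\sum_{k=0}^n l_k = 2$ (which plays the role of $\gamma=\sum g_k$ in \eqref{eq:weyl-neumann}). Hence the data $(\lambda_k,b_k)_{k=1}^n$ determine and are determined by the pair $(W,Z)$ of Weyl functions of the dual cubic string, and the inverse problem is well posed.

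Next I would apply the reconstruction formulas from \cite{kohlenberg-lundmark-szmigielski}, which express the $(l_k,g_k)$ of the Neumann-like string in terms of the bimoment determinants $\detD{ab}{m}$ and $\detDprime{m}$ built from the spectral measure $\mu=\sum b_i\delta_{\lambda_i}$. Interchanging $l_k\leftrightarrow g_k$ converts these into formulas for the dual cubic string, and the index reversal $k'=n+1-k$ appearing in \eqref{eq:ykgk-bimoment}--\eqref{eq:lk-bimoment} reflects the fact that the Weyl functions are read off at the right endpoint $y=1$ while the matrix product \eqref{eq:phi-matrix-product} accumulates the masses and intervals starting from the left endpoint $y=-1$. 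The positions $y_{k'}$ are then obtained by summing the reconstructed lengths $l_k$ and normalizing so that $y_0=-1$ and $y_{n+1}=1$, which is what produces the ratio form in \eqref{eq:ykgk-bimoment}.

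The main obstacle is bookkeeping rather than mathematics: one must carefully track how the normalization $\phi(-\infty)=1$ versus $\Phi(-1;\lambda)=(1,0,0)^t$, the substitution $\gamma\to 2$, the interchange $g_k\leftrightarrow l_k$, and the reversal $k\mapsto n+1-k$ compose to produce the stated formulas, and verify that the positivity $g_k>0$ is preserved under the translation (this is where the hypothesis $b_k>0$ enters, guaranteeing positivity of the determinants $\detD{ab}{m}$ and $\detDprime{m}$ that appear in the denominators and in the expressions for $g_{k'}$ and $l_{k'-1}$). Uniqueness of the reconstructed string follows from the injectivity of the map $(y_k,g_k)\mapsto W(\lambda)$ already established in the Neumann-like setting, since Theorem~\ref{thm:translated-from-neumann} reduces the dual problem to that one.
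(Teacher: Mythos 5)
Your plan is essentially the paper's proof: exploit the structural identity between the matrix product \eqref{eq:phi-matrix-product} for the dual cubic string and \eqref{eq:phi-matrix-product-neumann} for the Neumann-like cubic string (interchange of $g_k$ and $l_k$, reindexing $k'=n+1-k$, and $\gamma = \sum l_k = 2$), then import the Kohlenberg--Lundmark--Szmigielski inverse-problem results. One caveat: the paper does not simply transcribe the final reconstruction formulas from \cite{kohlenberg-lundmark-szmigielski} and swap $g\leftrightarrow l$, but instead invokes the approximation-problem machinery there (Lemma~4.1, Theorems~4.2 and~4.15), extracts the lowest and highest coefficients of $Q=a^{(2k+1)}_{31}$ by Cramer's rule, and re-derives the $y$, $l$, and $g$ formulas with the help of Lewis Carroll's identity and $(\detDprime{k})^2 = 2\detD{10}{k}\detD{10}{k-1}$. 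This is not mere pedantry: the published formulas in \cite{kohlenberg-lundmark-szmigielski} (Corollary~4.17 and equation~(4.54) there) contain errors, corrected in the remark following the proof, so a literal translation as you propose would propagate those mistakes.
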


\begin{proof}
  For $0\le k \le n$, let $a^{(2k+1)}(\lambda)$
  be the product of the first $2k+1$ factors in \eqref{eq:phi-matrix-product},
  \begin{multline}
    \label{eq:a}
      a^{(2k+1)}(\lambda) =
      G(l_n,\lambda) \,\, L(g_n) \,\, G(l_{n-1},\lambda) \,\, L(g_{n-1}) \, \dots
      \\
      \dots \, G(l_{k'},\lambda) \,\, L(g_{k'}) \,\, G(l_{k'-1},\lambda),
  \end{multline}
  where $k'=n+1-k$.
  By Lemma~4.1 and Theorem~4.2 in
  \cite{kohlenberg-lundmark-szmigielski},
  the entries in the first column of $a = a^{(2k+1)}(\lambda)$,
  \begin{equation*}
    \begin{pmatrix} a_{11} \\ a_{21} \\ a_{31} \end{pmatrix}
    =:
    \begin{pmatrix} \widehat{P} \\ P \\ Q \end{pmatrix},
  \end{equation*}
  satisfy what in \cite{kohlenberg-lundmark-szmigielski} was called
  a ``Type I'' approximation problem.
  This means that $(\widehat{P}(\lambda),P(\lambda),Q(\lambda))$
  are polynomials in $\lambda$ of degree $k$, $k$, $k+1$, respectively,
  satisfying the normalization conditions
  \begin{equation*}
    \widehat{P}(0)=1,\qquad
    P(0)=0, \qquad
    Q(0)=0,
  \end{equation*}
  the approximation conditions
  \begin{equation*}
    Q(\lambda) W(\lambda) + P(\lambda) = O(1),\qquad
    Q(\lambda) Z(\lambda) + \widehat{P}(\lambda) = O(\lambda^{-1}),
    \qquad
    \text{as $\lambda\to\infty$},
  \end{equation*}
  and the symmetry condition
  \begin{equation*}
    Q(\lambda) Z(-\lambda) - P(\lambda) W(-\lambda) - \widehat{P}(\lambda)
    = O(\lambda^{-k-1}),
    \qquad
    \text{as $\lambda\to\infty$}.
  \end{equation*}
  According to Theorem~4.15 in \cite{kohlenberg-lundmark-szmigielski},
  this determines $(\widehat{P},P,Q)$ uniquely; in particular,
  the coefficients of
  $a^{(2k+1)}_{31}(\lambda) = Q(\lambda) = \sum_{i=1}^{k+1} q_i \lambda^i$
  are given by the nonsingular linear system
  \begin{equation}
    \label{eq:system-for-Q-explicit}
    \begin{pmatrix}
      I_{00}+\frac{1}{2} & I_{01} & \cdots & I_{0k}\\
      I_{10} & I_{11} & \cdots & I_{1k} \\
      I_{20} & I_{21} & \cdots & I_{2k} \\
      \vdots &&& \vdots \\
      I_{k0} & I_{k1} & \cdots & I_{kk}
    \end{pmatrix}
    \begin{pmatrix}
      q_1 \\ q_2 \\q_3\\ \vdots \\ q_{k+1}
    \end{pmatrix}
    =
    -\begin{pmatrix}
      1\\ 0\\0\\\vdots \\ 0
    \end{pmatrix}.
  \end{equation}
  From \eqref{eq:a} one finds that
  \begin{equation}
    \label{eq:entries-coeffs}
    \begin{split}
      a^{(2k+1)}_{31}(\lambda) &= (-\lambda) (l_n + l_{n-1} + \dots + l_{k'-1})
      + \dots
      \\ & \quad + (-\lambda)^{k+1} \left(
        \frac{g_n^2}{2} \frac{g_{n-1}^2}{2} \dots \frac{g_{k'}^2}{2}
        l_n l_{n-1} \dots l_{k'-1}
      \right),
    \end{split}
  \end{equation}
  and the lowest and highest coefficients are then extracted from
  \eqref{eq:system-for-Q-explicit} using Cramer's rule:
  \begin{equation}
    \label{eq:q-bottom-top}
    \begin{split}
      -q_1 &
      = \frac{\detD{11}{k}}{\detD{00}{k+1} + \frac12 \detD{11}{k}}
      = \sum_{j=k'-1}^n \!\!\! l_j
      = 1-y_{k'-1},
      \\
      (-1)^{k+1} q_{k+1} &
      = \frac{\detD{10}{k}}{\detD{00}{k+1} + \frac12 \detD{11}{k}}
      = \left( \prod_{j=k'}^n \frac{g_j^2 \, l_j}{2} \right) l_{k'-1}.
    \end{split}
  \end{equation}
  The first equation gives a formula for $y_{k'-1}$ right away,
  and of course also for $y_{k'}$ (with $1\le k \le n+1$) after renumbering.
  This formula \eqref{eq:ykgk-bimoment} for $y_{k'}$ holds also for $k=0$,
  since it gives $y_{0'} = y_{n+1} = +1$ because of the way
  $\detD{ab}{m}$ is defined for $m\le 0$.
  (That it indeed gives $y_{(n+1)'} = y_0 = -1$ when $k=n+1$ is not as obvious;
  this depends on $\detD{00}{n+1}$ being zero when the measure $\mu$
  is supported on only $n$ points.
  See \cite[Appendix~B]{kohlenberg-lundmark-szmigielski}.)
  Subtraction gives a formula for $l_{k'-1}$ which simplifies to
  \eqref{eq:lk-bimoment} with the help of ``Lewis Carroll's identity''
  \cite[Prop.~10]{krattenthaler} applied to the determinant
  $\detD{00}{k+1}$:
  \begin{equation}
    \label{eq:LewisCarroll}
    \detD{00}{k+1} \detD{11}{k-1}
    = \detD{00}{k} \detD{11}{k} - \detD{10}{k} \detD{01}{k}.
  \end{equation}
  Finally, the second formula in \eqref{eq:q-bottom-top},
  divided by the corresponding formula with $k$ replaced by \mbox{$k-1$},
  gives an expression for $\frac12 \, g_{k'}^2 \, l_{k'-1}$
  from which one obtains
  \begin{equation*}
    g_{k'} = \Bigl( \detD{00}{k} + \frac12 \detD{11}{k-1} \Bigr)
    \sqrt{\frac{2}{\detD{10}{k} \detD{10}{k-1}}}.
  \end{equation*}
  The formula for $g_{k'}$ presented in \eqref{eq:ykgk-bimoment}
  now follows from the identity
  $(\detDprime{k})^2 = 2 \detD{10}{k} \detD{10}{k-1}$
  and the positivity of $\detDprime{k}$,
  which are immediate consequences of \eqref{eq:heine} below.
  (The determinant identity can also be proved directly
  by expanding $\detDprime{k}$ along the first column,
  squaring, and using $\beta_i \beta_j = I_{i+1,j} + I_{i,j+1}$.)
\end{proof}

\begin{remark}
  We take this opportunity to correct a couple of mistakes in
  \cite{kohlenberg-lundmark-szmigielski}:
  the formula in Corollary~4.17 should read
  $[Q_{3k+2}] = (-1)^{k+1} \mathcal{D}_k / \mathcal{A}_{k+1}$,
  and consequently it should be
  $\displaystyle m_{n-k}=\frac{\mathcal{D}^2_k}{2\mathcal{A}_{k+1}\mathcal{A}_k}$
  in (4.54).
\end{remark}

\section{Evaluation of bimoment determinants}
\label{sec:determinants}

The aim of this section is just to state some formulas for the bimoment
determinants $\detD{ab}{m}$ and $\detDprime{m}$, taken from
\cite[Lemma~4.10]{lundmark-szmigielski-DPlong}
and
\cite[Appendix~B]{kohlenberg-lundmark-szmigielski}.
Quite a lot of notation is needed.

\begin{definition}
  \label{def:symmfcn}
  For $k \ge 1$, let
  \begin{equation}
    \label{eq:integrals-tuv}
    \begin{split}
      t_k &= \frac{1}{k!} \int_{\R^k} \frac{\Delta (x)^2}{\Gamma(x)} \,
      \frac{d\mu^k(x)}{x_1 x_2 \ldots x_k},
      \\
      u_k &=
      \frac{1}{k!}
      \int_{\R^k} \frac{\Delta (x)^2}{\Gamma(x)} d\mu^k(x),
      \\
      v_k &=
      \frac{1}{k!}
      \int_{\R^k} \frac{\Delta(x)^2}{\Gamma(x)} \, x_1 x_2 \ldots x_k \, d\mu^k(x),
    \end{split}
  \end{equation}
  where
  \begin{equation}
    \begin{split}
      \Delta(x)&=\Delta(x_1,\ldots,x_k)=\prod_{i<j}(x_i-x_j),\\
      \Gamma(x)&=\Gamma(x_1,\ldots,x_k)=\prod_{i<j}(x_i+x_j).
    \end{split}
  \end{equation}
  (When $k=0$ or~$1$, let $\Delta(x)=\Gamma(x)=1$.)
  Also let $t_0=u_0=v_0=1$, and $t_k=u_k=v_k=0$ for $k<0$.
\end{definition}

When $\mu = \sum_{k=1}^n b_k \, \delta_{\lambda_k}$, the integrals
$t_k$, $u_k$, $v_k$ reduce to the sums $T_k$, $U_k$, $V_k$ below.

\begin{definition}
  \label{def:notation-galore}
  For $k\ge 0$, let $\binom{[1,n]}{k}$ denote the set of $k$-element subsets
  $I=\{ i_1<\dots<i_k \}$ of the integer interval $[1,n]=\{ 1,\dots,n \}$.
  For $I\in\binom{[1,n]}{k}$, let
  \begin{equation}
    \Delta_I=\Delta(\lambda_{i_1},\dots,\lambda_{i_k}),
    \qquad
    \Gamma_I=\Gamma(\lambda_{i_1},\dots,\lambda_{i_k}),
  \end{equation}
  with the special cases
  $\Delta_{\emptyset}=\Gamma_{\emptyset}=\Delta_{\{i\}}=\Gamma_{\{i\}}=1$.
  Furthermore, let
  \begin{equation*}
    \lambda_I = \prod_{i\in I} \lambda_i,
    \qquad
    b_I = \prod_{i\in I} b_i,
  \end{equation*}
  with $\lambda_{\emptyset}=b_{\emptyset}=1$.
  Using the abbreviation $\Psi_I = \ds\frac{\Delta_I^2}{\Gamma_I}$, let
  \begin{equation}
    T_k = \sum_{I\in\binom{[1,n]}{k}} \frac{\Psi_I b_I}{\lambda_I},
    \qquad
    U_k = \sum_{I\in\binom{[1,n]}{k}} \Psi_I b_I,
    \qquad
    V_k = \sum_{I\in\binom{[1,n]}{k}} \Psi_I \lambda_I b_I,
  \end{equation}
  and
  \begin{equation}
    \begin{split}
      W_k &=
      \begin{vmatrix}
        U_k & V_{k-1} \\
        U_{k+1} & V_k
      \end{vmatrix}
      = U_k V_k - U_{k+1} V_{k-1},
      \\
      Z_k &=
      \begin{vmatrix}
        T_k & U_{k-1} \\
        T_{k+1} & U_k
      \end{vmatrix}
      = T_k U_k - T_{k+1} U_{k-1}.
    \end{split}
  \end{equation}
  (To be explicit, $U_0=V_0=T_0=1$, and $U_k=V_k=T_k=0$ for $k<0$ or $k>n$.)
\end{definition}

We can now finally state the promised formulas for the bimoment determinants.

\begin{lemma}
  For all $m$,
  \begin{equation}
    \label{eq:heine}
    \begin{aligned}
      \detD{00}{m} &= \frac{\begin{vmatrix} t_m & u_{m-1} \\ t_{m+1} & u_m \end{vmatrix}}{2^m},
      &
      \detD{11}{m} &= \frac{\begin{vmatrix} u_m & v_{m-1} \\ u_{m+1} & v_m \end{vmatrix}}{2^m},
      \\
      \detD{10}{m} &= \frac{\left( u_m \right)^2}{2^m},
      &
      \detDprime{m} &= \frac{u_m u_{m-1}}{2^{m-1}}.
    \end{aligned}
  \end{equation}
  In the discrete case when $\ds\mu = \sum_{k=1}^n b_k \, \delta_{\lambda_k}$,
  this reduces to
  \begin{equation}
    \label{eq:heine2}
    \detD{00}{m} = \frac{Z_m}{2^m},
    \quad
    \detD{11}{m} = \frac{W_m}{2^m},
    \quad
    \detD{10}{m} = \frac{\left( U_m \right)^2}{2^m},
    \quad
    \detDprime{m} = \frac{U_m U_{m-1}}{2^{m-1}}.
  \end{equation}
\end{lemma}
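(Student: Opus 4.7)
The plan is to reduce everything to the classical Heine/Andreief integral representation of bimoment determinants combined with the Cauchy determinant identity. Since the Lemma is expressly stated as a compilation of results from \cite{lundmark-szmigielski-DPlong} (Lemma~4.10) and \cite{kohlenberg-lundmark-szmigielski} (Appendix~B), my aim is to indicate how those arguments slot together rather than to redo the algebra in full.

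First I would apply Andreief's identity (a.k.a.\ the Heine formula) twice, together with the Cauchy determinant $\det(1/(\kappa_i+\lambda_j))=\Delta(\kappa)\Delta(\lambda)/\prod_{i,j}(\kappa_i+\lambda_j)$, to rewrite each $\detD{ab}{m}$ as a $2m$-fold integral
\begin{equation*}
\detD{ab}{m} = \frac{1}{(m!)^2}\iint \frac{(\kappa_1\cdots\kappa_m)^a (\lambda_1\cdots\lambda_m)^b \,\Delta(\kappa)^2\,\Delta(\lambda)^2}{\prod_{i,j=1}^m (\kappa_i+\lambda_j)}\,d\mu^m(\kappa)\,d\mu^m(\lambda),
\end{equation*}
with an analogous representation of $\detDprime{m}$ in which the first column of $\beta$'s contributes an extra monomial factor. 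The product identity $\Gamma(\kappa_1,\dots,\kappa_m,\lambda_1,\dots,\lambda_m)=\Gamma(\kappa)\Gamma(\lambda)\prod_{i,j}(\kappa_i+\lambda_j)$ is then the bridge that converts the Cauchy kernel into the $\Gamma$-denominators appearing in the definitions of $t_k$, $u_k$, $v_k$.

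For $\detD{10}{m}$ and $\detDprime{m}$ the resulting integrand is symmetric under swapping the $\kappa$- and $\lambda$-blocks, so the double integral splits into a product of two single integrals, yielding $u_m^2/2^m$ and $u_m u_{m-1}/2^{m-1}$ respectively; the powers of $2$ encode the merging of the two variable sets (coming from $\Gamma(\kappa,\kappa)=2^m(\kappa_1\cdots\kappa_m)\Gamma(\kappa)^2$). For $\detD{00}{m}$ and $\detD{11}{m}$ this symmetry is broken by the extra monomials in the numerator, and the $2\times 2$ determinant structure of \eqref{eq:heine} emerges only after applying a three-term contiguous identity that allows rows of the integrand to be exchanged between the $t$-, $u$- and $v$-families. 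This last step is where the real work lies and is the main obstacle; I would import the detailed verification from the two references above rather than reproduce it. The discrete specialization \eqref{eq:heine2} then follows at once by evaluating the integrals $t_k, u_k, v_k$ against $\mu=\sum_{i=1}^n b_i\,\delta_{\lambda_i}$, the $1/k!$ factor cancelling the overcount from ordering tuples within each subset $I\in\binom{[1,n]}{k}$.
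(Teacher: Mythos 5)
The paper does not prove this lemma; it simply cites \cite[Lemma~4.10]{lundmark-szmigielski-DPlong} and \cite[Appendix~B]{kohlenberg-lundmark-szmigielski}, so strictly speaking the ``paper's own proof'' is just a citation. Your opening steps --- using Andreief's identity twice together with the Cauchy determinant to obtain the $2m$-fold representation
\begin{equation*}
\detD{ab}{m} = \frac{1}{(m!)^2}\iint \frac{(\kappa_1\cdots\kappa_m)^a (\lambda_1\cdots\lambda_m)^b \,\Delta(\kappa)^2\,\Delta(\lambda)^2}{\prod_{i,j}(\kappa_i+\lambda_j)}\,d\mu^m(\kappa)\,d\mu^m(\lambda),
\end{equation*}
and then using $\Gamma(\kappa,\lambda)=\Gamma(\kappa)\Gamma(\lambda)\prod_{i,j}(\kappa_i+\lambda_j)$ --- are both standard and correct, and this is indeed the right starting point.

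The gap is in the sentence claiming that ``for $\detD{10}{m}$ and $\detDprime{m}$ the resulting integrand is symmetric under swapping the $\kappa$- and $\lambda$-blocks, so the double integral splits into a product of two single integrals.'' Both parts of this are wrong. First, with $a=1,b=0$ the numerator carries a factor $\kappa_1\cdots\kappa_m$, so the integrand is visibly \emph{not} symmetric in $\kappa\leftrightarrow\lambda$; one can symmetrize by replacing $\kappa_1\cdots\kappa_m$ with $\frac12(\kappa_1\cdots\kappa_m+\lambda_1\cdots\lambda_m)$, but that only buys one factor of $\frac12$, not $2^{-m}$, and the denominator $\Gamma(\kappa,\lambda)$ still couples the two variable blocks. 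Second, and more fundamentally, a symmetric integrand $F(\kappa,\lambda)=F(\lambda,\kappa)$ does \emph{not} cause the iterated integral to split into a product $\bigl(\int f\bigr)^2$; factorization requires the integrand to actually be of product form $f(\kappa)g(\lambda)$, which it is not once $\Gamma(\kappa,\lambda)$ sits in the denominator. So the claimed mechanism cannot deliver $\detD{10}{m}=u_m^2/2^m$ or $\detDprime{m}=u_m u_{m-1}/2^{m-1}$. (As a side note, the auxiliary identity you quote, $\Gamma(\kappa,\kappa)=2^m(\kappa_1\cdots\kappa_m)\Gamma(\kappa)^2$, is also off by a square: counting all pairs among the $2m$ duplicated variables gives $\Gamma(\kappa,\kappa)=2^m(\kappa_1\cdots\kappa_m)\Gamma(\kappa)^4$.) The real derivation in the cited references does not go by factorization at all; one viable route --- hinted at in the paper itself --- is to expand $\detDprime{m}$ along its first column, square, and exploit the contiguous relation $\beta_i\beta_j = I_{i+1,j}+I_{i,j+1}$ together with Lewis--Carroll-type determinant identities, and this needs to be done for all four formulas, not just the two you flag as difficult. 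As it stands, your sketch replaces a genuinely hard combinatorial step with an assertion that happens to give the right answer but is not justified.
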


\begin{corollary}
  \label{cor:inverse-spectral-dual-cubic}
  The solution to the inverse spectral problem for the discrete dual cubic string
  (Theorem~\ref{thm:inverse-spectral-dual-cubic}) can be expressed as
  \begin{gather}
    \label{eq:ykgk-sums}
    y_{k'} = \frac{Z_k - W_{k-1}}{Z_k + W_{k-1}},
    \qquad
    g_{k'} = \frac{Z_k + W_{k-1}}{U_k U_{k-1}},
    \\
    \label{eq:lk-sums}
    l_{k'-1} = y_{k'} - y_{k'-1}
    = \frac{2 \left( U_k \right)^4}{(Z_k + W_{k-1})(Z_{k+1} + W_k)}.
  \end{gather}
\end{corollary}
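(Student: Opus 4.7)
The plan is to derive this corollary by direct substitution: Theorem~\ref{thm:inverse-spectral-dual-cubic} already expresses $y_{k'}$, $g_{k'}$ and $l_{k'-1}$ in terms of the bimoment determinants $\detD{ab}{m}$ and $\detDprime{m}$, and the discrete-case identities \eqref{eq:heine2} rewrite each of these determinants as one of the sums $Z_m, W_m, U_m$ divided by a power of~$2$. So the entire proof amounts to plugging in and simplifying, the only bookkeeping being the powers of~$2$ that appear.

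Concretely, I would proceed in three short steps. First, for $y_{k'}$, substitute $\detD{00}{k} = Z_k/2^k$ and $\tfrac12 \detD{11}{k-1} = W_{k-1}/2^k$ into \eqref{eq:ykgk-bimoment}; the common factor $1/2^k$ in numerator and denominator cancels, giving $(Z_k - W_{k-1})/(Z_k + W_{k-1})$. Second, for $g_{k'}$, use in addition $\detDprime{k} = U_k U_{k-1}/2^{k-1}$, so that
\begin{equation*}
  g_{k'} = 2\,\frac{(Z_k+W_{k-1})/2^k}{U_k U_{k-1}/2^{k-1}}
  = \frac{Z_k+W_{k-1}}{U_k U_{k-1}},
\end{equation*}
where the factor of $2$ out front is exactly balanced by the difference $2^{k-1}/2^k = 1/2$. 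Third, for $l_{k'-1}$, substitute $\detD{10}{k} = (U_k)^2/2^k$ and the expressions $\detD{00}{k}+\tfrac12\detD{11}{k-1} = (Z_k+W_{k-1})/2^k$ and $\detD{00}{k+1}+\tfrac12\detD{11}{k} = (Z_{k+1}+W_k)/2^{k+1}$ into \eqref{eq:lk-bimoment}; the numerator contributes $(U_k)^4/2^{2k}$ and the denominator contributes $(Z_k+W_{k-1})(Z_{k+1}+W_k)/2^{2k+1}$, and the surviving factor is $2^{2k+1}/2^{2k} = 2$, yielding exactly \eqref{eq:lk-sums}.

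There is no real obstacle in this derivation; the only risk is an arithmetic slip in balancing the powers of~$2$, so I would double-check each exponent carefully. All the substantive work, namely the Heine-type evaluation of the bimoment determinants and the symmetric-function identities behind the $Z_m, W_m, U_m$ shorthand, is already contained in the lemma above and borrowed verbatim from \cite{lundmark-szmigielski-DPlong} and \cite{kohlenberg-lundmark-szmigielski}. Thus the proof is simply: combine Theorem~\ref{thm:inverse-spectral-dual-cubic} with the discrete evaluations in \eqref{eq:heine2}, and simplify.
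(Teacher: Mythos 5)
Your proof is correct, and it is exactly what the paper intends: Corollary~\ref{cor:inverse-spectral-dual-cubic} follows from Theorem~\ref{thm:inverse-spectral-dual-cubic} by substituting the discrete evaluations \eqref{eq:heine2} of the bimoment determinants and cancelling powers of~$2$, which is precisely the bookkeeping you carried out. The only substantive inputs are the lemma (quoted from prior work) and the theorem, and you cite and use both appropriately.
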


The expression $W_k$ can be evaluated explicitly in terms of $\lambda_k$
and $b_k$, although the formula is somewhat involved
\cite[Lemma~2.20]{lundmark-szmigielski-DPlong}:
\begin{equation}
  \label{eq:Wk-explicit}
  \begin{split}
    W_k
    &= \sum_{I\in\binom{[1,n]}{k}} \frac{\Delta_I^4}{\Gamma_I^2} \lambda_I b_I^2
    \\
    &+
    \sum_{m=1}^k
    \sum_{\substack{I\in\binom{[1,n]}{k-m} \\
        J\in\binom{[1,n]}{2m} \\
        I \cap J = \emptyset}}
    b_I^2 b_J
    \Biggl\{
    2^{m+1}
    \frac{\Delta_I^4 \Delta_{I,J}^2 \lambda_{I \cup J}}{\Gamma_I \, \Gamma_{I \cup J}}
    \Biggl(
    \!\!\!
    \sum_{\substack{C \cup D = J \\ \abs{C}=\abs{D}=m \\ \min(C)<\min(D)}} \Delta_C^2 \Delta_D^2 \Gamma_C \Gamma_D
    \Biggr)
    \Biggr\},
  \end{split}
\end{equation}
where
$\Delta^2_{I,J}=\ds\prod_{i\in I, j\in J}(\lambda_{i}-\lambda_{j})^2$.
The corresponding formula for $Z_k$ is obtained by replacing $b_i$ with
$b_i/\lambda_i$ everywhere.

\section{The multipeakon solution}

In order to obtain the solution to the inverse spectral problem on
the real line, which provides the multipeakon solution,
we merely have to map the formulas for the interval
(Corollary~\ref{cor:inverse-spectral-dual-cubic})
back to the line via the Liouville transformation \eqref{eq:m-g-discrete}.

We remind the reader that in this paper we primarily study the pure
peakon case where it is assumed that all $m_k>0$ and also that
$x_1<\dots<x_n$.
This assumption guarantees that the solutions are globally defined in
time (Theorem~\ref{thm:globalexistence}) and, regarding the spectral data,
that all $b_k>0$ and $0 < \lambda_1 < \dots < \lambda_n$
(Theorem~\ref{thm:translated-from-neumann}).
Details regarding mixed peakon-antipeakon solutions are left for
future research, but we point out that since the velocity
$\dot x_k = u(x_k)^2$ is always nonnegative,
Novikov antipeakons move to the \emph{right} just like peakons
(unlike the $b$-family \eqref{eq:b-family},
where pure peakons move to the right and antipeakons to the left,
if they are sufficiently far apart).
Nevertheless, peakons and antipeakons may collide after finite time
also for the Novikov equation,
causing division by zero in the solution formula for $m_k$ in
\eqref{eq:n-peakon-solution} below,
and this breakdown leads to the usual subtle questions regarding
continuation of the solution beyond the collision.

\begin{theorem}
\label{thm:explicit_peakons}
  In the notation of Section~\ref{sec:determinants},
  the $n$-peakon solution of Novikov's equation is given by
  \begin{equation}
    \label{eq:n-peakon-solution}
    x_{k'} = \frac12 \ln\frac{Z_k}{W_{k-1}},
    \qquad
    m_{k'} = \frac{\ds\sqrt{Z_k W_{k-1}}}{U_k U_{k-1}},
  \end{equation}
  where $k' = n+1-k$ for $k=1,\dots,n$,
  and where the time evolution is given by
  \begin{equation}
    b_k(t) = b_k(0) \, e^{t/\lambda_k}.
  \end{equation}
\end{theorem}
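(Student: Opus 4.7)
The plan is to combine three ingredients already assembled in the paper: the inverse spectral formulas for the discrete dual cubic string on $(-1,1)$ in Corollary~\ref{cor:inverse-spectral-dual-cubic}, the Liouville transformation of Theorem~\ref{thm:liouville} which relates the dual cubic string to the spatial part of the Novikov Lax pair on the real line, and the time evolution of the residues $b_k$ derived in~\eqref{eq:residue-evolution}. Essentially the theorem just records what happens when the interval formulas are pushed back to the line through the substitution $y = \tanh x$, $g = 2 m \cosh x$.

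First I would start from Corollary~\ref{cor:inverse-spectral-dual-cubic}, which determines the interval data
\begin{equation*}
y_{k'} = \frac{Z_k - W_{k-1}}{Z_k + W_{k-1}},
\qquad
g_{k'} = \frac{Z_k + W_{k-1}}{U_k U_{k-1}}
\end{equation*}
from the spectral data $\{\lambda_k,b_k\}$. Inverting the first relation through $x = \tfrac12 \ln\frac{1+y}{1-y}$ and using that $1 + y_{k'}$ and $1 - y_{k'}$ simplify to $2 Z_k/(Z_k + W_{k-1})$ and $2 W_{k-1}/(Z_k + W_{k-1})$ respectively, the formula $x_{k'} = \tfrac12 \ln(Z_k / W_{k-1})$ drops out at once. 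Then $\cosh x_{k'} = (1 - y_{k'}^2)^{-1/2} = (Z_k + W_{k-1})/(2\sqrt{Z_k W_{k-1}})$, so that
\begin{equation*}
m_{k'} = \frac{g_{k'}}{2 \cosh x_{k'}} = \frac{\sqrt{Z_k W_{k-1}}}{U_k U_{k-1}},
\end{equation*}
as claimed.

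For the time evolution I would invoke the fact that the eigenvalues $\lambda_k$ are constants of motion while the residues obey $\dot b_k = b_k / \lambda_k$ by~\eqref{eq:residue-evolution}, which integrates to $b_k(t) = b_k(0)\, e^{t/\lambda_k}$. Since $U_k$, $W_k$, $Z_k$ are polynomial in the $b_k$ and $\lambda_k$ only, substituting this time-dependence into the static formulas above yields the full multipeakon solution as an explicit function of~$t$.

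There is no serious obstacle: the argument is mechanical once the deeper results (the inverse spectral construction for the discrete dual cubic string, the Liouville transformation linking it to the Novikov spectral problem, and the Lax-pair based evolution of the Weyl functions) are in hand. The only points worth a moment's care are verifying that the positivity of the $b_k$ and $\lambda_k$ supplied by Theorem~\ref{thm:translated-from-neumann} makes $\sqrt{Z_k W_{k-1}}$ real and $\ln(Z_k/W_{k-1})$ well-defined on the whole pure peakon sector, and remarking that the boundary indices $k=0$ and $k=n+1$ of the interval problem do not contribute peakon data on the line since they correspond to $y=\pm 1$, i.e.\ $x=\pm\infty$.
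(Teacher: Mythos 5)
Your proposal is correct and follows essentially the same route as the paper's own proof: invert the Liouville coordinate change $y=\tanh x$, $g=2m\cosh x$ to get $x_k=\tfrac12\ln\frac{1+y_k}{1-y_k}$ and $m_k=\tfrac{g_k}{2}\sqrt{1-y_k^2}$, substitute the inverse-spectral formulas of Corollary~\ref{cor:inverse-spectral-dual-cubic}, and cite~\eqref{eq:residue-evolution} for the time evolution of the $b_k$. The paper is merely more terse, stating that the result follows ``at once'' from the substitution; your intermediate simplifications of $1\pm y_{k'}$ and $\cosh x_{k'}$ spell out exactly that step.
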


\begin{proof}
  The inverse of the coordinate transformation \eqref{eq:m-g-discrete} is
  \begin{equation*}
    x_k = \frac12 \ln\frac{1+y_k}{1-y_k},
    \qquad
    m_k = \frac{g_k \sqrt{1-y_k^2}}{2},
  \end{equation*}
  which upon inserting \eqref{eq:ykgk-sums} gives
  \eqref{eq:n-peakon-solution} at once.
  The evolution of $b_k$ comes from equation \eqref{eq:residue-evolution}.
\end{proof}

\begin{example}
  The two-peakon solution is
  \begin{equation}
    \label{eq:twopeakon}
    \begin{split}
      x_1 = \frac12 \ln\frac{Z_2}{W_1} &= \frac12 \ln \frac{\ds \frac{(\lambda_1-\lambda_2)^4}{(\lambda_1+\lambda_2)^2 \lambda_1 \lambda_2} \, b_1^2 b_2^2}{\ds \lambda_1 \, b_1^2 + \lambda_2 \, b_2^2 + \frac{4 \, \lambda_1 \lambda_2}{\lambda_1+\lambda_2} \, b_1 b_2}, \\
      x_2 = \frac12 \ln\frac{Z_1}{W_0} &= \frac12 \ln \left( \frac{b_1^2}{\lambda_1} + \frac{b_2^2}{\lambda_2} + \frac{4}{\lambda_1+\lambda_2} \, b_1 b_2 \right), \\
      m_1 = \frac{\ds\sqrt{Z_2 W_1}}{U_2 U_1} &= \frac{\ds \left[ \frac{(\lambda_1 - \lambda_2)^4 \, b_1^2 b_2^2}{(\lambda_1 + \lambda_2)^2 \lambda_1 \lambda_2} \left( \lambda_1 \, b_1^2 + \lambda_2 \, b_2^2 + \frac{4 \, \lambda_1 \lambda_2}{\lambda_1+\lambda_2} \, b_1 b_2 \right) \right]^{1/2}}{\ds \frac{(\lambda_1 - \lambda_2)^2 \, b_1 b_2}{\lambda_1 + \lambda_2} \, (b_1+b_2)} \\
      &= \frac{\ds \left( \lambda_1 \, b_1^2 + \lambda_2 \, b_2^2 + \frac{4 \, \lambda_1 \lambda_2}{\lambda_1+\lambda_2} \, b_1 b_2 \right)^{1/2}}{\ds \sqrt{\lambda_1 \lambda_2} \, (b_1+b_2)}, \\
      m_2 = \frac{\ds\sqrt{Z_1 W_0}}{U_1 U_0} &= \frac{\left( \ds \frac{b_1^2}{\lambda_1} + \frac{b_2^2}{\lambda_2} + \frac{4}{\lambda_1+\lambda_2} \, b_1 b_2 \right)^{1/2}}{b_1+b_2},
    \end{split}
  \end{equation}
  where the simpler of the two expressions for $m_1$ is obtained under the
  assumption that all spectral data are positive,
  and therefore only can be trusted in the pure peakon case.
  This way of writing the solution is simpler and more explicit
  than that found in \cite{hone-wang-cubic-nonlinearity}.
  In order to translate \eqref{eq:twopeakon} to the notation used there,
  write $(q_k,p_k)$ instead of $(x_k,m_k)$,
  $c_k$ instead of $1/\lambda_k$,
  and $t_0$ instead of
  $(\lambda_1^{-1} - \lambda_2^{-1})^{-1} \ln \frac{b_2(0)}{b_2(0)}$;
  then $\tanh T = (b_1-b_2)/(b_1+b_2)$
  and $\cosh^{-2} T = 4 b_1 b_2/(b_1+b_2)^2$,
  where $T = \frac12(c_1-c_2)(t-t_0)$.
\end{example}

\begin{example}
  The three-peakon solution is
  \begin{equation}
    \label{eq:threepeakon}
    \begin{aligned}
      x_1 &= \frac12 \ln \frac{Z_3}{W_2},
      & x_2 &= \frac12 \ln \frac{Z_2}{W_1},
      & x_3 &= \frac12 \ln \frac{Z_1}{W_0},
      \\
      m_1 &= \frac{\ds\sqrt{Z_3 W_2}}{U_3 U_2},
      & m_2 &= \frac{\ds\sqrt{Z_2 W_1}}{U_2 U_1},
      & m_3 &= \frac{\ds\sqrt{Z_1 W_0}}{U_1 U_0},
    \end{aligned}
  \end{equation}
  where $U_0=W_0=1$,
  \begin{equation}
    \begin{split}
      U_1 &= b_1+b_2+b_3,
      \\
      U_2 &= \Psi_{12} \, b_1 b_2 + \Psi_{13} \, b_1 b_3 + \Psi_{23} \, b_2 b_3,
      \\
      U_3 &= \Psi_{123} \, b_1 b_2 b_3,
    \end{split}
  \end{equation}
  \begin{equation}
    \begin{split}
      W_1 &=
      \lambda_1 \, b_1^2 + \lambda_2 \, b_2^2 + \lambda_3 \, b_3^2
      \\
      & \quad +
      \frac{4 \, \lambda_1 \lambda_2}{\lambda_1+\lambda_2} \, b_1 b_2 +
      \frac{4 \, \lambda_1 \lambda_3}{\lambda_1+\lambda_3} \, b_1 b_3 +
      \frac{4 \, \lambda_2 \lambda_3}{\lambda_2+\lambda_3} \, b_2 b_3,
      \\[1ex]
      W_2 &= \Psi_{12}^2 \, \lambda_1 \lambda_2 \, b_1^2 b_2^2
      + \Psi_{13}^2 \, \lambda_1 \lambda_3 \, b_1^2 b_3^2
      + \Psi_{23}^2 \, \lambda_2 \lambda_3 \, b_2^2 b_3^2
      \\
      & \quad
      + \frac{4 \, \Psi_{13} \Psi_{23} \, \lambda_1 \lambda_2 \lambda_3}{\lambda_1+\lambda_2} \, b_1 b_2 b_3^2
      + \frac{4 \, \Psi_{12} \Psi_{23} \, \lambda_1 \lambda_2 \lambda_3}{\lambda_1+\lambda_3} \, b_1 b_2^2 b_3
      \\
      & \quad
      + \frac{4 \, \Psi_{12} \Psi_{13} \, \lambda_1 \lambda_2 \lambda_3}{\lambda_2+\lambda_3} \, b_1^2 b_2 b_3,
    \end{split}
  \end{equation}
  \begin{equation}
    \begin{split}
      Z_1 &=
      \frac{b_1^2}{\lambda_1} + \frac{b_2^2}{\lambda_2} + \frac{b_3^2}{\lambda_3} +
      \frac{4}{\lambda_1+\lambda_2} \, b_1 b_2 +
      \frac{4}{\lambda_1+\lambda_3} \, b_1 b_3 +
      \frac{4}{\lambda_2+\lambda_3} \, b_2 b_3,
      \\[1ex]
      Z_2 &= \frac{\Psi_{12}^2}{\lambda_1 \lambda_2} \, b_1^2 b_2^2
      + \frac{\Psi_{13}^2}{\lambda_1 \lambda_3} \, b_1^2 b_3^2
      + \frac{\Psi_{23}^2}{\lambda_2 \lambda_3} \, b_2^2 b_3^2
      \\
      & \quad
      + \frac{4 \, \Psi_{13} \Psi_{23}}{(\lambda_1+\lambda_2) \lambda_3} \, b_1 b_2 b_3^2
      + \frac{4 \, \Psi_{12} \Psi_{23}}{(\lambda_1+\lambda_3) \lambda_2} \, b_1 b_2^2 b_3
      + \frac{4 \, \Psi_{12} \Psi_{13}}{(\lambda_2+\lambda_3) \lambda_1} \, b_1^2 b_2 b_3,
      \\[1ex]
      Z_3 &=
      \frac{\Psi_{123}^2}{\lambda_1 \lambda_2 \lambda_3} \, b_1^2 b_2^2 b_3^2,
    \end{split}
  \end{equation}
  and
  \begin{equation}
    \begin{gathered}
      \Psi_{12} = \frac{(\lambda_1 - \lambda_2)^2}{\lambda_1 + \lambda_2},
      \quad
      \Psi_{13} = \frac{(\lambda_1 - \lambda_3)^2}{\lambda_1 + \lambda_3},
      \quad
      \Psi_{23} = \frac{(\lambda_2 - \lambda_3)^2}{\lambda_2 + \lambda_3},
      \\
      \Psi_{123} = \frac{(\lambda_1 - \lambda_2)^2 (\lambda_1 - \lambda_3)^2 (\lambda_2 - \lambda_3)^2}{(\lambda_1 + \lambda_2) (\lambda_1 + \lambda_3) (\lambda_2 + \lambda_3)}.
    \end{gathered}
  \end{equation}
\end{example}

\begin{theorem}[Asymptotics]
\label{thm:asymptotics}
  Let the eigenvalues be numbered so that $0 < \lambda_1 < \dots < \lambda_n$.
  Then
  \begin{equation}
    \label{eq:asymptotics-x}
    \begin{aligned}
      \displaystyle
      x_k(t) &\sim
      \frac{t}{\lambda_k} + \log b_k(0)
      - \frac12 \ln \lambda_k
      + \sum_{i=k+1}^n
      \ln \frac{(\lambda_i-\lambda_k)^2}{(\lambda_i+\lambda_k) \lambda_i},
      && \text{as $t\to-\infty$},
      \\
      x_{k'}(t) &\sim
      \displaystyle
      \frac{t}{\lambda_k} + \log b_k(0)
      - \frac12 \ln \lambda_k
      + \sum_{i=1}^{k-1}
      \ln \frac{(\lambda_i-\lambda_k)^2}{(\lambda_i+\lambda_k) \lambda_i},
      && \text{as $t\to+\infty$},
    \end{aligned}
  \end{equation}
  where $k' = n+1-k$. Moreover,
  \begin{equation}
    \label{eq:asymptotics-m}
    \lim_{t\to-\infty} m_{k}(t)
    = \frac{1}{\sqrt{\lambda_k}}
    = \lim_{t\to+\infty} m_{k'}(t).
  \end{equation}
  In words: asymptotically as $t\to\pm\infty$, the $k$th fastest peakon has
  velocity $1/\lambda_k$ and amplitude $1/\sqrt{\lambda_k}$.
\end{theorem}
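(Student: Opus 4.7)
The plan is to substitute the explicit formulas $x_{k'} = \tfrac12 \ln(Z_k/W_{k-1})$ and $m_{k'} = \sqrt{Z_k W_{k-1}}/(U_k U_{k-1})$ from Theorem~\ref{thm:explicit_peakons}, together with the evolution $b_i(t) = b_i(0) \, e^{t/\lambda_i}$, and extract the leading-order behaviour of the various sums as $t\to\pm\infty$. The key observation is that each of $U_k$, $V_k$, $T_k$ in Definition~\ref{def:notation-galore} has the form $\sum_{|I|=k} c_I(\lambda) \, b_I(t)$ with time-independent coefficients, while $b_I(t) = b_I(0) \, \exp\!\bigl(t\sum_{i\in I}1/\lambda_i\bigr)$. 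Since $\lambda_1 < \dots < \lambda_n$, the quantity $\sum_{i\in I}1/\lambda_i$ over $k$-subsets is uniquely minimized by $J_k^- := \{n-k+1,\dots,n\}$ and uniquely maximized by $J_k^+ := \{1,\dots,k\}$, so as $t\to-\infty$ (resp.\ $+\infty$) each such sum is dominated by its contribution from $J_k^-$ (resp.\ $J_k^+$), with the remainder smaller by a strictly positive exponential rate.

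I then analyze $Z_k = T_k U_k - T_{k+1} U_{k-1}$ and $W_{k-1} = U_{k-1} V_{k-1} - U_k V_{k-2}$. A direct exponent comparison shows that in both limits the first product strictly dominates: for instance, as $t\to-\infty$ the leading exponent of $T_k U_k$ is $2\sum_{i=n-k+1}^{n} 1/\lambda_i$, while that of the dominant term of $T_{k+1} U_{k-1}$ is larger by $1/\lambda_{n-k}-1/\lambda_{n-k+1}>0$, and moreover no monomial in $T_{k+1} U_{k-1}$ can realize the multiset $2 J_k^-$ since its index sets have sizes $k+1$ and $k-1$. The analogous strict comparisons hold at $+\infty$ and for the $W$ pair, ruling out any cancellation at leading order. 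One concludes
\begin{equation*}
  Z_k \sim \frac{\Psi_J^2 \, b_J^2}{\lambda_J},
  \qquad
  W_{k-1} \sim \Psi_{J^*}^2 \, \lambda_{J^*} \, b_{J^*}^2,
  \qquad
  U_k U_{k-1} \sim \Psi_J \Psi_{J^*} \, b_J \, b_{J^*},
\end{equation*}
where $J$, $J^*$ are the dominant $k$- and $(k-1)$-subsets in the limit being taken.

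In both limits $J = J^* \cup \{p\}$ for a single extra index $p$ (namely $p = n-k+1 = k'$ as $t\to-\infty$, and $p = k$ as $t\to+\infty$). Splitting $p$ off from the Vandermonde-like factors yields $\Psi_J/\Psi_{J^*} = \prod_{i\in J^*} (\lambda_p-\lambda_i)^2/(\lambda_p+\lambda_i)$, together with $\lambda_J/\lambda_{J^*}=\lambda_p$ and $b_J/b_{J^*}=b_p(t)$. Substituting,
\begin{equation*}
  \frac{Z_k}{W_{k-1}} \sim \frac{b_p(t)^2}{\lambda_p} \prod_{i\in J^*} \frac{(\lambda_p-\lambda_i)^4}{(\lambda_p+\lambda_i)^2 \, \lambda_i^2},
  \qquad
  \frac{\sqrt{Z_k W_{k-1}}}{U_k U_{k-1}} \sim \frac{1}{\sqrt{\lambda_p}}.
\end{equation*}
Taking one half of the logarithm of the first relation, and using $\ln b_p(t) = \ln b_p(0) + t/\lambda_p$, produces the stated expressions for $x_{k'}(t)$ once we identify $J^* = \{k+1,\dots,n\}$ in the $t\to-\infty$ case and $J^* = \{1,\dots,k-1\}$ in the $t\to+\infty$ case. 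The second relation gives the limit $m_{k'}(t) \to 1/\sqrt{\lambda_p}$, matching the claim after the appropriate relabeling.

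The only genuinely nontrivial step is the dominance argument: one must verify that all the exponent inequalities that rule out cancellation are strict. This reduces to the monotonicity of $i\mapsto 1/\lambda_i$, which is available thanks to the strict ordering $\lambda_1 < \dots < \lambda_n$ guaranteed in the pure peakon regime by Theorem~\ref{thm:translated-from-neumann}; once that is in place, the combinatorial simplification in step three is routine manipulation of the $\Delta$ and $\Gamma$ products.
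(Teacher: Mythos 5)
Your proposal is correct and follows essentially the same route as the paper: identify the dominant monomials of $U_k$, $Z_k$, $W_{k-1}$ as $t\to\pm\infty$ from the extremal index sets $\{1,\dots,k\}$ and $\{n-k+1,\dots,n\}$, then simplify the ratios of Vandermonde-type products. You supply more detail than the paper on a point it leaves implicit, namely why the subtracted terms $T_{k+1}U_{k-1}$ and $U_kV_{k-2}$ cannot cancel the leading monomials (a size-count argument on multisets plus the strict gap in exponent rates), but the underlying method is the same.
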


\begin{proof}
  This is just a matter of identifying the dominant terms;
  $b_1(t) = b_1(0) \, e^{t/\lambda_1}$
  grows much faster as $t\to+\infty$ than $b_2(t)$,
  which in turn grows much faster than $b_3(t)$, etc.,
  and as $t\to-\infty$ it is the other way around.
  Thus, for example,
  $U_k \sim \Psi_{12\dots k} \, b_1 b_2 \dots b_k$
  as $t\to+\infty$. A similar analysis of $W_k$ and~$Z_k$ leads
  quickly to the stated formulas.
\end{proof}

The only difference compared to the $x_k$ asymptotics for
Degasperis--Procesi peakons
\cite[Theorem~2.25]{lundmark-szmigielski-DPlong}
is that \eqref{eq:asymptotics-x} contains an additional term
$-\frac12 \ln \lambda_k$.
Since this term cancels in the subtraction,
the phase shifts for Novikov peakons are exactly the same as for
Degasperis--Procesi peakons \cite[Theorem~2.26]{lundmark-szmigielski-DPlong}:
\begin{multline}
  \label{eq:phaseshift}
  \lim_{t\to\infty} \left( x_{k'}(t)-\frac{t}{\lambda_k} \right)
  -\lim_{t\to-\infty} \left( x_k(t)-\frac{t}{\lambda_k} \right)
  =\\
  =\sum_{i=1}^{k-1}
  \log \frac{(\lambda_i-\lambda_k)^2}{(\lambda_i+\lambda_k) \lambda_i}-
  \sum_{i=k+1}^{n}\log \frac{(\lambda_i-\lambda_k)^2}{(\lambda_i+\lambda_k) \lambda_i}.
\end{multline}

\appendix

\section{Combinatorial results}
\label{sec:combinatorics}

This appendix contains some material related to the combinatorial structure
of the constants of motion $H_1,\dots,H_n$ of the Novikov peakon ODEs;
see Section~\ref{sec:forward}, and in particular
Theorem~\ref{thm:constants-of-motion}.
Recall that
\begin{equation*}
  A(\lambda) = 1 - \lambda H_1 + \dots + (-\lambda)^n H_n
  = \det(I - \lambda TPEP),
\end{equation*}
where $I$ is the $n \times n$ identity matrix,
and $T$, $E$, $P$ are $n \times n$ matrices defined by
$T_{jk} = 1 + \sgn(j-k)$, $E_{jk} = e^{-\abs{x_j-x_k}}$,
and $P = \diag(m_1,\dots,m_n)$.
The first thing to prove is that the matrix $TPEP$ is oscillatory
if all $m_k > 0$,
which shows that the zeros of $A(\lambda)$ are positive and simple.
Then we show how to easily compute the minors of $PEP$,
and finally we prove the ``Canada Day Theorem'' (Theorem~\ref{thm:CanadaDay})
which implies that $H_k$ equals the sum of all $k \times k$ minors of $PEP$.

\subsection{Preliminaries}
\label{sec:totpos}

In this section we have collected some facts about total positivity
\cite{karlin-TP,gantmacher-krein,fomin-zelevinsky}
that will be used below.

\begin{definition}
  If $X$ is a matrix and $I$ and~$J$ are index sets,
  the submatrix $(X_{ij})_{i\in I,j\in J}$
  will be denoted by $X_{IJ}$ (or sometimes $X_{I,J}$).
  The set of $k$-element subsets of the integer interval
  $[1,n] = \{ 1,2,\dots,n \}$ will be denoted
  $\binom{[1,n]}{k}$, and elements of such a subset~$I$ will always be
  assumed to be numbered in ascending order $i_1 < \dots < i_k$.
\end{definition}

\begin{definition}
  A square matrix is said to be \emph{totally positive}
  if all its minors of all orders are positive.
  It is called \emph{totally nonnegative}
  if all its minors are nonnegative.
  A matrix is \emph{oscillatory} if it is totally nonnegative
  and some power of it is totally positive.
\end{definition}

\begin{theorem}
  \label{thm:TP-spectrum}
  All eigenvalues of a totally positive matrix are positive and of
  algebraic multiplicity one, and likewise for oscillatory matrices.
  All eigenvalues of a totally nonnegative matrix are nonnegative, but
  in general of arbitrary multiplicity.
\end{theorem}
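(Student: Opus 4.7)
The plan is to invoke the Perron--Frobenius theorem together with compound (exterior power) matrices. Recall that the $k$-th compound matrix $C_k(A)$, of size $\binom{n}{k} \times \binom{n}{k}$, has as its $(I,J)$ entry the minor $\det A_{IJ}$ for $I,J \in \binom{[1,n]}{k}$, and its eigenvalues are precisely the $k$-fold products $\lambda_{i_1} \cdots \lambda_{i_k}$ where the $\lambda_j$ run over the eigenvalues of $A$ counted with multiplicity. The Cauchy--Binet identity $C_k(AB) = C_k(A)\,C_k(B)$ expresses the minors of $C_k(A)$ as sums of products of minors of $A$, so total positivity (resp. total nonnegativity) of $A$ passes to every $C_k(A)$; in particular the entries of $C_k(A)$ are positive (resp. nonnegative).

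First I would treat totally positive $A$. Since $C_k(A)$ has all positive entries, Perron--Frobenius asserts that its spectral radius is an algebraically simple positive eigenvalue, strictly larger in modulus than all others. For $k=1$ this already gives $\lambda_1$ real, positive, simple, and strictly dominant. For general $k$, the spectral radius of $C_k(A)$ must equal the product $\lambda_1 \cdots \lambda_k$ of the $k$ largest-modulus eigenvalues of $A$, and each such product is simple. A short induction on $k$ then forces the strict ordering $\lambda_1 > \lambda_2 > \cdots > \lambda_n > 0$ with every $\lambda_k$ algebraically simple.

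Next, for totally nonnegative $A$, I would approximate $A$ by a sequence of totally positive matrices (for example by adding $\epsilon V$ with $V$ totally positive and letting $\epsilon \to 0^+$, or by the standard smoothing of Whitney/Loewner). Since each approximant has positive real simple spectrum, and the spectrum depends continuously on the matrix entries, the limit yields real nonnegative eigenvalues for $A$, with multiplicities now possibly coalescing. Finally, for the oscillatory case, $A$ is totally nonnegative and $A^m$ is totally positive for some $m \ge 1$. The totally nonnegative step already gives real, nonnegative eigenvalues $\lambda_k$ of $A$, while applying the totally positive step to $A^m$ gives $\lambda_1^m, \ldots, \lambda_n^m$ positive, pairwise distinct, and algebraically simple. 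Because $x \mapsto x^m$ is injective on $[0,\infty)$, positivity, distinctness, and simplicity transfer back to the $\lambda_k$ themselves.

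The main obstacle will be the density step for the totally nonnegative case: one must exhibit totally positive approximants and confirm that algebraic (not merely geometric) multiplicities collapse into nonnegative reals in the limit. This is classical (Gantmacher--Krein), so I expect the proof in the paper to either cite it directly or to argue more carefully via the characteristic polynomial, whose coefficients $\sum_I \det A_{II}$ (sums of principal minors) are all nonnegative, combined with Perron--Frobenius applied simultaneously to all $C_k(A)$.
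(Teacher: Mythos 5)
The paper does not prove this theorem at all: it appears in the ``Preliminaries'' subsection of Appendix~\ref{sec:combinatorics} among ``some facts about total positivity that will be used below,'' with the justification delegated to the cited monographs (Gantmacher--Krein, Karlin). So there is no paper proof to compare against; what you have produced is a proof where the paper relies on a citation.

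That said, your argument is correct, and it is precisely the classical Gantmacher--Krein route: the $k$-th compound $C_k(A)$ inherits total positivity by Cauchy--Binet, Perron--Frobenius applied to each $C_k(A)$ gives a positive, simple, strictly dominant eigenvalue equal to $\lambda_1\cdots\lambda_k$ (the product of the $k$ largest moduli), and peeling these off inductively yields $\lambda_1 > \cdots > \lambda_n > 0$, each algebraically simple (any repeated $\lambda_j$ would force a repeated dominant eigenvalue in the relevant compound). The oscillatory case correctly reduces to the TP case applied to $A^m$, using injectivity of $x \mapsto x^m$ on $[0,\infty)$ and the fact that algebraic multiplicities can only increase under taking powers. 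The one step you rightly flag as needing outside input --- density of totally positive matrices in the totally nonnegative ones, plus continuity of the spectrum --- is Whitney's theorem and is the standard way to conclude nonnegativity of the TN spectrum. You could tighten the induction a little by spelling out why the Perron root of $C_k(A)$, which a priori is only the maximal modulus $|\lambda_1\cdots\lambda_k|$, forces $\lambda_k$ itself to be a positive real (simplicity of the Perron root rules out both a conjugate pair $\lambda_k,\bar\lambda_k$ and a tie $|\lambda_k|=|\lambda_{k+1}|$), but these are routine and you implicitly handle them. In short: correct, and it is the standard proof of the result that the paper merely quotes.
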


\begin{theorem}
  \label{thm:O-TN-product}
  The product of an oscillatory matrix and a nonsingular totally
  nonnegative matrix is oscillatory.
\end{theorem}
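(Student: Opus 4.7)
The plan is to combine three classical facts from the theory of total positivity: the Cauchy--Binet formula, the Gantmacher--Krein characterisation of oscillatory matrices, and the positivity of the diagonal entries of a nonsingular TN matrix.

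First I would check that $C := AB$ is TN and nonsingular. The Cauchy--Binet formula gives
\[
\det(C_{IJ}) = \sum_{K \in \binom{[1,n]}{\abs{I}}} \det(A_{IK}) \det(B_{KJ}),
\]
and every summand is nonnegative because both factors are TN; hence $C$ is TN. Moreover $A$ is nonsingular (since some power $A^k$ is TP and therefore has positive determinant, forcing $\det A \neq 0$), so $\det C = \det A \, \det B \neq 0$.

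The remaining requirement is to invoke the Gantmacher--Krein criterion: a nonsingular TN matrix $M$ is oscillatory if and only if $M_{i,i+1} > 0$ and $M_{i+1,i} > 0$ for every $i \in \{1,\dots,n-1\}$. Writing $C_{i,i+1} = \sum_{k} A_{ik} B_{k,i+1}$ as a sum of nonnegative terms, it suffices to exhibit a single strictly positive summand; the $k=i+1$ term is $A_{i,i+1} \, B_{i+1,i+1}$, in which $A_{i,i+1} > 0$ because $A$ is oscillatory (the necessary direction of Gantmacher--Krein, which reflects the fact that a TN matrix with a vanishing sub- or super-diagonal entry has a ``zero shadow'' that persists under taking powers, precluding any power from being TP). The second factor $B_{i+1,i+1} > 0$ comes from the lemma discussed below. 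A symmetric argument using the $k=i$ term handles $C_{i+1,i}$.

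The main obstacle is the auxiliary lemma that every nonsingular TN matrix $B$ has $B_{ii} > 0$. I would establish this from the $LDU$ factorisation available for any nonsingular TN matrix (see \cite{karlin-TP,fomin-zelevinsky}): writing $B = LDU$ with $L$ and $U$ unit triangular TN and $D = \diag(d_1,\dots,d_n)$ having all $d_i > 0$, the identity $B_{ii} = \sum_{k \leq i} L_{ik} d_k U_{ki}$ contains the strictly positive diagonal term $L_{ii} d_i U_{ii} = d_i$ while every other summand is nonnegative. With this lemma in hand the three ingredients assemble into the theorem. The rest of the argument is purely mechanical bookkeeping, so I would expect the only genuine subtlety to sit with the diagonal-positivity lemma, whose proof is short once the $LDU$ decomposition is granted but which requires a nontrivial structural result about TN matrices.
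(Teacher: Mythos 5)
The paper does not prove this theorem: it is stated in the appendix's preliminaries (Section~\ref{sec:totpos}) as one of several ``facts about total positivity'' quoted from the literature, with references to Gantmacher--Krein, Karlin, and Fomin--Zelevinsky, so there is nothing in the paper to compare your argument against. Your proof is correct, and it is essentially the classical Gantmacher--Krein argument (see \cite[Ch.~II, Thm.~7]{gantmacher-krein}): establish total nonnegativity of the product via Cauchy--Binet, check nonsingularity, and then verify the Gantmacher--Krein criterion ($M$ nonsingular TN with positive entries on the first sub- and super-diagonals) by isolating a strictly positive summand in $C_{i,i\pm1}=\sum_k A_{ik}B_{k,i\pm1}$. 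The supporting lemma that a nonsingular TN matrix has strictly positive diagonal entries is the one genuinely nontrivial ingredient, and your route through the $LDU$ factorization with positive diagonal $D$ is a standard and valid way to get it; one could equally invoke the Fischer/Koteljanskii inequality for TN matrices ($\det B \le \det B_{[1,k],[1,k]}\det B_{[k+1,n],[k+1,n]}$), which forces all leading (hence, by re-indexing, all) contiguous principal minors of a nonsingular TN matrix to be positive. One small remark on scope: the theorem is used in the paper with the nonsingular TN factor on the \emph{left} ($T\cdot PEP$), whereas you wrote out the case oscillatory-times-TN; the other order follows by the mirror-image choice of index ($k=i$ for the super-diagonal, $k=i+1$ for the sub-diagonal), so no content is missing, but it is worth stating explicitly that both orders are covered.
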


\begin{definition}
  A \emph{planar network} $(\Gamma,\omega)$ of order~$n$ is an acyclic planar
  directed graph~$\Gamma$ with arrows going from left to right,
  with $n$ sources (vertices with outgoing arrows only) on the left side,
  and with $n$ sinks (vertices with incoming arrows only) on the right side.
  The sources and sinks are numbered $1$ to~$n$, from bottom to top, say.
  All other vertices have at least one arrow coming in and at least one arrow going out.
  Each edge~$e$ of the graph~$\Gamma$ is assigned a scalar weight~$\omega(e)$.
  The \emph{weight} of a directed path in $\Gamma$ is the product of all
  the weights of the edges of that path. The \emph{weighted path matrix}
  $\Omega(\Gamma,\omega)$ is the $n\times n$ matrix whose $(i,j)$ entry
  $\Omega_{ij}$ is the sum of the weights of the possible paths from
  source~$i$ to sink~$j$.
\end{definition}

The following theorem was discovered by
Lindstr\"om \cite{lindstrom-matroids}
and made famous by Gessel and Viennot
\cite{gessel-viennot-binomial-determinants}.
A similar theorem also appeared earlier in the work of
Karlin and McGregor on birth and death processes
\cite{karlin-mcgregor-coincidence-probabilities}.

\begin{theorem}[Lindström's Lemma]
  \label{thm:lindstrom}
  Let $I$ and $J$ be subsets of $\{1,\ldots,n\}$ with the same cardinality.
  The minor $\det\Omega_{IJ}$
  of the weighted path matrix $\Omega(\Gamma,\omega)$ of a
  planar network is equal to the sum of the weights of
  all possible families of nonintersecting paths
  (i.e., paths having no vertices in common)
  connecting the sources
  labelled by~$I$ to the sinks labelled by~$J$.
  (The weight of a family of paths is defined as
  the product of the weights of the individual paths.)
\end{theorem}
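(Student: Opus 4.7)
The plan is to prove Lindstr\"om's Lemma by expanding the determinant combinatorially and then setting up a sign-reversing involution on the ``bad'' terms. First I would write
\begin{equation*}
  \det\Omega_{IJ} = \sum_{\sigma \in S_k} \sgn(\sigma) \prod_{a=1}^{k} \Omega_{i_a, j_{\sigma(a)}},
\end{equation*}
and then substitute the definition of $\Omega_{ij}$ as a sum of weights of directed paths from source~$i$ to sink~$j$. Multiplying out gives an expansion
\begin{equation*}
  \det\Omega_{IJ} = \sum_{\sigma \in S_k} \sgn(\sigma) \sum_{\mathcal{P}} \prod_{a=1}^{k} \omega(P_a),
\end{equation*}
where the inner sum runs over all path families $\mathcal{P} = (P_1,\dots,P_k)$ with $P_a$ going from source~$i_a$ to sink~$j_{\sigma(a)}$.

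Next I would split the resulting sum according to whether $\mathcal{P}$ has two paths sharing a vertex (``intersecting'') or not. For a nonintersecting family, planarity of $\Gamma$ combined with the fact that both $I$ and $J$ are listed in increasing order forces the corresponding permutation $\sigma$ to be the identity: if some $P_a$ and $P_b$ with $a<b$ went to sinks in reversed order, their paths would have to cross in the plane and share a vertex (since $\Gamma$ is planar, ``crossing'' and ``sharing a vertex'' coincide for directed planar graphs here), contradicting nonintersection. Hence the nonintersecting families contribute exactly $\sum_{\mathcal{P} \text{ disjoint}} \prod_a \omega(P_a)$, which is the desired right-hand side.

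The heart of the proof — and the main obstacle — is to show that the contributions from intersecting families cancel. For this I would construct the standard tail-swap involution $\iota$: given an intersecting family $\mathcal{P}$, select the pair $(a,b)$ with $a<b$ minimal in some fixed linear order such that $P_a$ and $P_b$ share a vertex, choose (say) the first such shared vertex~$v$ along $P_a$, and define $\iota(\mathcal{P})$ by swapping the portions of $P_a$ and $P_b$ after~$v$. I would then check: (i)~$\iota$ is well-defined (the selection of $(a,b)$ and $v$ is canonical, independent of the ambient permutation); (ii)~$\iota \circ \iota = \mathrm{id}$, since the swapped family has the same minimal intersecting pair and the same chosen vertex; (iii)~$\iota$ preserves the product of edge weights, since the multiset of edges used is unchanged; and (iv)~$\iota$ changes the underlying permutation $\sigma$ by the transposition $(\sigma(a)\,\sigma(b))$, so it flips $\sgn(\sigma)$. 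Thus intersecting families pair up with opposite signs and cancel, leaving only the nonintersecting contribution. The delicate point is verifying~(ii): one must check that the canonical choice of the minimal intersecting pair and the first shared vertex is preserved by $\iota$, which is the standard but subtle part of the Lindstr\"om--Gessel--Viennot argument.
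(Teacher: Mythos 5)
The paper does not prove Lindstr\"om's Lemma; it states it as a known result and cites Lindstr\"om, Gessel--Viennot, and Karlin--McGregor for the original arguments. So there is no in-paper proof to compare against, and what you have supplied is a reconstruction of the standard Lindstr\"om--Gessel--Viennot tail-swap proof from the literature. Your overall structure is correct: Leibniz expansion, substitution of the path-sum for each entry, the observation that planarity plus the monotone labelling of sources and sinks forces $\sigma=\mathrm{id}$ on nonintersecting families, and a sign-reversing involution on the intersecting ones.

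However, the specific involution you describe does \emph{not} satisfy $\iota\circ\iota=\mathrm{id}$ as stated, and this is more than a routine verification. You propose to select the minimal pair $(a,b)$ (in some fixed order on pairs) whose paths intersect, and then the first shared vertex $v$ along $P_a$. The difficulty is that swapping the tails of $P_a$ and $P_b$ after $v$ can change \emph{which} pair is minimal: a vertex $w$ that lay on $P_b$ beyond $v$ and also on some third path $P_c$ now lies on the new $P_a'$, so the pair $\{a,c\}$ may become intersecting. If $a<c<b$, then in lexicographic order $\{a,c\}$ precedes $\{a,b\}$, the minimal pair has moved, and applying $\iota$ again does not undo the swap. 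The standard fix reverses the order of the two selections: fix a total order on the \emph{vertices} of $\Gamma$, pick the smallest vertex $v$ lying on at least two paths, and then the two smallest-indexed paths through $v$. Since the multiset of vertices covered by the family is invariant under the tail swap, both $v$ and the pair of paths through it are stable, and $\iota$ is genuinely an involution. With that repair your argument is complete; as written, the canonical choice you commit to would fail point~(ii).
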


\begin{corollary}
  \label{cor:pathmatrix}
  If all weights $\omega(e)$ are nonnegative, then the weighted path
  matrix is totally nonnegative.
\end{corollary}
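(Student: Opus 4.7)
The plan is to deduce the corollary directly from Lindstr\"om's Lemma (Theorem~\ref{thm:lindstrom}), with essentially no additional input. Fix arbitrary index sets $I,J \subseteq \{1,\dots,n\}$ of the same cardinality~$k$; a generic $k \times k$ minor of the weighted path matrix $\Omega = \Omega(\Gamma,\omega)$ has the form $\det \Omega_{IJ}$. By Theorem~\ref{thm:lindstrom}, this determinant equals the sum $\sum_{\Pi} \prod_{i=1}^{k} \omega(\pi_i)$, where the sum ranges over all families $\Pi = (\pi_1,\dots,\pi_k)$ of pairwise vertex-disjoint directed paths connecting the sources indexed by~$I$ to the sinks indexed by~$J$, and each path weight is itself the product $\omega(\pi_i) = \prod_{e \in \pi_i} \omega(e)$ of its edge weights.

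Under the hypothesis that every edge weight $\omega(e)$ is nonnegative, each path weight $\omega(\pi_i)$ is a product of nonnegative numbers and hence nonnegative; the family weight $\prod_{i} \omega(\pi_i)$ is then also nonnegative; and the minor $\det \Omega_{IJ}$, being a finite sum of such family weights, is nonnegative as well. Since $I$ and $J$ were arbitrary, every minor of~$\Omega$ is nonnegative, so $\Omega$ is totally nonnegative by definition. There is no real obstacle here: all the combinatorial content, in particular the sign-cancelling involution between intersecting path families, is already packaged inside Theorem~\ref{thm:lindstrom}, and the corollary amounts to the observation that once the signs have been disposed of, products and sums of nonnegative quantities are nonnegative.
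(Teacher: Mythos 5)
Your proof is correct and is exactly the intended argument: the paper states this as an immediate corollary of Lindstr\"om's Lemma without further elaboration, and your reasoning (every minor is a sum of products of nonnegative edge weights, hence nonnegative) is the straightforward deduction the paper leaves to the reader. No gap and no divergence from the paper's approach.
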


\begin{remark}
  Beware that having \emph{positive} weights does \emph{not} in general imply
  total positivity of the path matrix $\Omega$, since some minors
  $\det \Omega_{IJ}$ may be zero due to absence of nonintersecting
  path families from $I$ to~$J$,
  in which case $\Omega$ is only totally nonnegative.
\end{remark}

\subsection{Proof that $TPEP$ is oscillatory}
\label{sec:TPEP-oscillatory}

The matrix $T$ is the path matrix of the planar network whose
structure is illustrated below for the case $n=4$
(with all edges, and therefore all paths and families of paths, having unit weight):
\begin{center}
  \psset{unit=9mm}
  \begin{pspicture}(-0.5,-0.5)(5.5,3.5)
    \cnodeput(0,0){source1}{$1$}
    \cnodeput(0,1){source2}{$2$}
    \cnodeput(0,2){source3}{$3$}
    \cnodeput(0,3){source4}{$4$}

    \cnodeput(5,0){sink1}{$1$}
    \cnodeput(5,1){sink2}{$2$}
    \cnodeput(5,2){sink3}{$3$}
    \cnodeput(5,3){sink4}{$4$}

    \psset{radius=0.1}
    \Cnode(2,0){in1a} \Cnode(3,0){in1b}
    \Cnode(2,1){in2a} \Cnode(3,1){in2b}
    \Cnode(2,2){in3a} \Cnode(3,2){in3b}
    \Cnode(2,3){in4a}

    \psset{nodesep=3pt}

    \ncline{->}{source4}{in4a} \ncline{->}{in4a}{sink4}
    \ncline{->}{source3}{in3a} \ncline{->}{in3a}{in3b} \ncline{->}{in3b}{sink3}
    \ncline{->}{source2}{in2a} \ncline{->}{in2a}{in2b} \ncline{->}{in2b}{sink2}
    \ncline{->}{source1}{in1a} \ncline{->}{in1a}{in1b} \ncline{->}{in1b}{sink1}
    \ncline{->}{in4a}{in3a} \ncline{->}{in4a}{in3b}
    \ncline{->}{in3a}{in2a} \ncline{->}{in3a}{in2b}
    \ncline{->}{in2a}{in1a} \ncline{->}{in2a}{in1b}
  \end{pspicture}
\end{center}
Indeed, there is clearly one path from source $i$ to sink $j$
if $i=j$, two paths if $i>j$, and none if $i<j$, and this agrees with
\begin{equation*}
  T_{ij} = 1 + \sgn(i-j) =
  \begin{cases}
    1, & i=j,\\
    2, & i>j,\\
    0, & i<j.
  \end{cases}
\end{equation*}
Similarly one can check that the matrix $PEP$ is the weighted path
matrix of the planar network illustrated below for the case $n=5$
(we are assuming that $x_1 < \dots < x_n$, so that
$E_{12} E_{23} = e^{x_1-x_2} e^{x_2-x_3} = E_{13}$, etc.):
\begin{center}
  \psset{unit=9mm}
  \begin{pspicture}(-0.5,-0.5)(12.5,4.6)
    \cnodeput(0,0){source1}{$1$}
    \cnodeput(0,1){source2}{$2$}
    \cnodeput(0,2){source3}{$3$}
    \cnodeput(0,3){source4}{$4$}
    \cnodeput(0,4){source5}{$5$}

    \cnodeput(12,0){sink1}{$1$}
    \cnodeput(12,1){sink2}{$2$}
    \cnodeput(12,2){sink3}{$3$}
    \cnodeput(12,3){sink4}{$4$}
    \cnodeput(12,4){sink5}{$5$}

    \psset{radius=0.1}
    \Cnode(6,0){a}
    \Cnode(5,1){b1} \Cnode(7,1){b2}
    \Cnode(4,2){c1} \Cnode(8,2){c2}
    \Cnode(3,3){d1} \Cnode(9,3){d2}
    \Cnode(2,4){e1} \Cnode(10,4){e2}

    \psset{nodesep=3pt}
    \psset{labelsep=2pt}

    \ncline{->}{source1}{a} \taput[tpos=0.166667]{$m_1$}
    \ncline{->}{source2}{b1} \taput[tpos=0.2]{$m_2$}
    \ncline{->}{source3}{c1} \taput[tpos=0.25]{$m_3$}
    \ncline{->}{source4}{d1} \taput[tpos=0.333333]{$m_4$}
    \ncline{->}{source5}{e1} \taput[tpos=0.5]{$m_5$}

    \ncline{<-}{sink1}{a} \taput[tpos=0.8333333]{$m_1$}
    \ncline{<-}{sink2}{b2} \taput[tpos=0.8]{$m_2$}
    \ncline{<-}{sink3}{c2} \taput[tpos=0.75]{$m_3$}
    \ncline{<-}{sink4}{d2} \taput[tpos=0.666667]{$m_4$}
    \ncline{<-}{sink5}{e2} \taput[tpos=0.5]{$m_5$}

    \ncline{->}{b1}{a} \tlput[tpos=0.6]{$E_{12}$}
    \ncline{->}{a}{b2} \trput[tpos=0.6]{$E_{12}$}
    \ncline{->}{b1}{b2} \taput{$1-E_{12}^2$}

    \ncline{->}{c1}{b1} \tlput[tpos=0.6]{$E_{23}$}
    \ncline{->}{b2}{c2} \trput[tpos=0.6]{$E_{23}$}
    \ncline{->}{c1}{c2} \taput{$1-E_{23}^2$}

    \ncline{->}{d1}{c1} \tlput[tpos=0.6]{$E_{34}$}
    \ncline{->}{c2}{d2} \trput[tpos=0.6]{$E_{34}$}
    \ncline{->}{d1}{d2} \taput{$1-E_{34}^2$}

    \ncline{->}{e1}{d1} \tlput[tpos=0.6]{$E_{45}$}
    \ncline{->}{d2}{e2} \trput[tpos=0.6]{$E_{45}$}
    \ncline{->}{e1}{e2} \taput{$1-E_{45}^2$}
  \end{pspicture}
\end{center}
By Corollary~\ref{cor:pathmatrix},
both $T$ and $PEP$ are totally nonnegative (if all $m_k > 0$).
Furthermore, $(PEP)^N$ is the weighted path matrix of the planar
network obtain by connecting $N$ copies of the network for $PEP$
in series, and if $N$ is large enough, there is clearly enough
wiggle room in this network to find a nonintersecting path family
from any source set~$I$ to any sink set~$J$ with $\abs{I}=\abs{J}$.
Thus $(PEP)^N$ is totally positive for sufficiently large $N$; in
other words, $PEP$ is oscillatory.
(Another way to see this is to use a criterion
\cite[Chapter~II, Theorem~10]{gantmacher-krein}
which says that a totally nonnegative matrix $X$
is oscillatory if and only if it is nonsingular and $X_{ij} > 0$ for
$\abs{i-j}=1$.)
Since $T$ is nonsingular, Theorem~\ref{thm:O-TN-product} implies that
$TPEP$ is oscillatory, which was the first thing we wanted to prove.

\subsection{Minors of $PEP$}
\label{sec:PEP-minors}

Having a planar network for $PEP$ makes it easy to compute its
minors using Lindström's Lemma.

\begin{example}
  \label{ex:H3-when-n-equals-6}
  Consider the constant of motion $H_3$ in the case $n=6$.

  For sources $I = \{ 1,2,3 \}$ and sinks $J = \{ 1,2,3 \}$
  there is only one family of nonintersecting paths, namely the paths
  going straight across. The weights of these paths are $m_1m_1$,
  $m_2 (1-E_{12}^2) m_2$ and $m_3 (1-E_{23}^2) m_3$,
  and the total weight of that family is therefore
  $(1-E_{12}^2) (1-E_{23}^2) \, m_1^2 m_2^2 m_3^2$, which will be the first term
  in $H_3$.

  A similar term results whenever $I=J$.
  For instance, when $I = J = \{ 1,2,4 \}$ the paths starting at sources $1$
  and~$2$ must go straight across, while the path from source~$4$ to
  to sink~$4$ can go straight across, or down to line~$3$ and up again.
  The contributions from these two possible nonintersecting path families
  add up to
  \begin{multline*}
    m_1 m_1 \cdot m_2 (1-E_{12}^2) m_2 \cdot
    \Bigl( m_4 (1-E_{34}^2) m_4 + m_4 E_{34} (1-E_{23}^2) E_{34} m_4 \Bigr)
    \\[1.5ex]
    = (1-E_{12}^2) (1-E_{24}^2) \, m_1^2 m_2^2 m_4^2.
  \end{multline*}

  From $I = \{ 1,2,3 \}$ to $J = \{ 1,2,4 \}$ there is one nonintersecting
  path family, and there is another one with the same weight
  from $I = \{ 1,2,4 \}$ to $J = \{ 1,2,3 \}$; the two add up to the term
  $2 (1-E_{12}^2) (1-E_{23}^2) E_{24} \, m_1^2 m_2^2 m_3 m_4$.

  Continuing like this, one finds that the types of terms that appear
  in $H_3$ are
  \begin{equation}
    \label{eq:H3n6}
    \begin{split}
      H_3 &= (1-E_{12}^2) (1-E_{23}^2) \, m_1^2 m_2^2 m_3^2 + \ldots
      \\ & \quad +
      2 (1-E_{12}^2) (1-E_{23}^2) E_{34} \, m_1^2 m_2^2 m_3 m_4 + \ldots
      \\ & \quad +
      4 (1-E_{12}^2) (1-E_{34}^2) E_{23} E_{45} \, m_1^2 m_2 m_3 m_4 m_5 + \ldots
      \\ & \quad +
      8 \, (1-E_{23}^2) (1-E_{45}^2) E_{12} E_{34} E_{56} \, m_1 m_2 m_3 m_4 m_5 m_6.
    \end{split}
  \end{equation}
  The last term comes from the $8$ possible nonintersecting path families
  from $I = \{ i_1,i_2,i_3 \}$ to $J = \{ j_1,j_2,j_3 \}$
  where $(i_1,j_1)=(1,2)$ or $(2,1)$,
  $(i_2,j_2)=(3,4)$ or $(4,3)$,
  and $(i_3,j_3)=(5,6)$ or $(6,5)$.
\end{example}

\begin{remark}
  \label{rem:single-pair}
  Alternatively, the $m_k$ can be factored out from any minor of
  $PEP$, leaving the corresponding minor of~$E$, which can be computed
  using a result from Gantmacher and Krein
  \cite[Section~II.3.5]{gantmacher-krein},
  since the matrix $E$ is what they call a \emph{single-pair matrix}.
  This means a symmetric $n \times n$ matrix $X$ with entries
  \begin{equation}
    \label{eq:single-pair}
    X_{ij} =
    \begin{cases}
      \psi_i \chi_j, & i \le j,\\
      \psi_j \chi_i, & i \ge j.
    \end{cases}
  \end{equation}
  The $k \times k$ minors of a single-pair matrix are given by the
  following rule:
  $\det X_{IJ} = 0$, unless $I,J \in \binom{[1,n]}{k}$ satisfy the condition
  \begin{equation}
    \label{eq:single-pair-minor-condition}
    (i_1,j_1) < (i_2,j_2) < \dots < (i_k,j_k),
  \end{equation}
  where the notation means that both numbers in one pair must be less
  than both numbers in the following pair;
  in this case,
  \begin{equation}
    \label{eq:single-pair-minor}
    \det X_{IJ} =
    \psi_{\alpha_1}
    \begin{vmatrix} \chi_{\beta_1} & \chi_{\alpha_2} \\ \psi_{\beta_1} & \psi_{\alpha_2} \end{vmatrix}
    \begin{vmatrix} \chi_{\beta_2} & \chi_{\alpha_3} \\ \psi_{\beta_2} & \psi_{\alpha_3} \end{vmatrix}
    \dots
    \begin{vmatrix} \chi_{\beta_{k-1}} & \chi_{\alpha_k} \\ \psi_{\beta_{k-1}} & \psi_{\alpha_k} \end{vmatrix}
    \chi_{\beta_k},
  \end{equation}
  where
  \begin{equation}
    \label{eq:alphabeta}
    (\alpha_m,\beta_m) = \bigl( \min(i_m,j_m), \max(i_m,j_m) \bigr).
  \end{equation}
  In the case of $E$ we have $\psi_i = e^{x_i}$ and $\chi_i = e^{-x_i}$
  (assuming as usual that $x_1 < \dots < x_n$),
  and \eqref{eq:single-pair-minor} becomes
  \begin{equation}
    \label{eq:E-minor}
    \det E_{IJ} =
    (1 - E_{\beta_1 \alpha_2}^2 )
    (1 - E_{\beta_2 \alpha_3}^2 )
    \dots
    (1 - E_{\beta_{k-1} \alpha_k}^2 )
    E_{\alpha_1 \beta_1}
    E_{\alpha_2 \beta_2}
    \dots
    E_{\alpha_k \beta_k}.
  \end{equation}
\end{remark}

\subsection{Proof of the ``Canada Day Theorem''}
\label{sec:CDT-proof}

The result to be proved (Theorem~\ref{thm:CanadaDay}) is that for any
symmetric $n \times n$ matrix~$X$, the coefficient of $s^k$ in the
polynomial $\det(I+s\,TX)$ equals the sum of all $k\times k$ minors
of~$X$:
\begin{equation}
  \label{eq:CanadaDay}
  \det(I+s\,TX) =
  1 + \sum_{k=1}^n \left(
    \sum_{I \in \binom{[1,n]}{k}} \sum_{J \in \binom{[1,n]}{k}} \det X_{IJ}
  \right) s^k.
\end{equation}
We start from the elementary fact that for any matrix $Y$,
the coefficients in its characteristic polynomial
are given by the sums of the \emph{principal} minors,
\begin{equation*}
  \det(I+s\,Y) =
  1 + \sum_{k=1}^n \left(
    \sum_{J \in \binom{[1,n]}{k}} \det Y_{JJ}
  \right) s^k.
\end{equation*}
Applying this to $Y=TX$ and computing the minors of $TX$
using the Cauchy--Binet formula \cite[Ch.~I, \textsection~2]{gantmacher-matrixtheoryI}
\begin{equation}
  \label{eq:Cauchy-Binet}
  \det (TX)_{AB} = \sum_{I \in \binom{[1,n]}{k}} \det T_{AI} \, \det X_{IB},
  \qquad \text{for $A,B \in \textstyle \binom{[1,n]}{k}$},
\end{equation}
we find that
\begin{equation*}
  \det(I+s\,TX) =
  1 + \sum_{k=1}^n \left(
    \sum_{I \in \binom{[1,n]}{k}} \sum_{J \in \binom{[1,n]}{k}} \det T_{JI} \, \det  X_{IJ}
  \right) s^k.
\end{equation*}
Comparing this to \eqref{eq:CanadaDay}, it is clear that what we need
to show is that, for any~$k$,
\begin{equation}
  \label{eq:target-sum}
  \sum_{I \in \binom{[1,n]}{k}} \sum_{J \in \binom{[1,n]}{k}} \det T_{JI} \, \det  X_{IJ}
  = \sum_{I \in \binom{[1,n]}{k}} \sum_{J \in \binom{[1,n]}{k}} \det X_{IJ}.
\end{equation}
The first thing to do is calculate the minors $\det T_{JI}$.

\begin{definition}
  \label{def:interlace}
  Given $I, J \in \binom{[1,n]}{k}$, the set $I$ is said to
  \emph{interlace} with the set $J$, denoted $I\leq J$, if
  \begin{equation}
    \label{eq:IJinterlace}
    i_1 \leq j_1 \leq i_2 \leq j_2 \leq \ldots \leq i_k \leq j_k.
  \end{equation}
  If all the inequalities are strict, then $I$ is said to
  \emph{strictly interlace} with~$J$, in which case we write $I<J$.
  If $I \leq J$, then $I'$ and $J'$ 
  denote the strictly interlacing subsets (possibly empty)
  \begin{equation}
    \label{eq:IJprime}
    I' = I \setminus (I \cap J),
    \qquad
    J' = J \setminus (I \cap J),
  \end{equation}
  whose cardinality (possibly zero) will be denoted by
  \begin{equation}
    \label{eq:pIJ}
    p(I,J)= \abs{I'} = \abs{J'}.
  \end{equation}
\end{definition}

\begin{lemma}
  \label{lem:minors-T}
  For $I, J \in \binom{[1,n]}{k}$, the corresponding $k \times k$
  minor of $T$ is
  \begin{equation}
    \label{eq:minors-T}
    \det T_{JI} =
    \begin{cases}
      2^{p(I,J)}, & \text{if $I \leq J$}, \\
      0, & \text{otherwise}.
    \end{cases}
  \end{equation}
\end{lemma}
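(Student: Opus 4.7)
The plan is to interpret $\det T_{JI}$ combinatorially via Lindstr\"om's Lemma applied to the planar network for $T$ exhibited in Section~\ref{sec:TPEP-oscillatory}. Since that network has all edge weights equal to $1$, the minor $\det T_{JI}$ equals the number of vertex-disjoint families of directed paths $(P_1, \dots, P_k)$ where $P_a$ runs from source $j_a$ to sink $i_a$ (planarity forces this pairing).

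I would first characterize when at least one such family can exist. Every edge in the network moves weakly rightward and weakly downward, so a single path $P_a \colon j_a \to i_a$ requires $i_a \le j_a$. The additional constraint that $P_a$ and $P_{a+1}$ share no vertex translates into $j_a \le i_{a+1}$: indeed, any such path must traverse the leftmost interior vertex at each height from $i_a+1$ up to $j_a$, and overlapping height intervals for consecutive paths cause a collision. Together these conditions reproduce exactly the interlacing property $I \le J$, so its failure immediately gives $\det T_{JI} = 0$.

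To count families in the interlacing case, I would use the structural fact that a path from $j_a$ to $i_a$ is unique when $j_a = i_a$, but has exactly two variants when $j_a > i_a$, differing only in whether the leftmost interior vertex (the ``$a$''-vertex) at the terminal height $i_a$ is visited before the path exits through the corresponding ``$b$''-vertex. The only way two consecutive paths can collide occurs when $j_a = i_{a+1}$: in that event $P_a$ is compelled to use the ``$a$''-vertex at height $j_a$, which in turn forces $P_{a+1}$ into its ``skip'' variant and removes its branching freedom. (The degenerate subcase $j_a = i_{a+1} = i_a$ is ruled out by $i_a < i_{a+1}$.) No collision at all arises in the strict case $j_a < i_{a+1}$, and $P_a$'s own branch choice is never restricted by $P_{a-1}$.

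Thus the total count equals $2^N$, where $N$ is the number of indices $a$ for which $j_a > i_a$ \emph{and} $j_{a-1} < i_a$ (setting $j_0 = -\infty$). To identify $N$ with $p(I,J)$, observe that by the interlacing $j_{a-1} \le i_a \le j_a$, one has $i_a \in J$ if and only if $i_a = j_{a-1}$ or $i_a = j_a$; hence ``$i_a \notin J$'' is equivalent to the conjunction ``$j_{a-1} < i_a$ and $j_a > i_a$''. Therefore $N = \abs{I \setminus (I \cap J)} = p(I,J)$, yielding $\det T_{JI} = 2^{p(I,J)}$ as required. The main technical subtlety is the vertex-disjointness bookkeeping at the collision height $j_a = i_{a+1}$: one must verify that exactly one of the two paths (namely $P_{a+1}$) loses its freedom and that no other branching is silently eliminated elsewhere.
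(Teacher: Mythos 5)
Your proof is correct, and it takes a genuinely different route from the paper's. The paper also begins from Lindstr\"om's Lemma on the same planar network, but then proceeds by induction on~$n$: it peels off the top source/sink and multiplies the minor of $T$ by a factor $2$, $1$, or $0$ depending on whether the last path from source~$n$ to sink~$i_k$ crosses level~$j_{k-1}$, reducing to the minor $\det T_{J_1 I_1}$. Your argument instead does a direct global count of nonintersecting families: you observe that each path $P_a\colon j_a\to i_a$ carries an independent two-way branch at the exit height $i_a$ exactly when $j_a>i_a$, and that this branch gets pinned to the ``skip'' variant when and only when $j_{a-1}=i_a$. Identifying the number of surviving free branches with $\abs{I\setminus(I\cap J)}=p(I,J)$ is a neat observation that makes the exponent $p(I,J)$ transparent, which the paper's induction achieves less directly. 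Both proofs hinge on the same geometric fact that the mandatory ``$a$''-vertices of $P_a$ occupy the contiguous height range $[i_a+1,j_a]$, forcing the interlacing $I\leq J$.

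One small wording slip to fix: you write ``$P_a$'s own branch choice is never restricted by $P_{a-1}$,'' but the sentence immediately before it shows the opposite --- $P_{a-1}$ \emph{does} restrict $P_a$'s branch precisely when $j_{a-1}=i_a$, which is exactly why your count~$N$ carries the clause ``$j_{a-1}<i_a$.'' Presumably you meant $P_{a+1}$: the higher-indexed path sits at rows $\ge i_{a+1}>i_a$ and never touches the branching height $i_a$, so it imposes no constraint. With that correction the logic is airtight, and the final translation of $N$ into $p(I,J)$ (via $i_a\notin J\Leftrightarrow j_{a-1}<i_a<j_a$ under interlacing) is exactly right.
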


\begin{proof}
  We will use Lindström's Lemma (Theorem~\ref{thm:lindstrom})
  on the planar network for $T$ given in
  Section~\ref{sec:TPEP-oscillatory} above;
  the minor $\det T_{JI}$ equals the total number of families of
  nonintersecting paths connecting the source nodes (on the left) indexed
  by~$J$ to the sink nodes (on the right) indexed by~$I$.

  The proof proceeds by induction on the size~$n$ of~$T$. The claim is
  trivially true for $n=1$. Consider an arbitrary $n>1$, and suppose the
  claim is true for size \mbox{$n-1$}. If neither $I$ nor $J$ contain
  $n$, the claim follows immediately from the induction hypothesis,
  and likewise if $I$ and~$J$ both contain~$n$, because there is only
  one path connecting source~$n$ to sink~$n$. If $I$ contains $n$ but
  $J$ does not, then $\det T_{JI}=0$ because there are no paths going
  upward; this agrees with the claim, since in this case $I$ does not
  interlace with~$J$.

  The only remaining case is therefore
  $J = J_1 \cup \{n\}$,
  $I = I_1 \cup \{ i_k \}$, with $i_k <n$.
  But then
  \begin{equation*}
    \det T_{JI} = \det T_{J_1I_1} \times
    \begin{cases}
      2, &\text{if $j_{k-1} < i_k$}, \\
      1, &\text{if $j_{k-1} = i_k$}, \\
      0, &\text{if $j_{k-1} > i_k$},
    \end{cases}
  \end{equation*}
  depending on whether the path connecting source~$n$ with sink~$i_k$
  has to cross the $j_{k-1}$ level; if it does not, there are two
  available paths, if it does, there is only one available path
  provided $j_{k-1}=i_n$, otherwise the path intersects the path
  coming from source~$j_{k-1}$. In the last instance, $I$ does
  not interlace with~$J$, while in the other two $I\leq J$ if and only
  if $I_1\leq J_1$, thus proving the claim.
\end{proof}

According to this lemma, the structure of \eqref{eq:target-sum} (which is
what we want to prove) is
\begin{equation}
  \label{eq:target-sum-again}
  \sum_{\substack{I, J \in \binom{[1,n]}{k} \\ I \leq J}} 2^{p(I,J)} \, \det  X_{IJ}
  = \sum_{A , B \in \binom{[1,n]}{k} } \det  X_{AB},
\end{equation}
and we must show that those terms $\det X_{IJ}$ that occur more than once
on the left-hand side exactly compensate for those that are absent.
This will follow from another technical lemma:

\begin{lemma}[Relations between $k \times k$ minors of a symmetric matrix]
  \label{lem:minor-sum}
  Suppose $I, J \in \binom{[1,n]}{k}$ and $I \leq J$.
  Then, for any symmetric $n\times n$ matrix $X$,
  \begin{equation}
    \label{eq:minor-sum}
    \sum_{\substack{A,B \in \binom{I \cup J}{k} \\[0.5ex] A \cap B = I \cap J}} \det X_{AB}
    = 2^{p(I,J)} \det X_{IJ}.
  \end{equation}
\end{lemma}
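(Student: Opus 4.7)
The plan is to prove the lemma in two stages: first, use a Schur complement to reduce to the special case $I \cap J = \emptyset$; second, in that reduced case, regroup the determinants according to perfect matchings on $U = I \cup J$.

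For the reduction, write each pair as $A = K \cup A'$ and $B = K \cup B'$, where $K = I \cap J$ and $(A', B')$ is a partition of $U' := I' \cup J'$ into disjoint $p$-subsets. Reordering rows and columns of $X_{AB}$ so that the $K$-indices come first exhibits the block form $\bigl(\begin{smallmatrix} X_{KK} & X_{KB'} \\ X_{A'K} & X_{A'B'} \end{smallmatrix}\bigr)$, whose determinant equals $\det X_{KK} \cdot \det Y_{A'B'}$ by the Schur complement identity, where $Y := X_{U'U'} - X_{U'K} X_{KK}^{-1} X_{KU'}$ is symmetric. The reordering introduces a sign factor $\mathrm{sgn}(\pi_A)\mathrm{sgn}(\pi_B)$, and a brief inversion count shows that this equals $(-1)^{\sum_{k\in K} \#\{x \in U' : x < k\}}$, which does not depend on the particular partition $(A', B')$. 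This common factor therefore cancels from both sides of the desired identity, reducing the lemma to the analogous statement for $(I', J')$ and $Y$, with $I' \cap J' = \emptyset$. The invertibility assumption on $X_{KK}$ can be dropped afterwards by polynomial continuity in the entries of $X$.

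In the reduced case we may write $U = \{c_1 < c_2 < \cdots < c_{2p}\}$ with $I = \{c_1, c_3, \ldots, c_{2p-1}\}$ and $J = \{c_2, c_4, \ldots, c_{2p}\}$ (relabelling $I', J'$ as $I, J$ and $Y$ as $X$). Expanding $\det X_{AB} = \sum_{\tau: A \to B} \mathrm{sgn}(\sigma_\tau) \prod_a X_{a,\tau(a)}$ and using $X_{uv} = X_{vu}$, each triple $(A, B, \tau)$ encodes a perfect matching $M$ on $U$ together with an orientation of its $p$ edges, giving
\[ \text{LHS} = \sum_M \Bigl( \sum_{\text{orientations of } M} \mathrm{sgn}(\sigma_\tau) \Bigr) \prod_{\{u,v\} \in M} X_{uv}. \]
The critical step is a sign-change lemma: flipping the orientation of one edge $\{u,v\}$ (with $u < v$) multiplies $\mathrm{sgn}(\sigma_\tau)$ by $(-1)^s$, where $s$ is the number of elements of $U$ strictly between $u$ and $v$. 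This follows from Laplace expansion along rows $u$ and $v$: if $u, v$ sit at positions $\alpha+1, \beta+1$ in the original sorted $(A, B)$ and at positions $\alpha'+1, \beta'+1$ after the flip, the two signs differ by $(-1)^{\alpha+\beta+\alpha'+\beta'}$, and counting the elements of $U \setminus \{u,v\}$ lying below $u$ and below $v$ shows that this exponent has the same parity as $s$.

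Finally, classify each edge $\{c_m, c_n\} \in M$ as \emph{proper} if $m$ and $n$ have opposite parities (one endpoint in $I$, the other in $J$, so $s$ is even) and \emph{improper} otherwise (so $s$ is odd). If $M$ contains any improper edge, then flipping its orientation is a sign-reversing involution on the $2^p$ orientations of $M$, so the inner sum vanishes. If every edge of $M$ is proper, all $2^p$ orientations share the same sign, namely the sign of the ``natural'' orientation $A = I$, $B = J$, which equals $\mathrm{sgn}(\sigma)$ for the bijection $\sigma : I \to J$ encoded by $M$. Summing the contributions $2^p \, \mathrm{sgn}(\sigma) \prod_r X_{i_r, j_{\sigma(r)}}$ over $\sigma \in S_p$ reconstructs $2^p \det X_{IJ}$, as required. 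The main obstacle is the sign-change lemma: the Schur reduction and the proper/improper classification are routine, but the Laplace-expansion bookkeeping that delivers the factor $(-1)^s$ needs to be done carefully.
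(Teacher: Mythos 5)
Your proof is correct, and it takes a genuinely different route from the paper's. The paper works directly with the full sets $I,J$ (not separating off $M=I\cap J$): it expands $\det X_{AB}$ over permutations, defines ``linked pairs'' $(a',b')\in A'\times B'$ by tracing chains of factors $X_{a'r}X_{rs}\cdots X_{tb'}$ with $r,s,\dots,t\in M$, and establishes the sign rule by a bipartite-graph/crossing-number argument that must carefully account for how the $M$-nodes behave under the chain-reversal ``flip''. Your Schur-complement step eliminates $M$ at the outset: the common factor $\det X_{KK}$ and the partition-independent reordering sign both cancel, and the remaining combinatorics involves only single-edge perfect matchings on a $2p$-element set rather than chains. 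This is a real simplification — the sign-change lemma (flipping an edge $\{u,v\}$ changes the sign by $(-1)^s$, $s=\#\{w\in U: u<w<v\}$) is then a clean crossing-parity count with none of the paper's bookkeeping about ``passive diamond nodes cancelling between Step~1 and Step~2''. The two proofs share the same underlying sign phenomenon (friendly/hostile in the paper corresponds exactly to your proper/improper), but your reduction isolates it much more cleanly; the price is a small detour through generic invertibility of $X_{KK}$ and a Zariski-density argument, which is harmless since the identity is polynomial in the entries of $X$. I verified both the partition-independence of the reordering sign $(-1)^{\sum_{k\in K}\#\{x\in U':x<k\}}$ and the parity bookkeeping behind $(-1)^{\alpha+\beta+\alpha'+\beta'}=(-1)^s$; both check out, though in writing this up you should spell out the $[u<v]$ vs.\ $[v<u]$ case split in the Laplace positions since the off-by-one constants matter there.
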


Before proving Lemma~\ref{lem:minor-sum},
we will use it to finish the proof of the
main theorem. The two lemmas above show that the sum on the left-hand side of \eqref{eq:target-sum-again} equals
\begin{equation}
  \label{eq:plinka}
  \sum_{\substack{I,J \in \binom{[1,n]}{k} \\ I \leq J}}
  \!\!\! 2^{p(I,J)} \, \det  X_{IJ}
  \,\,\, = \!\!\!
  \sum_{\substack{I,J \in \binom{[1,n]}{k} \\ I \leq J}}
  \left(
    \sum_{\substack{A,B \in \binom{I \cup J}{k} \\[0.5ex] A \cap B = I \cap J}} \det X_{AB}
  \right),
\end{equation}
which in turn equals the sum on the right-hand side of
\eqref{eq:target-sum-again},
\begin{equation}
  \label{eq:plonka}
  \sum_{{A,B \in \binom{[1,n]}{k}}} \det X_{AB}.
\end{equation}
Thus \eqref{eq:target-sum-again} holds, and the theorem is proved.
The final step from \eqref{eq:plinka} to \eqref{eq:plonka} is justified
by the observation that any given pair $(A,B)$ of the type summed over in
\eqref{eq:plonka} appears
exactly once in the right-hand side of \eqref{eq:plinka},
namely for the sets $I$ and~$J$ defined as follows.
Let $M = A \cap B$, $A' = A \setminus M$, $B' = B \setminus M$,
and let $p \ge 0$ be the cardinality of the disjoint sets $A'$ and $B'$
(they are empty sets if $p=0$).
Then define $I'$ and $J'$ by enumerating the $2p$ elements of $A' \cup B'$
in the strictly interlacing order $I' < J'$, and let $I = M \cup I'$ and
$J = M \cup J'$.
Conversely, no other terms than these appear in the right hand side of
\eqref{eq:plinka}, and it is therefore indeed equal to \eqref{eq:plonka}.

\begin{proof}[Proof of Lemma~\ref{lem:minor-sum}]
  The sets $I \le J$ and $I' < J'$
  (as in Definition~\ref{def:interlace}), with
  \begin{equation*}
    \abs{I} = \abs{J} = k,
    \qquad
    \abs{I'} = \abs{J'} = p(I,J)=p,
  \end{equation*}
  will be fixed throughout the proof,
  and for convenience we also introduce
  $M = I \cap J$ and $U = I \cup J$, with $\abs{M}=k-p$
  and $\abs{U} = k+p$.
  We can assume that $p>0$, since the case $p=0$ is trivial;
  it occurs when $I=J$,
  and then both sides of \eqref{eq:minor-sum} simply equal $\det X_{II}$.

  The set $U \setminus M$ consists of the $2p$ numbers which
  belong alternatingly to~$I'$ and to~$J'$.
  The sum \eqref{eq:minor-sum} runs over all pairs of sets $(A,B)$
  obtained by splitting these $2p$ numbers into two disjoint
  $p$-sets $A'$ and $B'$ in an arbitrary way and letting
  $A = M \cup A'$ and $B = M \cup B'$.
  Write $\mathcal{Q}$ for this set;
  that is, $\mathcal{Q}$ denotes the set of pairs
  $(A,B) \in \binom{[1,n]}{k} \times \binom{[1,n]}{k}$
  such that $A \cup B = U$ and $A \cap B = M$.
  After expanding $\det X_{AB}$, we can then write the left-hand side of
  \eqref{eq:minor-sum} as
  \begin{equation}
    \label{eq:doublesum-Canada}
    \sum_{((A,B),\sigma) \in \mathcal{Q} \times \mathcal{S}_k}
    (-1)^{\sigma} X_{a_1 b_{\sigma(1)}} X_{a_2 b_{\sigma(2)}} \dots X_{a_k b_{\sigma(k)}},
  \end{equation}
  where $\mathcal{S}_k$ is the group of permutations of $\{ 1,2,\dots,k \}$,
  and $(-1)^{\sigma}$ denotes the sign of the permutation~$\sigma$.

  For each $((A,B),\sigma) \in \mathcal{Q} \times \mathcal{S}_k$,
  we let $A' = A \setminus M$ and $B' = B \setminus M$,
  and set up a ($\sigma$-dependent) bijection between $A'$ and $B'$ as follows:
  $a' \in A'$ is paired up with $b' \in B'$ if and only if   the product
  $X_{a_1 b_{\sigma(1)}} X_{a_2 b_{\sigma(2)}} \dots X_{a_k b_{\sigma(k)}}$
  contains either the factor $X_{a'b'}$ or a sequence of factors
  $X_{a'r}$, $X_{rs}$, \ldots, $X_{tb'}$ where $r,s,\dots,t \in M$.
  Let us say that $a'$ and $b'$ are \emph{linked} if they are paired
  up in this manner.
  A linked pair $(a',b') \in A' \times B'$ will be called
  \emph{hostile} if $(a',b')$ belongs to $I' \times I'$ or $J' \times J'$,
  and \emph{friendly} if $(a',b')$ belongs to $I' \times J'$ or $J' \times I'$.
  To each term in the sum \eqref{eq:doublesum-Canada}
  there will thus correspond $p$ such linked pairs,
  and what we will show is that the terms containing at least one hostile
  pair will cancel out,
  and that the remaining terms (with all friendly pairs) will add up to
  the right-hand side of \eqref{eq:minor-sum}.

  Next we define what we mean by \emph{flipping} a linked pair $(a',b')$.
  This means that, in the product
  $X_{a_1 b_{\sigma(1)}} X_{a_2 b_{\sigma(2)}} \dots X_{a_k b_{\sigma(k)}}$,
  those factors $X_{a'r} X_{rs} \dots X_{tb'}$ that link
  $a'$ to~$b'$ are replaced by $X_{b't} \dots X_{sr} X_{ra'}$,
  with all the indices in reversed order.
  (When the linking involves just a single factor $X_{a'b'}$,
  flipping means replacing it by $X_{b'a'}$.)
  Since the matrix~$X$ is symmetric, this does not change the value
  of the product, but it changes the way it is indexed.
  The number $a'$ which used to be in the first slot (in $X_{a'r}$)
  is now in the second slot (in $X_{ra'}$), and vice versa for~$b'$.
  The connecting indices $r,s,\dots,t \in M$ do not contribute to any
  change in the indexing sets, since, for example,
  the $r$ in $X_{a'r}$ is moved from the second slot to the first,
  while the other $r$ in $X_{rs}$ is moved from the first to the second.
  The new product (the result of the flipping) is therefore indexed by the sets
  \begin{equation*}
    \Bigl( A \setminus \{ a' \} \Bigr) \cup \{ b' \} =:
    \widetilde{A} = \{ \widetilde{a}_1 < \dots < \widetilde{a}_k \}
  \end{equation*}
  and
  \begin{equation*}
    \Bigl( B \setminus \{ b' \} \Bigr) \cup \{ a' \} =:
    \widetilde{B} = \{ \widetilde{b}_1 < \dots < \widetilde{b}_k \}
  \end{equation*}
  respectively, and after reordering the factors so that the first indices
  come in ascending order, it can be written
  \begin{equation*}
    X_{\widetilde{a}_1 \widetilde{b}_{\widetilde{\sigma}(1)}} X_{\widetilde{a}_2 \widetilde{b}_{\widetilde{\sigma}(2)}} \dots X_{\widetilde{a}_k \widetilde{b}_{\widetilde{\sigma}(k)}}
  \end{equation*}
  for some uniquely determined permutation $\widetilde{\sigma} \in \mathcal{S}_k$.
  Flipping a given pair thus takes $((A,B),\sigma)$
  to $((\widetilde{A},\widetilde{B}),\widetilde{\sigma})$.
  This operation is invertible, with inverse given by simply
  flipping the same pair again, now viewed as a pair
  $(b',a') \in ((\widetilde{A})',(\widetilde{B})')$
  linked via the indices $t,\dots,s,r$.
  Because of the symmetry of the matrix~$X$,
  the term in \eqref{eq:doublesum-Canada} corresponding to
  $((\widetilde{A},\widetilde{B}),\widetilde{\sigma})$
  is equal to the term corresponding to
  $((A,B),\sigma)$,
  except possibly for a difference in sign,
  depending on whether the signs of $\sigma$ and
  $\widetilde{\sigma}$ come out equal or not:
  \begin{equation*}
    (-1)^{\widetilde{\sigma}} X_{\widetilde{a}_1 \widetilde{b}_{\widetilde{\sigma}(1)}} X_{\widetilde{a}_2 \widetilde{b}_{\widetilde{\sigma}(2)}} \dots X_{\widetilde{a}_k \widetilde{b}_{\widetilde{\sigma}(k)}}
    = \pm (-1)^{\sigma} X_{a_1 b_{\sigma(1)}} X_{a_2 b_{\sigma(2)}} \dots X_{a_k b_{\sigma(k)}}.
  \end{equation*}
  We will show below that \emph{the permutation~$\widetilde{\sigma}$ has the
    same sign as~$\sigma$ when a friendly pair is flipped,
    and the opposite sign when a hostile pair is flipped}.
  Taking this for granted for the moment,
  divide the set $\mathcal{Q} \times \mathcal{S}_k$ into the two sets
  $(\mathcal{Q} \times \mathcal{S}_k)_{\mathrm{hostile}}$,
  consisting of those $((A,B),\sigma)$ for which at least one linked pair
  is hostile, and $(\mathcal{Q} \times \mathcal{S}_k)_{\mathrm{friendly}}$,
  consisting of those $((A,B),\sigma)$ for which all $p$ linked pairs
  are friendly.
  The mapping ``flip that out of all hostile pairs $(a',b')$ for which
  $\min(a',b')$ is smallest''
  is an involution on $(\mathcal{Q} \times \mathcal{S}_k)_{\mathrm{hostile}}$
  that pairs up each term with a partner term that is equal except for
  having the opposite sign (since it is a hostile pair that is flipped).
  Consequently these terms cancel out, and the contribution from
  $(\mathcal{Q} \times \mathcal{S}_k)_{\mathrm{hostile}}$ to \eqref{eq:doublesum-Canada}
  is zero. The sum therefore reduces to
  \begin{equation}
    \label{eq:doublesum-Canada-friendy}
    \sum_{((A,B),\sigma) \in (\mathcal{Q} \times \mathcal{S}_k)_{\mathrm{friendly}}}
    (-1)^{\sigma} X_{a_1 b_{\sigma(1)}} X_{a_2 b_{\sigma(2)}} \dots X_{a_k b_{\sigma(k)}}.
  \end{equation}
  Now equip the set $(\mathcal{Q} \times \mathcal{S}_k)_{\mathrm{friendly}}$
  with an equivalence relation;
  $((\widetilde{A},\widetilde{B}),\widetilde{\sigma})$
  and $((A,B),\sigma)$ are equivalent if one can go from one to another
  by flipping friendly pairs.
  Each equivalence class contains $2^p$ elements, since each of the
  $p$ friendly pairs can belong to either $I' \times J'$ or $J' \times I'$.
  Moreover, the terms corresponding to the elements in one equivalence class
  are all equal (including the sign, since only friendly pairs are flipped),
  and each class has a ``canonical'' representative with all linked pairs
  belonging to $I' \times J'$,
  \begin{equation*}
    \label{eq:canonical-term}
    (-1)^{\sigma} X_{i_1 j_{\sigma(1)}} X_{i_2 j_{\sigma(2)}} \dots X_{i_k j_{\sigma(k)}},
  \end{equation*}
  where the permutation $\sigma$ is uniquely determined by the equivalence class
  (and vice versa).
  Thus \eqref{eq:doublesum-Canada-friendy} becomes
  \begin{equation}
    \label{eq:doublesum-Canada-finished}
    2^p \sum_{\sigma \in \mathcal{S}_k}
    (-1)^{\sigma} X_{i_1 j_{\sigma(1)}} X_{i_2 j_{\sigma(2)}} \dots X_{i_k j_{\sigma(k)}}
    = 2^p \det X_{IJ},
  \end{equation}
  which is what we wanted to prove.

  To finish the proof, it now remains to demonstrate the rule that
  $\widetilde\sigma$ has the same (opposite) sign as $\sigma$
  when a friendly (hostile) pair is flipped.
  To this end, we will represent $((A,B),\sigma)$ with a bipartite
  graph, with the numbers in $U=A \cup B$ (in increasing order)
  as nodes both on the left and
  on the right, and the left nodes $a_i \in A$ connected by edges to
  the corresponding right nodes $b_{\sigma(i)} \in B$.
  The sign of $\sigma$ will then be equal to $(-1)^c$, where
  $c$ is the crossing number of the graph.
  As an aid in explaining the ideas we will use the following example
  with $U=[1,8]$,
  where the nodes in $M=A \cap B$ are marked with diamonds,
  and the nodes in $A'$ and $B'$ are marked with circles:
  \begin{center}
    \begin{psmatrix}[colsep=20mm,rowsep=1mm]
                   1 & [mnode=oval] 1 \\[0pt]
      [mnode=dia]  2 & [mnode=dia]  2 \\[0pt]
      [mnode=oval] 3 &              3 \\[0pt]
      [mnode=dia]  4 & [mnode=dia]  4 \\[0pt]
      [mnode=dia]  5 & [mnode=dia]  5 \\[0pt]
      [mnode=oval] 6 &              6 \\[0pt]
                   7 & [mnode=oval] 7 \\[0pt]
      [mnode=dia]  8 & [mnode=dia]  8 \\[1ex]
      $\begin{aligned}
        A &= \{ 2,3,4,5,6,8 \} \\ &= \{ 2,4,5,8 \} \cup \{ 3,6 \} \\ & = M \cup A'
      \end{aligned}$
      &
      $\begin{aligned}
        B &= \{ 1,2,4,5,7,8 \} \\ &= \{ 2,4,5,8 \} \cup \{ 1,7 \} \\ & = M \cup B'
      \end{aligned}$
    \end{psmatrix}
    \ncline{2,1}{8,2}
    \ncline{3,1}{4,2}
    \ncline{4,1}{2,2}
    \ncline{5,1}{5,2}
    \ncline{6,1}{1,2}
    \ncline{8,1}{7,2}
  \end{center}
  Clearly,
  $A' \cup B' = \{ 3,6 \} \cup \{ 1,7 \} = \{ 1,3,6,7 \} = \{ i_1' < j_1' < i_2' < j_2' \}$,
  so that
  $I'=\{ i_1',i_2' \} = \{ 1,6 \}$ and $J'=\{ j_1',j_2' \} = \{ 3,7 \}$.
  Consequently, $I = M \cup I' = \{ 1,2,4,5,6,8 \}$ and
  $J = M \cup J' = \{ 2,3,4,5,6,7 \}$.
  The chosen permutation is $\sigma(123456)=632415$,
  where the notation means that $\sigma(1)=6$, $\sigma(2)=3$, etc.;
  for example,
  the latter equality comes from the second smallest number $a_2$ in~$A$
  being connected to the third smallest number $b_3$ in~$B$.
  There are $9$~crossings, so $\sigma$ is an odd permutation, and
  this graph therefore represents the term
  $-X_{28}X_{34}X_{42}X_{55}X_{61}X_{87}$,
  appearing with a minus sign in the sum \eqref{eq:doublesum-Canada}.
  The linked pairs $(a',b') \in A' \times B'$ are $(6,1)$
  (directly linked) and $(3,7)$ (linked via $4,2,8 \in M$).
  Both pairs are hostile, since $(6,1) \in I' \times I'$
  and $(3,7) \in J' \times J'$.

  We will illustrate in detail what happens when the pair
  $(3,7)$ is flipped.
  The flip is effected by replacing the factors
  $X_{34}X_{42}X_{28}X_{87}$ by $X_{78}X_{82}X_{24}X_{43}$
  and sorting the resulting product so that the first indices come
  in ascending order; this gives
  $X_{24}X_{43}X_{55}X_{61}X_{78}X_{82}$.
  Thus
  $\widetilde{A} = \{ 2,4,5,6,7,8 \}$,
  $\widetilde{B} = \{ 1,2,3,4,5,8 \}$,
  and
  $\widetilde{\sigma}(123456)=435162$
  (an even permutation).
  In terms of the graph, the nodes that are involved in the flip are,
  on both sides,
  $\{ 2,3,4,7,8 \}$ (the two nodes in the pair being flipped,
  plus the nodes linking them),
  and the edges involved are $\{ 34, 42, 28, 87 \}$, which get changed into
  $\{ 43, 24, 82, 78 \}$. In other words, the flip corresponds
  to this \emph{active subgraph} being mirror reflected across the
  central vertical line.
  To understand how the process of reflection affects the crossing
  number, it can be broken down into two steps, as follows.

  On the left, node~$7$ is unoccupied to begin with,
  so we can change the edge $87$ to $77$.
  This frees node~$8$ on the left, so that we can change the edge $28$ to $88$,
  which frees node~$2$ on the left.
  (Think of this edge as a rubber band connected at one end
  to node $8$ on the right;
  we're disconnecting its other end from node $2$ on the left and
  sliding it past all the other nodes down to node $8$ on the left.
  Obviously the crossing number increases or decreases by one every time
  we slide past a node that has an edge attached to it.)
  Continuing like this, we get the result illustrated in Step~1 below;
  the edges changed are
  $87\to77$,
  $28\to88$,
  $42\to22$,
  $34\to44$.
  \begin{center}
    \mbox{}\hfill
    \begin{psmatrix}[colsep=20mm,rowsep=1mm]
                   1 & [mnode=oval] 1 \\[0pt]
      [mnode=dia]  2 & [mnode=dia]  2 \\[5pt]
                   3 &              3 \\[5pt]
      [mnode=dia]  4 & [mnode=dia]  4 \\[0pt]
      [mnode=dia]  5 & [mnode=dia]  5 \\[0pt]
      [mnode=oval] 6 &              6 \\[0pt]
      [mnode=oval] 7 & [mnode=oval] 7 \\[0pt]
      [mnode=dia]  8 & [mnode=dia]  8 \\[1ex]
      \psspan{2} \hfill Intermediate stage (after Step 1) \hfill
    \end{psmatrix}
    \ncline[linestyle=dashed]{2,1}{2,2}
    \ncline[linestyle=dashed]{8,1}{8,2}
    \ncline[linestyle=dashed]{4,1}{4,2}
    \ncline[linestyle=dashed]{7,1}{7,2}
    \ncline{5,1}{5,2}
    \ncline{6,1}{1,2}
    \nccurve[linestyle=dotted,angleA=-180,angleB=120]{->}{3,1}{4,1}
    \nccurve[linestyle=dotted,angleA=150,angleB=-130]{->}{4,1}{2,1}
    \nccurve[linestyle=dotted,angleA=-150,angleB=150]{->}{2,1}{8,1}
    \nccurve[linestyle=dotted,angleA=130,angleB=-120]{->}{8,1}{7,1}
    \hfill
    \begin{psmatrix}[colsep=20mm,rowsep=1mm]
                   1 & [mnode=oval] 1 \\[0pt]
      [mnode=dia]  2 & [mnode=dia]  2 \\[0pt]
                   3 & [mnode=oval] 3 \\[0pt]
      [mnode=dia]  4 & [mnode=dia]  4 \\[0pt]
      [mnode=dia]  5 & [mnode=dia]  5 \\[0pt]
      [mnode=oval] 6 &              6 \\[0pt]
      [mnode=oval] 7 &              7 \\[0pt]
      [mnode=dia]  8 & [mnode=dia]  8 \\[1ex]
      \psspan{2} \hfill Result of the flip (after Step 2) \hfill
    \end{psmatrix}
    \ncline[linestyle=dashed]{2,1}{4,2}
    \ncline[linestyle=dashed]{8,1}{2,2}
    \ncline[linestyle=dashed]{4,1}{3,2}
    \ncline[linestyle=dashed]{7,1}{8,2}
    \ncline{5,1}{5,2}
    \ncline{6,1}{1,2}
    \nccurve[linestyle=dotted,angleA=-50,angleB=60]{<-}{3,2}{4,2}
    \nccurve[linestyle=dotted,angleA=40,angleB=-50]{<-}{4,2}{2,2}
    \nccurve[linestyle=dotted,angleA=-30,angleB=30]{<-}{2,2}{8,2}
    \nccurve[linestyle=dotted,angleA=50,angleB=-40]{<-}{8,2}{7,2}
  \end{center}
  In Step~2, we work similarly on the right-hand side:
  node~$3$ is unoccupied to begin with, so we can change edge $44$ to $43$,
  and so on.
  The list of edge moves is $44\to43$, $22\to24$, $88\to82$, $77\to78$.
  (In the graph on the right we see that the crossing number after the flip is~$8$,
  verifying the claim that $\widetilde{\sigma}$ is an even permutation.)

  We need to keep track of the changes in the crossing number
  caused by sliding active edges past nodes that have edges attached to them.
  This is most easily done by following the dotted lines
  in the figures, and counting whether the nodes that are marked
  (with circles and diamonds) are passed an even or an odd number of times.
  However, since the active subgraph simply gets reflected, the crossings
  among its edges will be the same before and after the flip,
  so we need in fact only count how many times we pass a \emph{passive}
  marked node.
  (The passive nodes in the example are $\{ 1,5,6 \}$.)

  If a passive node belonging to $M$ is passed in Step~1, then it is
  passed the same number of times in Step~2 as well, since the nodes
  in $M$ are marked both on the left and on the right.
  Therefore they do not affect the parity of the crossing number either,
  and we can ignore the nodes marked with diamonds, and only look at
  the \emph{passive circled} nodes (all the nodes in $A'$ and $B'$
  except for the two active nodes that are being flipped).

  Passive nodes belonging to $A'$ are counted only in Step~1 and
  passive nodes in~$B'$ only in Step~2; they get counted an \emph{odd}
  number of times if they lie \emph{between} the two flipped nodes
  (like node~$6$ in the example, counted once), and an \emph{even}
  number of times otherwise (like node~$1$, never counted).
  Consequently, what determines whether the parity of the crossing number
  changes is the number of nodes \emph{between} the flipped ones that
  belong to $A' \cup B' = I' \cup J'$.
  And for a friendly pair, this number is even, while for a hostile pair,
  it is odd.

  This shows that the crossing number keeps its parity (so that
  $(-1)^{\sigma} = (-1)^{\widetilde\sigma}$) when a friendly pair
  is flipped, and the opposite when a hostile pair is flipped.
  The proof is finally complete.
\end{proof}

\section{Verification of the Lax pair for peakons}
\label{sec:multiplication}

The purpose of this appendix is to carefully verify that the Lax pair
formulation \eqref{eq:lax-x}--\eqref{eq:lax-t} of the Novikov equation
really is valid for the class of distributional solutions that we are
considering. This is not at all obvious, as should be clear from the
computations below.

\subsection{Preliminaries}
\label{sec:weak-notation}

We will need to be more precise regarding the notation here than in
the main text.
A word of warning right away:
our notation for derivatives here
will differ from that used in the rest of the paper
(where subscripts should be interpreted as distributional derivatives).

To begin with,
given $n$ smooth functions $x=x_k(t)$ such that $x_1(t) < \dots < x_n(t)$,
let $x_0(t) = -\infty$ and $x_{n+1}(t) = +\infty$,
and let $\Omega_k$ (for $k=0,\dots,n$)
denote the region $x_{k}(t) < x < x_{k+1}(t)$ in the $(x,t)$ plane.

Our computations will deal with a class that we denote $\piecewiseclass$,
consisting of piecewise smooth functions
$f(x,t)$ such that the restriction of $f$ to each region $\Omega_k$ is
(the restriction to $\Omega_k$ of) a smooth
function $f^{(k)}(x,t)$ defined on an open neighbourhood
of~$\overline{\Omega}_k$ (so that $f^{(k)}$ and its partial derivatives make
sense on the curves $x=x_k(t)$).
For each fixed $t$, the function $f(\cdot,t)$ defines a regular distribution
$T_f$ in the class~$\spaceDprime$, depending parametrically on $t$
(and written $T_f(t)$ where needed).
After having made clear exactly what is meant, we will mostly be less strict,
and write $f$ instead of $T_f$ for simplicity.

The values of $f$ on the curves $x=x_k(t)$ need not be defined; the
function defines the same distribution $T_f$ no matter what values are
assigned to $f(x_k(t),t)$.
But our assumptions imply that the left and right limits of $f$ exist,
and (suppressing the time dependence) they will be denoted by
$f(x_k^-) := f^{(k-1)}(x_k)$ and $f(x_k^+) := f^{(k)}(x_k)$,
respectively.
The jump and the average of $f$ at~$x_k$ will be denoted by
\begin{equation}
  \label{eq:jump-avg-notation}
  \jump{f(x_k)} := f(x_k^+) - f(x_k^-)
  \qquad\text{and}\qquad
  \avg{f(x_k)} := \frac{f(x_k^+) + f(x_k^-)}{2},
\end{equation}
respectively. They satisfy the product rules
\begin{equation}
  \label{eq:jump-avg-products}
  \jump{fg} = \avg{f} \jump{g} + \jump{f} \avg{g},
  \qquad
  \avg{fg} = \avg{f} \avg{g} + \smallfrac14 \jump{f} \jump{g}.
\end{equation}

We will use subscripts to denote partial derivatives in the classical
sense, so that (for example) $f_x$ denotes the piecewise smooth function
whose restriction to $\Omega_k$ is given by $\partial f^{(k)} / \partial x$
(and whose values at $x=x_k(t)$ are in general undefined).
On the other hand, $D_x$ will denote the distributional derivative,
which in addition picks up Dirac delta contributions from
jump discontinuities of~$f$ at the curves $x=x_k(t)$.
That is,
$D_x T_f = T_{f_x} + \sum_{k=1}^n \jump{f(x_k)} \delta_{x_k}$,
or, in less strict notation,
\begin{equation}
  \label{eq:Dx-jump}
  D_x f = f_x + \sum_{k=1}^n \jump{f(x_k)} \delta_{x_k}.
\end{equation}
The time derivative $D_t$ is defined as a limit in $\spaceDprime$,
\begin{equation}
  \label{eq:def-Dt}
  D_t T_f(t) = \lim_{h \to 0} \frac{T_f(t+h) - T_f(t)}{h},
\end{equation}
and it commutes with~$D_x$ by the continuity of~$D_x$ on~$\spaceDprime$.
For our class $\piecewiseclass$ of piecewise smooth functions, we have
$D_t T_f = T_{f_t} - \sum_{k=1}^n \dot x_k \jump{f(x_k)} \delta_{x_k}$,
or simply
\begin{equation}
  \label{eq:Dt-jump}
  D_t f = f_t - \sum_{k=1}^n \dot x_k \jump{f(x_k)} \delta_{x_k},
\end{equation}
where $\dot x_k = dx_k / dt$.
We also note that
$\frac{d}{dt} f(x_k^{\pm}(t),t) = f_x(x_k^{\pm}(t),t) \, \dot x_k(t) + f_t(x_k^{\pm}(t),t)$,
which gives
\begin{equation}
  \label{eq:chainrule-jump-avg}
  \begin{split}
    \smallfrac{d}{dt} \jump{f(x_k)} &= \jump{f_x(x_k)} \, \dot x_k + \jump{f_t(x_k)},
    \\
    \smallfrac{d}{dt} \avg{f(x_k)} &= \avg{f_x(x_k)} \, \dot x_k + \avg{f_t(x_k)}.
  \end{split}
\end{equation}

\subsection{The problem of multiplication}

If the function $f$ is continuous at $x=x_k$, then the Dirac delta at~$x_k$
can be multiplied by the corresponding
distribution~$T_f$
according to the well-known formula
\begin{equation}
  \label{eq:delta-product-classical}
  T_f \, \delta_x = f(x_k) \, \delta_{x_k}.
\end{equation}
But below we will have to consider this product for functions in
the class~$\piecewiseclass$
described above, where the value $f(x_k)$ is not defined.
It will turn out that in the present context, the right thing to do
is to use the average value of $f$ at the jump, and thus define
$T_f \, \delta_x := \avg{f(x_k)} \, \delta_{x_k}$.
However, since we want this to be a consequence of the analysis,
rather than an a priori assumption, we will, to begin with, just
assign a hypothetical value $f(x_k)$ and use that value in
\eqref{eq:delta-product-classical}. This assignment is justified in
the present context, as we will see below. However, we are not
claiming that this addresses any of the deeper issues; for example,
this assignment does not respect the product structure of piecewise
continuous functions.
See \cite[Ch.~5]{schwartz-distributions}
for more information about the structural
problems associated with any attempt to define a product of
distributions in $\spaceDprime$.

\subsection{Distributional Lax pair}

Peakon solutions
\begin{equation}
  \label{eq:weak-u}
  u(x,t) = \sum_{k=1}^n m_k(t) \, e^{-\abs{x-x_k(t)}}
\end{equation}
belong to the piecewise smooth class~$\piecewiseclass$.
They are continuous and satisfy
\begin{equation*}
  \begin{split}
    D_x u &= u_x = \sum_{k=1}^n m_k \, \sgn(x_k-x) \, e^{-\abs{x-x_k}},\\
    D_x^2 u &= D_x(u_x) = u_{xx} + \sum_{k=1}^n \jump{u_x(x_k)} \, \delta_{x_k}
    = u + \sum_{k=1}^n (-2m_k) \, \delta_{x_k},
  \end{split}
\end{equation*}
which implies
\begin{equation}
  \label{eq:weak-m}
  m := u - D_x^2 u = 2 \sum_{k=1}^n m_k \, \delta_{x_k}.
\end{equation}
The Lax pair \eqref{eq:lax-x}--\eqref{eq:lax-t} will
involve the functions $u$ and $D_x u$, as well as the purely
singular distribution~$m$. We will take
$\psi_1$, $\psi_2$, $\psi_3$ to be functions in $\piecewiseclass$,
and separate the regular (function) part from the singular
(Dirac delta) part. The formulation obtained in this way reads
\begin{equation}
  \label{eq:weak-lax}
  D_x \Psi = \widehat{L} \Psi,
  \qquad
  D_t \Psi = \widehat{A} \Psi,
\end{equation}
where $\Psi = (\psi_1,\psi_2,\psi_3)^t$,
\begin{equation}
  \label{eq:weak-lax-x}
  \widehat{L} = L + 2z \left( \sum_{k=1}^n m_k \, \delta_{x_k} \right) N,
  \quad
  L =
  \begin{pmatrix}
    0 & 0 & 1 \\
    0 & 0 & 0 \\
    1 & 0 & 0
  \end{pmatrix},
  \quad
  N =
  \begin{pmatrix}
    0 & 1 & 0 \\
    0 & 0 & 1 \\
    0 & 0 & 0
  \end{pmatrix},
\end{equation}
and
\begin{equation}
  \label{eq:weak-lax-t}
  \widehat{A} = A - 2z \left( \sum_{k=1}^n m_k \, u(x_k)^2 \delta_{x_k} \right) N,
  \quad
  A =
  \begin{pmatrix}
    -u u_x & u_x/z & u_x^2 \\
    u /z & -1/z^2 & -u_x/z \\
    -u^2 & u/z & u u_x
  \end{pmatrix}.
\end{equation}
Note that \eqref{eq:weak-lax} involves multiplying
$N \Psi = (\psi_2,\psi_3,0)$ by $\delta_{x_k}$,
and some value $\psi_2(x_k)$ must be assigned
in order for this to be well-defined
(we will soon see that $\psi_3$ must be continuous and therefore it is
only $\psi_2$ that presents any problems).

\begin{theorem}
  \label{thm:distributionalLaxpair}
  Provided that the product $m \psi_2$ is defined using the
  average value $\psi_2(x_k) := \avg{\psi_2(x_k)}$ at the jumps,
  \begin{equation}
    \label{eq:psi-two-use-average}
    m \psi_2 := 2 \sum_{k=1}^n m_k \, \avg{\psi_2(x_k)} \, \delta_{x_k},
  \end{equation}
  the following statement holds.
  With $u$ and $m$ given by \eqref{eq:weak-u}--\eqref{eq:weak-m},
  and with $\Psi \in \piecewiseclass$,
  the Lax pair \eqref{eq:weak-lax}--\eqref{eq:weak-lax-t}
  satisfies the compatibility condition $D_t D_x \Psi = D_x D_t \Psi$
  if and only if the peakon ODEs \eqref{eq:novikov-ode} are satisfied:
  $\dot x_k = u(x_k)^2$ and
  $\dot m_k = -m_k \, u(x_k) \avg{u_x(x_k)}$.
\end{theorem}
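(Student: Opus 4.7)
The plan is to compute $D_tD_x\Psi$ and $D_xD_t\Psi$ side-by-side as distributions in $\spaceDprime$ and to match the result on each open strip $\Omega_k$ and at each of the singular points $x_k(t)$ separately. First I would unpack the spatial Lax equation $D_x\Psi = \widehat L\Psi$ using $D_x\Psi = \Psi_x + \sum_k \jump{\Psi(x_k)}\delta_{x_k}$ together with the prescription \eqref{eq:psi-two-use-average}: the regular part gives $\Psi_x = L\Psi$ smoothly on each $\Omega_k$, while matching $\delta_{x_k}$ coefficients yields the jump relation
\begin{equation*}
\jump{\Psi(x_k)} = 2zm_k\,N\avg{\Psi(x_k)}.
\end{equation*}
Since the image of $N$ has zero third component, this forces $\jump{\psi_3(x_k)}=0$ (continuity of $\psi_3$), while the other components recover the jump matrix \eqref{eq:jump-matrix}. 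Applying the same unpacking to $D_t\Psi = \widehat A\Psi$ produces $\Psi_t = A\Psi$ on each $\Omega_k$ and the singular condition $-\dot x_k\jump{\Psi(x_k)} = -2zm_k u(x_k)^2\,N\avg{\Psi(x_k)}$; substituting the spatial jump relation into this at once gives the position ODE $\dot x_k = u(x_k)^2$ (for generic $z$).

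To extract the momentum ODE I would time-differentiate the spatial jump relation using the chain rules \eqref{eq:chainrule-jump-avg},
\begin{equation*}
\jump{\Psi_x(x_k)}\,\dot x_k + \jump{\Psi_t(x_k)}
= 2z\dot m_k\,N\avg{\Psi(x_k)} + 2zm_k\,N\bigl(\avg{\Psi_x(x_k)}\,\dot x_k + \avg{\Psi_t(x_k)}\bigr),
\end{equation*}
then replace $\Psi_x$ by $L\Psi$ and $\Psi_t$ by $A\Psi$ on each side of $x_k$ and expand the resulting products via \eqref{eq:jump-avg-products}. After plugging in $\jump{\Psi(x_k)} = 2zm_k N\avg{\Psi(x_k)}$, $\dot x_k = u(x_k)^2$ and $\jump{u_x(x_k)}=-2m_k$, and using the nilpotency $N^3=0$, the identity collapses to a single scalar equation pinning down $\dot m_k = -m_k\,u(x_k)\avg{u_x(x_k)}$. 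Running this argument in reverse gives the converse: once the peakon ODEs hold, the smooth zero-curvature identity $L_t - A_x + [L,A]=0$ on each $\Omega_k$ (which is Hone--Wang's verification for smooth $u$, automatically applicable since $m$ vanishes between the $x_k$) combined with the derived jump evolutions yields $D_tD_x\Psi = D_xD_t\Psi$.

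The main obstacle is the algebraic bookkeeping in the momentum step: $\jump{A(x_k)}$ contains the entry $\jump{u_x^2} = 2\avg{u_x}\jump{u_x}$ and $\avg{A(x_k)}$ contains $\avg{u_x^2} = \avg{u_x}^2 + \tfrac14\jump{u_x}^2$, so a flurry of terms quadratic in $\jump{u_x}=-2m_k$ is generated, and the anomalous $\tfrac14\jump{A}\jump{\Psi}$ contribution from the average product rule must be matched against $m_k^2$-type contributions of the form $N\jump{A}N\avg{\Psi}$. Precisely this matching forces the average prescription \eqref{eq:psi-two-use-average}: any other assignment of $\psi_2(x_k)$ leaves an unbalanced $\jump{\cdot}\jump{\cdot}$ remainder, destroying the derivation. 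Once these cancellations are verified (most cleanly by reading off the second scalar component of the resulting vector equation, since the first component is governed by $N$ and the third is automatic), the ODE for $\dot m_k$ emerges uniquely and the equivalence is proved.
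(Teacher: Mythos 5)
Your proposal follows essentially the same route as the paper: unpack both Lax equations into regular parts on the strips $\Omega_k$ and $\delta_{x_k}$ coefficients at the jumps, read off $\psi_3$'s continuity, the jump relation, and $\dot x_k = u(x_k)^2$, then obtain the momentum ODE from the $\delta_{x_k}$-level compatibility; your time-differentiation of the spatial jump relation, after substituting $\Psi_x = L\Psi$ and $\Psi_t = A\Psi$, reproduces exactly the paper's equation \eqref{eq:gazonk}. One small technical nit: the nilpotency fact that kills the $\smallfrac14\jump{A}\jump{\Psi}$ term coming from $\avg{A\Psi(x_k)}$ is $N\jump{A(x_k)}N = 0$ (since $\jump{\Psi(x_k)}$ is already in the image of $N$), not $N^3 = 0$; and the paper's way of \emph{forcing} the average prescription is slightly different from your framing — it deliberately leaves $\psi_2(x_k)$ unassigned through \eqref{eq:gazonk3} and observes that the residual $\bigl(\Psi(x_k) - \avg{\Psi(x_k)}\bigr)$ terms must vanish — but since the theorem already stipulates the average, your argument, which bakes it in from the start, is a valid proof of the stated equivalence.
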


\begin{proof}
  For simplicity, we will write just $\sum$ instead of $\sum_{k=1}^n$.
  Identifying coefficients of $\delta_{x_k}$ in the two Lax equations
  \eqref{eq:weak-lax} immediately gives $\jump{\Psi(x_k)} = 2z \,m_k N \Psi(x_k)$
  and $-\dot x_k \jump{\Psi(x_k)} = -2z \,m_k u(x_k)^2 N \Psi(x_k)$, respectively.
  Thus, $[\psi_3(x_k)]=0$ (in other words,
  $\psi_3$ is continuous) and $\dot x_k = u(x_k)^2$.
  Next we compute the derivatives of \eqref{eq:weak-lax}:
  \begin{equation*}
    \begin{split}
      D_t(D_x \Psi) &=
      D_t(L \Psi + 2z \left( \sum m_k \, \delta_{x_k} \right) N \Psi) \\
      &= L (\widehat{A} \Psi) + 2zN \sum \smallfrac{d}{dt} \bigl( m_k \Psi(x_k) \bigr) \, \delta_{x_k} - 2zN \sum m_k \Psi(x_k) \dot x_k \delta'_{x_k},
      \\
      D_x(D_t \Psi) &=
      D_x(A\Psi - 2z \left( \sum m_k \, u(x_k)^2 \delta_{x_k} \right) N \Psi) \\
      &= (A\Psi)_x + \sum \jump{A\Psi(x_k)} \delta_{x_k} - 2zN \sum m_k \Psi(x_k) u(x_k)^2 \delta'_{x_k}.
    \end{split}
  \end{equation*}
  The regular part of \eqref{eq:weak-lax} gives $\Psi_x = L\Psi$,
  so that $(A\Psi)_x=A_x \Psi + AL\Psi$, and it is easily verified
  that $LA=A_x+AL$ holds identically (since $u_{xx}=u$).
  This implies that the regular parts of the two expressions above are equal,
  and the terms involving $\delta'_{x_k}$ are also equal since
  $\dot x_k = u(x_k)^2$.
  Therefore the compatibility condition
  $D_t(D_x \Psi) = D_x(D_t \Psi)$
  reduces to an equality between the coefficients of~$\delta_{x_k}$,
  \begin{equation}
    \label{eq:gazonk}
    -2z \, m_k u(x_k)^2 LN \Psi(x_k) + 2zN \smallfrac{d}{dt} \bigl( m_k \Psi(x_k) \bigr)
    = \jump{A\Psi(x_k)}.
  \end{equation}
  Using the product rule \eqref{eq:jump-avg-products},
  the expression for $\jump{\Psi(x_k)}$ above, and $\jump{u_x(x_k)} = -2m_k$,
  we find that the right-hand side of \eqref{eq:gazonk} equals
  \begin{multline}
    \avg{A(x_k)} \, 2z\,m_k N \Psi(x_k) + \jump{A(x_k)} \avg{\Psi(x_k)}
    =
    \\
    2z\,m_k \left(
      \begin{smallmatrix}
        0 & -u \avg{u_x} & \avg{u_x}/z \\
        0 & u/z & -1/z^2 \\
        0 & -u^2 & u/z
      \end{smallmatrix}
    \right)_{\!\! x_k}
    \!\!\! \Psi(x_k)
    +
    2 m_k \left(
      \begin{smallmatrix}
        u & -1/z & -2\avg{u_x} \\
        0 & 0 & 1/z \\
        0 & 0 & -u
      \end{smallmatrix}
    \right)_{\!\! x_k}
    \!\!\! \avg{\Psi(x_k)}.
  \end{multline}
  The (3,2) entry $-u^2$ in the matrix in the first term
  will cancel against the whole first term on the left-hand side of
  \eqref{eq:gazonk}, since the only nonzero entry of $LN$ is $(LN)_{32}=1$.
  Thus \eqref{eq:gazonk} is equivalent to
  \begin{multline}
    \label{eq:gazonk2}
    \dot m_k \, N \Psi(x_k) + m_k \, N \smallfrac{d}{dt} \Psi(x_k) =
    \\
    m_k \left(
      \begin{smallmatrix}
        0 & -u \avg{u_x} & \avg{u_x}/z \\
        0 & u/z & -1/z^2 \\
        0 & 0 & u/z
      \end{smallmatrix}
    \right)_{\!\! x_k}
    \!\!\! \Psi(x_k)
    + m_k \left(
      \begin{smallmatrix}
        u/z & -1/z^2 & -2\avg{u_x}/z \\
        0 & 0 & 1/z^2 \\
        0 & 0 &- u/z
      \end{smallmatrix}
    \right)_{\!\! x_k}
    \!\!\! \avg{\Psi(x_k)}.
  \end{multline}
  To make it clear how the assumption \eqref{eq:psi-two-use-average}
  enters the proof, we want to avoid assigning a value to $\psi_2(x_k)$
  for as long as possible.
  Therefore we can't compute $\frac{d}{dt}\Psi(x_k)$ quite yet.
  But $\avg{\Psi(x_k)}$ is well-defined, and its time derivative can
  be computed using $\Psi_x=L\Psi$ and $\Psi_t=A\Psi$ in
 \eqref{eq:chainrule-jump-avg}:
  \begin{equation*}
    \begin{split}
      N \smallfrac{d}{dt} \avg{\Psi(x_k)} &=
      N \avg{L\Psi(x_k)} \, \dot x_k + N \avg{A\Psi(x_k)}
      \\
      &= N \Bigl( L u(x_k)^2 + \avg{A(x_k)} \Bigr) \avg{\Psi(x_k)} + N \smallfrac14 \jump{A(x_k)} \jump{\Psi(x_k)}
      \\
      &=
      \left(
        \begin{smallmatrix}
          u/z & -1/z^2 & -\avg{u_x}/z \\ 0 & u/z & u \avg{u_x} \\ 0 & 0 & 0
        \end{smallmatrix}
      \right)_{\!\! x_k}
      \avg{\Psi(x_k)}
      + \smallfrac14 \underbrace{N \jump{A(x_k)}N}_{=0} 2z\,m_k \Psi(x_k).
    \end{split}
  \end{equation*}
  A bit of manipulation using this result, as well as
  $\avg{\psi_3}(x_k) = \psi_3(x_k)$, shows that the
  compatibility condition \eqref{eq:gazonk2} can be written as
  \begin{multline}
    \label{eq:gazonk3}
    m_k N \smallfrac{d}{dt} \Bigl( \Psi(x_k) - \avg{\Psi(x_k)} \Bigr)
    + \Bigl( \dot m_k + m_k u(x_k) \avg{u_x(x_k)} \Bigr) N \Psi(x)
    \\
    = m_k
    \left(
      \begin{smallmatrix}
        0 & 0 & 0\\ 0 & u/z &0 \\ 0 & 0 & 0
      \end{smallmatrix}
    \right)_{\!\! x_k}
    \Bigl( \Psi(x_k) - \avg{\Psi(x_k)} \Bigr)
  \end{multline}
  The third row is zero, and the first two rows say that
  \begin{equation*}
    \begin{split}
      \Bigl( \dot m_k + m_k u(x_k) \avg{u_x(x_k)} \Bigr) \psi_2(x_k)
      &= -m_k \smallfrac{d}{dt} \Bigl( \psi_2(x_k) - \avg{\psi_2(x_k)} \Bigr),
      \\
      \Bigl( \dot m_k + m_k u(x_k) \avg{u_x(x_k)} \Bigr) \psi_3(x_k)
      &= \smallfrac{1}{z} m_k \, u(x_k) \Bigl( \psi_2(x_k) - \avg{\psi_2(x_k)} \Bigr).
    \end{split}
  \end{equation*}
  At this point we choose to assign $\psi_2(x_k) := \avg{\psi_2}(x_k)$,
  and then it is clear that \eqref{eq:gazonk3} is satisfied if and only if
  \begin{equation*}
    \dot m_k = - m_k u(x_k) \avg{u_x(x_k)}.
  \end{equation*}
\end{proof}

\section*{Acknowledgements}

HL is supported by the Swedish Research Council (Veten\-skaps\-r{\aa}det),
and JS by the National Sciences and Engineering Research Council of Canada (NSERC).

HL and JS would like to acknowledge the hospitality of
the Mathematical Research and Conference Center, B\k{e}dlewo, Poland,
%Matematyczny O\'srodek Konferencyjny
and the Department of Mathematics and Statistics, University of Saskatchewan.

\bibliographystyle{plain}
\bibliography{novikov_peakons}

\end{document}